\documentclass{llncs}
\usepackage{boxedminipage}     
\usepackage{proof}
\usepackage{amssymb} 
\usepackage{amsmath} 
\usepackage{xspace}
\usepackage{times}

\newcommand{\citet}[1]{\cite{#1}}
\usepackage{xspace}
\usepackage{graphicx}

\usepackage{amssymb}
\usepackage{amsfonts}
\usepackage{proof}
\usepackage{epsfig}
\usepackage{verbatim}





\newcommand{\SB}[1]{[\![#1]\!]}


\newcommand{\eq}{\approx}


\newcommand{\hyp}[2][]{\infer[#1]{#2}{}}






\newcommand{\labelSec}[1]{\label{sec:#1}}
\newcommand{\refSec}[1]{Section~\ref{sec:#1}}
\newcommand{\labelFig}[1]{\label{fig:#1}}
\newcommand{\refFig}[1]{Figure~\ref{fig:#1}}

\newcommand{\labelThm}[1]{\label{thm:#1}}
\newcommand{\refThm}[1]{Theorem~\ref{thm:#1}}
\newcommand{\labelLem}[1]{\label{lem:#1}}
\newcommand{\refLem}[1]{Lemma~\ref{lem:#1}}
\newcommand{\labelProp}[1]{\label{prop:#1}}


\newcommand{\To}{\Rightarrow}
\newcommand{\eval}{\Downarrow}

\newcommand{\PP}{\mathcal{P}}

\newcommand{\nd}{\vdash}



\newcommand{\DD}{\mathcal{D}}

\newcommand{\evalu}[4]{#1;#2 \vdash #3 \Rightarrow^\mathtt{U} #4}
\renewcommand{\eval}[3]{#1 \vdash #2 \Rightarrow #3}

\newcommand{\pto}{\rightharpoonup}
\newcommand{\subty}{\mathrel{{<}{:}}}
\newcommand{\supty}{\mathrel{{>}{:}}}

\newcommand{\muXQ}{$\mu$\textsc{xq}\xspace}
\newcommand{\Flux}{\textsc{Flux}\xspace}

\newcommand{\vdasharrow}{\mathrel{\vdash\!\!\!\!\text{\tiny$^\blacktriangleright$}}}
\newcommand{\tylab}[3]{#1 :: #2 \To #3}
\newcommand{\wfprog}[2]{#1 \vdash #2 ~ \mathtt{prog}}
\newcommand{\wfin}[5]{#1 \vdash \bar{#2}~\kin~#3 \to #4 : #5}
\newcommand{\wfinalg}[5]{#1 \vdasharrow \bar{#2}~\kin~#3 \to #4 : #5}
\newcommand{\wfiter}[4]{#1 \vdash_{\!\!\kiter} \{#2\}~ #3~ \{#4\}}
\newcommand{\wfiteralg}[4]{#1 \vdasharrow_{\!\!\kiter} \{#2\}~ #3~ \{#4\}}
\newcommand{\wfupd}[5]{#1 \vdash^{\!\!{ \tiny #2}} \{#3\}~#4~\{#5\}}
\newcommand{\wfupdalg}[5]{#1 \vdasharrow^{\!\!{\tiny  #2}} \{#3\}~#4~\{#5\}}

\newcommand{\wf}[3]{#1 \vdash #2:#3}
\newcommand{\wfalg}[3]{#1 \vdasharrow #2:#3}

\newcommand{\Atoms}{\mathit{Atom}}
\newcommand{\Stmt}{\mathit{Stmt}}

\newcommand{\Lab}{\mathit{Lab}}
\newcommand{\Test}{\mathit{Test}}
\newcommand{\Expr}{\mathit{Expr}}
\newcommand{\FSym}{\mathit{FSym}}
\newcommand{\PSym}{\mathit{PSym}}
\newcommand{\Prog}{\mathit{Prog}}
\newcommand{\Dir}{\mathit{Dir}}
\newcommand{\Types}{\mathit{Type}}
\newcommand{\Bool}{\mathit{Bool}}
\newcommand{\Val}{\mathit{Val}}
\newcommand{\Var}{\mathit{Var}}
\newcommand{\TVar}{\mathit{TVar}}

\newcommand{\emptyseq}{\texttt{()}}
\newcommand{\kw}[1]{\mathtt{#1}}

\newcommand{\kchildren}{\kw{children}}
\newcommand{\kchild}{\kw{child}}

\newcommand{\ksnapshot}{\kw{snapshot}}
\newcommand{\snapshot}[2]{\ksnapshot~#1~\kin~#2}
\newcommand{\kbool}{\kw{bool}}
\newcommand{\kstring}{\kw{string}}
\newcommand{\kreturn}{\kw{return}}
\newcommand{\ktrue}{\kw{true}}
\newcommand{\kfalse}{\kw{false}}
\newcommand{\kdeclare}{\kw{declare}}
\newcommand{\kfunction}{\kw{function}}
\newcommand{\kprocedure}{\kw{procedure}}
\newcommand{\kint}{\kw{int}}

\newcommand{\kleft}{\kw{left}}
\newcommand{\kright}{\kw{right}}

\newcommand{\kinsert}{\kw{insert}}

\newcommand{\kdelete}{\kw{delete}}

\newcommand{\krename}{\kw{rename}}
\newcommand{\kif}{\kw{if}}
\newcommand{\kthen}{\kw{then}}
\newcommand{\kelse}{\kw{else}}
\newcommand{\klet}{\kw{let}}
\newcommand{\kin}{\kw{in}}

\newcommand{\kskip}{\kw{skip}}

\newcommand{\kfor}{\kw{for}}

\newcommand{\kiter}{\kw{iter}}

\newcommand{\ifthenelse}[3]{\kif~#1~\kthen~#2~\kelse~#3}

\newcommand{\declarefunction}[3]{\kdeclare~\kfunction~ #1~\{#2\}\text{\texttt{;}}~#3}
\newcommand{\declareprocedure}[3]{\kdeclare~\kprocedure~ #1~\{#2\}\text{\texttt{;}}~#3}
\newcommand{\letin}[2]{\klet~ #1~\kin~#2}

\newcommand{\forreturn}[2]{\kfor~#1~\kreturn~#2}

\renewcommand{\vec}[1]{\overline{#1}}

\newcommand{\ichildren}{\mathit{children}}
\newcommand{\procdecl}[3]{#1:#2\To#3}
\newcommand{\funcdecl}[2]{#1:#2}
\newcommand{\typedecl}[2]{\mathtt{type}~#1=#2}
\newcommand{\query}[2]{\mathtt{query}~#1:#2}
\newcommand{\update}[3]{\mathtt{update}~#1:#2\To #3}
\newcommand{\leaf}{\mathit{leaf}}
\newcommand{\leaves}{\mathit{leaves}}
\newcommand{\leafupd}{\mathit{leafupd}}
\newcommand{\node}{\mathit{node}}
\newcommand{\Tree}{\mathit{Tree}}
\newcommand{\tree}{\mathit{tree}}
\newcommand{\dom}{\mathrm{dom}}


\pagestyle{plain}

\begin{document}

\title{Regular Expression Subtyping for XML Query and Update
  Languages}

\author{James Cheney}
\institute{University of Edinburgh}
\maketitle

\begin{abstract}
XML database query languages such as XQuery employ regular	
expression types with structural subtyping.  Subtyping systems
typically have two presentations, which should be equivalent: a
declarative version in which the subsumption rule may be used
anywhere, and an algorithmic version in which the use of subsumption
is limited in order to make typechecking syntax-directed and
decidable.  However, the XQuery standard type system circumvents
this issue by using imprecise typing rules for iteration
constructs and defining only algorithmic typechecking, and another
extant proposal provides more precise types for iteration constructs
but ignores subtyping.  In this paper, we consider a core
XQuery-like language with a subsumption rule and prove the
completeness of algorithmic typechecking; this is straightforward
for XQuery proper but requires some care in the presence of more
precise iteration typing disciplines.  We extend this result to an
XML update language we have introduced in earlier work.
\end{abstract}

\section{Introduction}\labelSec{intro}

The Extensible Markup Language (XML) is a World Wide Web Consortium
(W3C) standard for tree-structured data.  Regular expression types for
XML~\cite{hosoya05toplas} have been studied extensively in XML
processing languages such as XDuce~\cite{hosoya03toit} and
CDuce~\cite{benzaken03icfp}, as well as projects to extend
general-purpose programming languages with XML features such as
Xtatic~\cite{DBLP:conf/planX/GapeyevGP06} and
OCamlDuce~\cite{frisch06icfp}.

Several other W3C standards, such as XQuery, address the use of XML as
a general format for representing data in databases.  Static
typechecking is important in XML database applications because type
information is useful for optimizing queries and avoiding expensive
run-time checks and revalidation.  The XQuery
standard~\cite{xquery-semantics-w3c-20070123} provides for structural
subtyping based on regular expression types.

However, XQuery's type system is imprecise in some situations
involving iteration (\verb|for|-expressions).  In particular, if the
variable $\$x$ has type\footnote{We use the notation for regular
  expression types from Hosoya, Vouillon and
  Pierce~\cite{hosoya05toplas} in preference to the more verbose
  XQuery or XML Schema syntaxes.}  $a[b[]^*,c[]^?]$, then the XQuery
expression
\begin{verbatim}
for $y in $x/* return $y
\end{verbatim}
has type $(b[]|c[])^*$ in XQuery, but in fact the result will always
match the regular expression type $b[]^*,c[]^?$.  The reason for this
inaccuracy is that XQuery's type system typechecks a \verb|for| loop
by converting the type of the body of the expression (here, $\$x/a$
with type $b[]^*,c[]^?$) to the ``factored'' form
$(\alpha_1|\ldots|\alpha_n)^q$, where $q$ is a quantifier such as $?$,
$+$, or $*$ and each $\alpha_i$ is an atomic type (i.e. a data type
such as $\kstring$ or single element type $a[\tau]$).

More precise type systems have been contemplated for XQuery-like
languages, including a precursor to XQuery designed by Fernandez,
Sim\'eon, and Wadler~\cite{fernandez01icdt}.  More recently, Colazzo
et al.~\cite{colazzo06jfp} have introduced a core XQuery language
called \muXQ, equipped with a regular expression-based type system
that provides more precise types for iterations using techniques
similar to those in~\cite{fernandez01icdt}.  In \muXQ, the above
expression can be assigned the more accurate type $b[]^*,c[]^?$.

Accurate typing for iteration constructs is especially important in
typechecking XML updates.  We are developing a statically-typed update
language called \Flux~\cite{cheney07planx} in which ideas from \muXQ
are essential for typechecking updates involving iteration.  Using
XQuery-style factoring for iteration in \Flux would make it impossible
to typecheck updates that modify data without modifying the overall
schema of the database---a very common case.  For example, using
XQuery-style factoring for iteration in \Flux, we would not be able to
verify statically that given a database of type
$a[b[\kstring]^*,c[]^?]$, an update that modifies the text inside
some of the $b$ elements produces an output that is still of type
$a[b[\kstring]^*,c[]^?]$, rather than
$a[(b[\kstring]|c[])^*]$.

One question left unresolved in previous work on both \muXQ and \Flux
is the relationship between declarative and algorithmic presentations
of the type system (in the terminology of ~\cite[Ch.
15--16]{pierce02types}).  Declarative derivations permit arbitrary
uses of the \emph{subsumption rule}:
\[
\infer{\wf{\Gamma}{e}{\tau'}}{\wf{\Gamma}{e}{\tau} & \tau \subty \tau'}
\]
whereas algorithmic derivations limit the use of this rule in order to
ensure that typechecking is syntax-directed and decidable.  The
declarative and algorithmic presentations of a system should agree.
If they do, then declarative typechecking is decidable; if they
disagree, then the algorithmic system is incomplete relative to the
high-level declarative system: it rejects programs that should
typecheck.

The XQuery standard circumvented this issue by directly defining
typechecking to be algorithmic.  In contrast, neither subsumption nor
subtyping were considered in \muXQ, in part because subtyping
interacts badly with \muXQ's ``path correctness'' analysis (as argued
by Colazzo et al.~\cite{colazzo06jfp}, Section 4.4).  Subsumption was
considered in our initial work on \Flux~\cite{cheney07planx}, but we
were initially unable to establish that declarative typechecking was
decidable, even in the absence of recursion in types, queries, or updates.

In this paper we consider declarative typechecking for \muXQ and \Flux
extended with recursive types, recursive functions, and recursive
update procedures.  To establish that typechecking remains decidable,
it suffices (following Pierce~\cite[Ch.  16]{pierce02types}) to define
an algorithmic typechecking judgment and prove its completeness; that
is, that declarative derivations can always be normalized to
algorithmic derivations.  For XQuery proper, this appears
straightforward because of the use of factoring when typechecking
iterations.  However, for \muXQ's more precise iteration type
discipline, completeness of algorithmic typechecking does not follow
by the ``obvious'' structural induction.  Instead, we must establish a
stronger property by considering the structure of regular expression
types.  We also extend these results to \Flux.

The structure of the rest of the paper is as follows.
\refSec{background} reviews regular expression types and subtyping.
\refSec{query-language} introduces the core language \muXQ, discusses
examples highlighting the difficulties involving subtyping in \muXQ,
and proves decidability of declarative typechecking.  We also review
the \Flux core update language in \refSec{update-language}, discuss
examples, and extend the proof of decidability of declarative
typechecking to \Flux.  Sections
\ref{sec:related-and-future-work}--\ref{sec:concl} sketch related and
future work and conclude.

\section{Background}\labelSec{background}

For the purposes of this paper, \emph{XML values} are trees built up
out of booleans $b \in \Bool = \{\ktrue,\kfalse\}$, strings $w \in
\Sigma^*$ over some alphabet $\Sigma$, and labels $l,m,n \in \Lab$,
according to the following syntax:
\[
\bar{v} ::= b \mid w \mid n[v] 
\qquad v ::= \bar{v},v \mid \emptyseq
\]
Values include \emph{tree values} $\bar{v}\in \Tree$ and
\emph{forest values} $v \in \Val$.  We write $v,v'$ for the
result of appending two forest values (considered as lists).

We consider a regular expression type system with structural
subtyping, similar to those considered in several transformation and
query languages for XML
\cite{hosoya05toplas,colazzo06jfp,fernandez01icdt}.  The syntax of
types and type environments is as follows.
\[\begin{array}{lrcl}
  \text{Atomic types} & \alpha &::=& \kbool \mid \kstring \mid n[\tau]
  \\
  \text{Sequence types} & \tau &::=& \alpha \mid \emptyseq \mid \tau |\tau' \mid \tau,\tau' \mid \tau^* \mid X\\
\text{Type definitions} &  \tau_0 &::=& \alpha \mid \emptyseq \mid \tau_0 |\tau_0' \mid \tau_0,\tau_0' \mid \tau_0^*\\
  \text{Type signatures} &E & ::= & \cdot \mid E,\typedecl{X}{\tau_0}
\end{array}\]
We call types of the form $\alpha \in \Atoms$ \emph{atomic} types (or
sometimes tree or singular types), and types $\tau \in \Types$ of all
other forms \emph{sequence types} (or sometimes forest or plural
types).  It should be obvious that a value of singular type must
always be a sequence of length one (that is, a tree); plural types may
have values of any length.  There exist plural types with only values
of length one, but which are not syntactically singular (for example
$\kint | \kbool$).  As usual, the $+$ and $?$ quantifiers can be
defined as follows: $\tau^+ = \tau,\tau^*$ and $\tau^? =
\tau|\emptyseq$.  We abbreviate $n[\emptyseq]$ as $n[]$.

Note that in contrast to Hosoya et al.~\cite{hosoya05toplas}, but
following Colazzo et al.~\cite{colazzo06jfp}, we include both Kleene
star and type variables.  In \cite{hosoya05toplas}, it was shown that
Kleene star can be translated away by introducing type variables and
definitions, modulo a syntactic restriction on top-level occurrences
of type variables.  In contrast, we allow Kleene star, but further
restrict type variables.  Recursive and mutually recursive
declarations are allowed, but type variables may not appear at the top
level of a type definition $\tau_0$: for example, $\typedecl{X}{
  nil[]|cons(a,X)}$ and
$\typedecl{Y}{\mathit{leaf}[]|\mathit{node}[X,X]}$ are allowed but
$\typedecl{X'}{ \emptyseq|a[],X}$ and $\typedecl{Y'}{b[] |
  Y',Y'}$ are not.  The equation for $X'$ defines the regular tree
language $a[]^*$, and would be permitted in XDuce, while that for $Y'$
defines a context-free tree language that is not regular.

An environment $E$ is well-formed if all type variables appearing in
definitions are themselves declared in $E$.  Given a well-formed
environment $E$, we write $E(X)$ for the definition of $X$.  A type
denotes the set of values $\SB{\tau}_E$, defined as follows.
\[\begin{array}{l}
  \begin{array}{rclcrclcrcl}
    \SB{\kstring}_E &=& \Sigma^*&&
    \SB{\kbool}_E &=& \Bool
&&
    \SB{\emptyseq}_E &=& \{\emptyseq\}\smallskip\\
    \SB{n[\tau]}_E &=& \{n[v] \mid v \in \SB{\tau}_E\}
&&\SB{X}_E &=& \SB{E(X)}
&&    \SB{\tau|\tau'}_E &=& \SB{\tau}_E \cup \SB{\tau'}_E
  \end{array}
  \smallskip\\
  \begin{array}{rcl}
    \SB{\tau,\tau'}_E &=& \{v, v' \mid v \in \SB{\tau}_E,v' \in \SB{\tau'}_E\}
    \smallskip\\
    \SB{\tau^*}_E &=& \{\emptyseq\} \cup \{v_1, \ldots, v_n \mid v_1 \in \SB{\tau}_E,\ldots,v_n \in \SB{\tau}_E\}\\
  \end{array}
\end{array}
\]
Formally, $\SB{\tau}_E$ must be defined by a least fixed point
construction which we take for granted.  Henceforth, we treat $E$ as
fixed and define $\SB{\tau} = \SB{\tau}_{E}$.

In addition, we define a binary \emph{subtyping} relation on types.  A
type $\tau_1$ is a subtype of $\tau_2$ ($\tau_1 \subty \tau_2$), by
definition, if $\SB{\tau_1} \subseteq \SB{\tau_2}$.  Our types can be
translated to XDuce types, so subtyping reduces to XDuce subtyping;
although this problem is EXPTIME-complete in general, the algorithm of
\citet{hosoya05toplas} is well-behaved in practice.  Therefore, we
shall not give explicit inference rules for checking or deciding
subtyping, but treat it as a ``black box''.

\section{Query language}\labelSec{query-language}

We review an XQuery-like core language based on
\muXQ~\citet{colazzo06jfp}.  In \muXQ, we distinguish between
\emph{tree variables} $\bar{x} \in \TVar$, introduced by $\kfor$, and
\emph{forest variables}, $x \in \Var$, introduced by $\klet$.  We
write $\hat{x} \in \Var \cup \TVar$ for an arbitrary variable.  The
other syntactic classes of our variant of \muXQ include booleans,
strings, and labels introduced above, function names $F \in \FSym$,
expressions $e \in \Expr$, and programs $p \in \Prog$; the abstract
syntax of expressions and programs is defined as follows:
\begin{eqnarray*}
  e &::=& \emptyseq \mid e,e' \mid n[e] \mid w \mid x \mid \letin{x=e}{e'}
  \mid F(e_1,\ldots,e_n) \\
  &\mid & b\mid \ifthenelse{c}{e}{e'}\mid \bar{x} \mid \bar{x}/\kchild \mid e::n  \mid \forreturn{\bar{x} \in e}{e'} \\
  p &::=&  \query{e}{\tau} \mid \declarefunction{F(x_1{:}\tau_1,\ldots,x_n{:}\tau_n):\tau}{e}{p} 
\end{eqnarray*}
The distinguished variables $\bar{x}$ in $\forreturn{\bar{x} \in
  e}{e'(\bar{x})}$ and $x$ in $\letin{x=e}{e'(x)}$ are bound in
$e'(x)$.  Here and elsewhere, we employ common conventions such as
considering expressions containing bound variables equivalent up to
$\alpha$-renaming and employing a richer concrete syntax including
parentheses.

To simplify the presentation, we split \muXQ's projection operation
$\bar{x}/\kchild::l$ into two expressions: child projection
($\bar{x}/\kchild$) which returns the children of $\bar{x}$, and node
name filtering ($e::n$) which evaluates $e$ to an arbitrary sequence
and selects the nodes labeled $n$.  Thus, the ordinary child axis
expression $\bar{x}/\kchild::n$ is syntactic sugar for
$(\bar{x}/\kchild)::n$ and the ``wildcard'' child axis is definable as
$\bar{x}/\kchild::* = \bar{x}/\kchild$.  Built-in operations such as
string equality may be provided as additional functions $F$.

Colazzo et al.~\cite{colazzo06jfp} provided a denotational semantics
of \muXQ queries with the descendant axis but without recursive
functions.  This semantics is sound with respect to the typing rules
in the next section and can be extended to handle recursive functions
using operational techniques (as in the XQuery standard).  However, we
omit the semantics since it is not needed in the rest of the paper.

\subsection{Type system}\labelSec{query-types}

\begin{figure}[tb]
  \fbox{$\wf{\Gamma}{e}{\tau}$}
  \[
  \begin{array}{c}
    \infer{\wf{\Gamma}{\bar{x}}{\alpha}}{\bar{x}{:}\alpha \in \Gamma}
    \quad
    \infer{\wf{\Gamma}{x}{\tau}}{x{:}\tau \in \Gamma}
    \quad
    \infer{\wf{\Gamma}{w}{\kstring}}{}
    \quad
    \infer{\wf{\Gamma}{b}{\kbool}}{b \in \Bool}
    \smallskip\\
    \infer{\wf{\Gamma}{\emptyseq}{\emptyseq}}{}
    \quad
    \infer{\wf{\Gamma}{n[e]}{n[\tau]}}{\wf{\Gamma}{e}{\tau}}
    \quad
    \infer{\wf{\Gamma}{e,e'}{\tau,\tau'}}{\wf{\Gamma}{e}{\tau} & \wf{\Gamma}{e'}{\tau'}}
    \quad
    \infer{\wf{\Gamma}{\letin{x=e_1}{e_2}}{\tau_2}}
    {\wf{\Gamma}{e_1}{\tau_1} & \wf{\Gamma,x{:}\tau_1}{e_2}{\tau_2}}
    \smallskip\\
    \infer{\wf{\Gamma}{\ifthenelse{c}{e_1}{e_2}}{\tau_1|\tau_2}}
    {\wf{\Gamma}{c}{\kbool} & 
      \wf{\Gamma}{e_1}{\tau_1} &
      \wf{\Gamma}{e_2}{\tau_2} 
    }
    \quad
    \infer{\wf{\Gamma}{\bar{x}/\kchild}{\tau}}{\bar{x}{:}n[\tau] \in \Gamma}
    \quad
    \infer{\wf{\Gamma}{e::n}{\tau'}}{\wf{\Gamma}{e}{\tau} 
      & \tylab{\tau}{n}{\tau'}}
    \smallskip\\
    \infer{\wf{\Gamma}{\forreturn{\bar{x} \in e_1}{e_2}}{\tau_2}}
    {\wf{\Gamma}{e_1}{\tau_1} 
      & 
      \wfin{\Gamma}{x}{\tau_1}{e_2}{\tau_2}}
    \quad
    \infer{\wf{\Gamma}{F(\vec{e})}{\tau_0}}{\funcdecl{F(\vec{\tau})}{\tau_0} \in \Delta & \wf{\Gamma}{e_i}{\tau_i}}
    \quad
    \infer{\wf{\Gamma}{e}{\tau'}}
    {\wf{\Gamma}{e}{\tau} & \tau \subty \tau'}
  \end{array}
  \]
  \fbox{$\wfprog{\Gamma}{p}$}
  \[
  \begin{array}{c}
    \infer{\wfprog{\Gamma}{e:\tau}}{\wf{\Gamma}{e}{\tau}}
\quad
    \infer{\wfprog{\Gamma}{\declarefunction{\funcdecl{F(\vec{\tau})}{\tau_0}}{e}{p}}}{
      \funcdecl{\text{$F$ not declared in $p$} & 
          F(\vec{\tau})}{\tau_0} \in \Delta&
      \wf{\Gamma,\vec{x}~{:}~\vec{\tau}}{e}{\tau_0} & 
      \wfprog{\Gamma}{p}}
  \end{array}\]
  \caption{Query and program
    well-formedness rules}\labelFig{query-program-wf}
\begin{minipage}[t]{5cm}
\fbox{$\tylab{\tau}{n}{\tau'}$}
  \[\begin{array}{c}
     \infer{\tylab{n[\tau]}{n}{n[\tau]}}{}
    \smallskip\\
   \infer{\tylab{X}{n}{\tau}}{\tylab{E(X)}{n}{\tau}}
    \quad
    \infer{\tylab{\alpha}{n}{ \emptyseq}}{\alpha \neq n[\tau]}
    \smallskip\\
    \infer{\tylab{\emptyseq}{n}{\emptyseq}}{}
    \quad
    \infer{\tylab{\tau_1^*}{n}{\tau_2^*}}
    {\tylab{\tau_1}{n}{\tau_2}}
    \smallskip\\
    \infer{\tylab{\tau_1,\tau_2}{n}{\tau_1',\tau_2'}}
    {\tylab{\tau_1}{n}{\tau_1'}
      &
      \tylab{\tau_2}{n}{\tau_2'}}
    \smallskip\\ 
    \infer{\tylab{\tau_1|\tau_2}{n}{\tau_1'|\tau_2'}}
    {\tylab{\tau_1}{n}{\tau_1'}
      &
      \tylab{\tau_2}{n}{\tau_2'}}
  \end{array}
  \]
\end{minipage}
\begin{minipage}[t]{6cm}
  \fbox{$\wfin{\Gamma}{x}{\tau}{e}{\tau'}$}
  \[\begin{array}{c}
    \infer{\wfin{\Gamma}{x}{\emptyseq}{e}{\emptyseq}}{}
    \quad \infer{\wfin{\Gamma}{x}{X}{e}{\tau}}{\wfin{\Gamma}{x}{E(X)}{e}{\tau}}
    \smallskip\\
    \infer{\wfin{\Gamma}{x}{\alpha}{e}{\tau}}{\wf{\Gamma,\bar{x}{:}\alpha}{e}{\tau}}
    \quad
    \infer{\wfin{\Gamma}{x}{\tau_1^*}{e}{\tau_2^*}}{\wfin{\Gamma}{x}{\tau_1}{e}{\tau_2}}
    \smallskip\\
    \infer{\wfin{\Gamma}{x}{\tau_1,\tau_2}{e}{\tau_1',\tau_2'}}
    {\wfin{\Gamma}{x}{\tau_1}{e}{\tau_1'} & 
      \wfin{\Gamma}{x}{\tau_2}{e}{\tau_2'}}
    \smallskip\\
    \infer{\wfin{\Gamma}{x}{\tau_1|\tau_2}{e}{\tau_1'|\tau_2'}}
    {\wfin{\Gamma}{x}{\tau_1}{e}{\tau_1'} & 
      \wfin{\Gamma}{x}{\tau_2}{e}{\tau_2'}}
  \end{array}\]
\end{minipage}
\caption{Auxiliary judgments}\labelFig{aux-wf}
\end{figure}

Our type system for queries is essentially that introduced for \muXQ
by \cite{colazzo06jfp}, excluding the path correctness component.  We
consider typing environments $\Gamma$ and global declaration
environments $\Delta$, defined as follows:
\[
\Gamma ::= \cdot \mid \Gamma,x{:}\tau \mid \Gamma,\bar{x} {:} \alpha  \qquad \Delta ::= \cdot \mid \Delta,\funcdecl{F(\vec{\tau})}{\tau_0}\]
Note that in $\Gamma$, tree variables may only be bound to atomic
types.  As usual, we assume that variables in type environments are
distinct; this convention implicitly constrains all inference rules.
We also write $\Gamma \subty \Gamma'$ to indicate that $\dom(\Gamma)
=\dom(\Gamma')$ and $\Gamma'(\hat{x}) \subty \Gamma(\hat{x})$ for all
$\hat{x} \in \dom(\Gamma)$.

The main typing judgment for queries is $\wf{\Gamma}{e}{\tau}$; we
also define a program well-formedness judgment $\wfprog{\Gamma}{p}$
which typechecks the bodies of functions.  Following
\citet{colazzo06jfp}, there are two auxiliary judgments,
$\wfin{\Gamma}{x}{\tau}{s}{\tau'}$, used for typechecking
$\kfor$-expressions, and $\tylab{\tau}{n}{\tau'}$, used for
typechecking label matching expressions $e::n$.  The rules for these
judgments are shown in Figures \ref{fig:query-program-wf} and
\ref{fig:aux-wf}.

We consider the typing rules to be implicitly parameterized by a fixed
global declaration environment $\Delta$.  Functions in XQuery have
global scope so we assume that the declarations for all the functions
declared in the program have already been added to $\Delta$ by a
preprocessing pass.  Additional declarations for built-in functions
might be included in $\Delta$ as well.  

The rules involving type variables in \refFig{aux-wf} look up the
variable's definition in $E$.  These judgments only inspect the
top-level of a type; they do not inspect the contents of element types
$n[\tau]$.  Since type definitions $\tau_0$ have no top-level type
variables, both judgments are terminating.  (This was argued
in detail by Colazzo et al.~\cite[Lem.  4.6]{colazzo06jfp}.)

\subsection{Examples}\labelSec{query-examples}

We first revisit the example in the introduction in order to
illustrate the operation of the rules.  Recall that $\bar{x}/*$ is
translated to $\bar{x}/\kchild$ in our core language.
\[\small\infer{\wf{\bar{x}{:}a[b[]^*,c[]^?]}{\forreturn{\bar{y} \in \bar{x}/\kchild}{\bar{y}}}{b[]^*,c[]^?}}{
  \infer{\wf{\bar{x}{:}a[b[]^*,c[]^?]}{\bar{x}/\kchild}{b[]^*,c[]^?}}{}
  & 
  \deduce{\wfin{\bar{x}{:}a[b[]^*,c[]^?]}{y}{b[]^*,c[]^?}{\bar{y}}{b[]^*,c[]^?}}{\DD}
}
\]
where the subderivation $\DD$ is
\[\small
\DD = \left.
\begin{array}{c}
\infer{\wfin{\bar{x}{:}a[b[]^*,c[]^?]}{y}{b[]^*,c[]^?}{\bar{y}}{b[]^*,c[]^?}}{
    \infer{\wfin{\bar{x}{:}a[b[]^*,c[]^?]}{y}{b[]^*}{\bar{y}}{b[]^*}}{
      \infer{\wfin{\bar{x}{:}a[b[]^*,c[]^?]}{y}{b[]}{\bar{y}}{b[]}}{
        \hyp{\wf{\bar{x}{:}a[b[]^*,c[]^?],\bar{y}{:}b[]}{\bar{y}}{b[]}}
      }}
    &
    \infer{\wfin{\bar{x}{:}a[b[]^*,c[]^?]}{y}{c[]^?}{\bar{y}}{c[]^?}}{
      \infer{\wfin{\bar{x}{:}a[b[]^*,c[]^?]}{y}{c[]}{\bar{y}}{c[]}}{
        \hyp{\wf{\bar{x}{:}a[b[]^*,c[]],\bar{y}{:}c[]}{\bar{y}}{c[]}}
      }
    }
  }
\end{array}\right.
\]
Note that this derivation does not use subsumption anywhere.  Suppose
we wished to show that the expression has type $b[]^*,(c[]^?|d[]^*)$, a
supertype of the above type.  There are several ways to do this:
first, we can simply use subsumption at the end of the derivation.
Alternatively, we could have used subsumption in one of the
subderivations such as
$\wf{\bar{x}{:}a[b[]^*,c[]^?],\bar{y}{:}c[]^?}{\bar{y}}{c[]^?}$, to conclude,
for example, that
$\wf{\bar{x}{:}a[b[]^*,c[]^?],\bar{y}{:}c[]^?}{\bar{y}}{c[]^?|d[]^*}$.  This
is valid since $c[]^? \subty c[]^?|d[]^*$.

Suppose, instead, that we actually wanted to show that the above
expression has type $(b[d[]^*]|c[]^?)^*$, also a supertype of the
derived type.  There are again several ways of doing this.  Besides
using subsumption at the end of the derivation, we might have used it
on $\wf{\bar{x}{:}a[b[]^*,c[]^?]}{\bar{x}/\kchild}{b[]^*,c[]^?}$ to
obtain $\wf{\bar{x}{:}a[b[]^*,c[]^?]}{\bar{x}/\kchild}{(b[d[]^*]|c[]^?)^*}$.
To complete the derivation, we would then need to replace derivation
$\DD$ with $\DD'$:
\[\small
\DD' = \left.\begin{array}{c}
 \infer{\wfin{\bar{x}{:}a[b[]^*,c[]^?]}{y}{(b[d[]^*]|c[]^?)^*}{\bar{y}}{(b[d[]^*]|c[]^?)^*}}{
    \infer{\wfin{\bar{x}{:}a[b[]^*,c[]^?]}{y}{b[d[]^*]|c[]^?}{\bar{y}}{b[d[]^*]|c[]^?}}{
    \infer{\wfin{\bar{x}{:}a[b[]^*,c[]^?]}{y}{b[d[]^*]}{\bar{y}}{b[d[]^*]}}{
      \hyp{\wf{\bar{x}{:}a[b[]^*,c[]^?],\bar{y}{:}b[d[]^*]}{\bar{y}}{b[d[]^*]}}
    }&
    \infer{\wfin{\bar{x}{:}a[b[]^*,c[]^?]}{y}{c[]^?}{\bar{y}}{c[]^?}}{
      \infer{\wfin{\bar{x}{:}a[b[]^*,c[]^?]}{y}{c[]}{\bar{y}}{c[]}}{
        \hyp{\wf{\bar{x}{:}a[b[]^*,c[]^?],\bar{y}{:}c[]}{\bar{y}}{c[]}}
      }
    }
  }
}
\end{array}\right.\]
Not only does $\DD'$ have different structure than $\DD$, but
it also requires subderivations that were not syntactically present in
$\DD$.  

The above example illustrates why eliminating uses of subsumption is
tricky.  If subsumption is used to weaken the type of the first
argument of a $\kfor$-expression according to $\tau_1' \subty \tau_1$,
then we need to know that we can transform the corresponding
derivation $\DD$ of $\wfin{\Gamma}{x}{\tau_1}{e}{\tau_2}$ to a
derivation of $\DD'$ of $\wfin{\Gamma}{x}{\tau_1'}{e}{\tau_2'}$ for
some $\tau_2' \subty \tau_2$.  But as illustrated above, the
derivations $\DD$ and $\DD'$ may bear little resemblance to one
another.

Now we consider a typechecking a recursive query.  Suppose we have
$\typedecl{\Tree}{\tree[\leaf[\kstring]|\node[\Tree^*]]}$ and function definition
\[\begin{array}{l}
\declarefunction{\funcdecl{\leaves(x:\Tree)}{\leaf[\kstring]^*}}{\\
\quad x/\leaf,\forreturn{\bar{z}\in x/\node/*}{\leaves(\bar{z})}\\
}
\end{array}
\]
This uses a construct $e/n$ that is not in core \muXQ, but we can
expand $e/n$ to $\forreturn{\bar{y}\in e}{\bar{y}/\kchild::n}$; thus,
we can derive a rule
\[
\infer{\wf{\Gamma}{e/n}{\tau'}}{\wf{\Gamma}{e}{l[\tau]} &
  \tylab{\tau}{n}{\tau'}}
\Longleftrightarrow
\infer{\wf{\Gamma}{\forreturn{\bar{y}\in e}{\bar{y}/\kchild::n}}{\tau'}}{
  \wf{\Gamma}{e}{l[\tau]} & 
  \infer{\wfin{\Gamma}{y}{l[\tau]}{\bar{y}/\kchild::n}{\tau'}}{
    \infer{\wf{\Gamma,\bar{y}{:}l[\tau]}{\bar{y}/\kchild::n}{\tau'}}{
      \infer{\wf{\Gamma,\bar{y}{:}l[\tau]}{\bar{y}/\kchild}{\tau}}{
      }
      & \tylab{\tau}{n}{\tau'}
    }
  }
}
\]
Using this derived rule and the fact that $x : \Tree$ and the
definition of $\Tree$, we can see that $x/\leaf:\leaf[\kstring]$ and
$x/\node : \node[\Tree^*]]$, and so $x/\node/* :
\tree[\leaf[\kstring]|\node[\Tree^*]]^*$.  So each iteration of the
$\kfor$-loop can be typechecked with
$\bar{z}:\tree[\leaf[\kstring]|\node[\Tree^*]]$.  To check the
function call $\leaves(\bar{z})$, we need subsumption to see that
$\tree[\leaf[\kstring]|\node[\Tree^*]]^* \subty \Tree$.  It follows
that that $\leaves(\bar{z}) : \leaf[\kstring]^*$, so the $\kfor$-loop
has type $(\leaf[\kstring]^*)^*$.  Again using subsumption, we can
conclude that 
\[x/\leaf,\leaves(x/\node/*) :
\leaf[\kstring],(\leaf[\kstring]^*)^* \subty \leaf[\kstring]^*\;.\]
Notice that although we could have used subsumption in several more
places, we really \emph{needed} it in only two places: when
typechecking a function call, and when checking the result of a
function against its declared type.
\subsection{Decidability}\labelSec{query-decidability}

The standard approach (see e.g. Pierce~\cite[Ch. 16]{pierce02types})
to deciding declarative typechecking is to define algorithmic
judgments that are syntax-directed and decidable, and then show that
the algorithmic system is complete relative to the declarative system.

\begin{definition}[Algorithmic derivations]
  The algorithmic typechecking judgments $\wfalg{\Gamma}{e}{\tau}$ and 
  $\wfinalg{\Gamma}{x}{\tau_0}{e}{\tau}$ are defined by taking the rules
  of Figures~\ref{fig:query-program-wf} and~\ref{fig:aux-wf}, removing
  the subsumption rule, and replacing the function application rule
  with
  \[
  \infer{\wfalg{\Gamma}{F(\vec{e})}{\tau}}{
    \funcdecl{F(\vec{\tau})}{\tau} \in \Gamma 
    & \wfalg{\Gamma}{e_i}{\tau_i'} & \tau_i' \subty \tau_i}
  \]
\end{definition}

It is straightforward to show that algorithmic derivability is
decidable and sound with respect to the declarative system:
\begin{lemma}[Decidability]\labelLem{decidability}
  For any $\bar{x},e,n$, there exist computable partial functions
  $f_n$, $g_e$, $h_{\bar{x},y}$ such that for any $\Gamma,\tau_0$, we have:
  \begin{enumerate}
  \item $f_n(\tau_0)$ is the unique $\tau$ such that
    $\tylab{\tau_0}{n}{\tau}$.
  \item $g_x(\Gamma)$ is the unique $\tau$ such that
    $\wfalg{\Gamma}{e}{\tau}$, when it exists.
  \item $h_{\bar{x},e}(\Gamma,\tau_0)$ is the unique $\tau$ such that
    $\wfinalg{\Gamma}{x}{\tau_0}{e}{\tau}$, when it
    exists.
  \end{enumerate}
\end{lemma}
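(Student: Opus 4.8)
The plan is to prove all three parts by reading the algorithmic rules as a recursive evaluation procedure and checking two things for each judgment: that it is \emph{deterministic} (functional in the indicated inputs), and that the recursion \emph{terminates}. Part~(1) is independent; parts~(2) and~(3) are proved together by a nested induction, the outer one on $\mathrm{size}(e)$ and the inner one on a measure on types that is also what makes part~(1) go through. Throughout, determinism is a routine rule-by-rule inspection: each rule is syntax-directed in its distinguished subject (the type $\tau_0$ for $\tylab{\tau_0}{n}{\tau}$ and $\wfinalg{\Gamma}{x}{\tau_0}{e}{\tau}$, the expression $e$ for $\wfalg{\Gamma}{e}{\tau}$), and in each rule the output is uniquely determined from $\Gamma$, the subject, and the (inductively unique) outputs of the premises — using part~(1) for the $e::n$ rule, and the assumptions that $\Delta$ declares each $F$ at most once and that $\subty$ is a decidable black box for the modified application rule.

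For part~(1): the rules for $\tylab{}{n}{}$ in \refFig{aux-wf} cover every form $n[\tau]$, $\alpha\neq n[\tau]$, $\emptyseq$, $\tau^*$, $\tau_1,\tau_2$, $\tau_1|\tau_2$, $X$, so $\tylab{}{n}{}$ is a \emph{total} partial function, provided evaluation terminates. Termination is where the syntactic restriction on type definitions is used. Since no $E(X)$ contains a top-level type variable, the measure
\[
\|\emptyseq\| = \|\alpha\| = 1,\quad \|\tau^*\| = 1+\|\tau\|,\quad \|\tau_1|\tau_2\| = \|\tau_1,\tau_2\| = 1+\|\tau_1\|+\|\tau_2\|,\quad \|X\| = 1+\|E(X)\|
\]
is well-defined (the clause for $X$ recurses only into the variable-free type $E(X)$), and every premise of every rule has strictly smaller $\|\cdot\|$ than its conclusion — in particular $\|X\| = 1 + \|E(X)\| > \|E(X)\|$, and the $\tau^*$, $\tau_1,\tau_2$, $\tau_1|\tau_2$ rules descend to strict subterms. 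Hence evaluation terminates and $f_n$, defined to return the unique $\tau$ with $\tylab{\tau_0}{n}{\tau}$, is total computable. (This is the argument of Colazzo et al.~\cite[Lem.~4.6]{colazzo06jfp}.)

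For parts~(2) and~(3): reading the algorithmic rules bottom-up, the recursive calls made while evaluating $\wfalg{\Gamma}{e}{\tau}$ are calls to $\wfalg{\Gamma'}{e'}{-}$ with $e'$ a strict subexpression of $e$, a call to $f_n$, and — for a $\kfor$-expression $\forreturn{\bar x\in e_1}{e_2}$ — a call to $\wfinalg{\Gamma}{x}{\tau_1}{e_2}{-}$. Evaluating $\wfinalg{\Gamma}{x}{\tau_0}{e_2}{-}$ in turn makes only calls to $\wfinalg{\Gamma'}{x}{\tau_0'}{e_2}{-}$ with $\|\tau_0'\| < \|\tau_0\|$ (same measure, same use of the no-top-level-variable restriction for the $X$ clause) together with, at the $\alpha$-leaves, a call to $\wfalg{\Gamma,\bar x{:}\alpha}{e_2}{-}$ on the \emph{same} body $e_2$, which is nonetheless a strict subexpression of the original $\kfor$-expression. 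So a nested induction — outer on $\mathrm{size}(e)$, with an inner induction on $\|\tau_0\|$ used to discharge the $\kfor$-case — shows every such recursion terminates, yielding the required computable partial functions $g_e$ and $h_{\bar x, e}$, with uniqueness from the determinism observation above.

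The main obstacle is precisely the termination argument and the justification of the measure $\|\cdot\|$: it is well-founded \emph{only} because type definitions may not have top-level type variables, so unfolding $X\rightsquigarrow E(X)$ strictly decreases $\|\cdot\|$ and cannot be undone by a later unfolding. A naive structural induction on types fails for $f_n$ and $h$, and a naive structural induction on expressions fails for $h$ as well, since $h_{\bar x,e}$ recurses into $\wfalg{}{e}{}$ on an unchanged $e$; the fix is exactly the lexicographic combination of expression size and the type measure. Everything else is mechanical inspection of the rules.
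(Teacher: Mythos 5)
Your proposal is correct and follows essentially the route the paper itself takes (and largely leaves implicit): the paper treats the lemma as a straightforward consequence of the judgments being syntax-directed in their subjects, with termination of the type-indexed judgments resting exactly on the restriction that type definitions have no top-level type variables, citing Colazzo et al.\ for the detailed argument. Your contribution is simply to make that sketch explicit — the unfolding measure $\|X\| = 1+\|E(X)\|$ (well-defined because the measure never enters element contents) and the lexicographic induction on expression size paired with that measure for the $\kfor$/$\wfinalg{\Gamma}{x}{\tau_0}{e}{\tau}$ interaction — which is consistent with, and a faithful elaboration of, the paper's intended proof.
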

\begin{theorem}[Algorithmic Soundness]\labelThm{soundness}
  (1) If $\wfalg{\Gamma}{e}{\tau}$ is derivable then $\wf{\Gamma}{e}{\tau}$ is derivable.  (2)
  If $\wfinalg{\Gamma}{x}{\tau_0}{e}{\tau}$ is derivable then
  $\wfin{\Gamma}{x}{\tau_0}{e}{\tau}$ is derivable.
\end{theorem}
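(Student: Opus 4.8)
The plan is to prove (1) and (2) together by a single mutual induction on the structure of the given algorithmic derivation. The point to exploit is that the algorithmic system differs from the declarative one in only two respects: the subsumption rule has been deleted from $\wf{}{}{}$, and the function-application rule has been replaced. Every remaining rule producing $\wfalg{\Gamma}{e}{\tau}$ is verbatim a rule producing $\wf{\Gamma}{e}{\tau}$, and the auxiliary judgments coincide: $\tylab{\tau}{n}{\tau'}$ has the same rules in both systems, and the rules for $\wfinalg{\Gamma}{x}{\tau_0}{e}{\tau}$ are exactly those for $\wfin{\Gamma}{x}{\tau_0}{e}{\tau}$. Hence for every case except function application the inductive step is immediate: apply the induction hypothesis to each premise (a $\wfalg$ premise yielding a $\wf$ fact, a $\wfinalg$ premise a $\wfin$ fact, a $\tylab$ premise unchanged) and then re-apply the identical declarative rule.

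The only case needing an actual argument is the algorithmic function-application rule
\[
\infer{\wfalg{\Gamma}{F(\vec{e})}{\tau}}{\funcdecl{F(\vec{\tau})}{\tau} \in \Delta & \wfalg{\Gamma}{e_i}{\tau_i'} & \tau_i' \subty \tau_i}
\]
From $\wfalg{\Gamma}{e_i}{\tau_i'}$ the induction hypothesis gives $\wf{\Gamma}{e_i}{\tau_i'}$; one use of subsumption with $\tau_i' \subty \tau_i$ yields $\wf{\Gamma}{e_i}{\tau_i}$, and the declarative function-application rule then derives $\wf{\Gamma}{F(\vec{e})}{\tau}$. This is precisely the role of subsumption: the algorithmic rule has absorbed into a side condition the subsumption steps that a declarative derivation would perform on the arguments of the call.

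Parts (1) and (2) are mutually dependent---the algorithmic $\kfor$-rule in (1) has a premise $\wfinalg{\Gamma}{x}{\tau_1}{e_2}{\tau_2}$ discharged by (2), and the algorithmic rule concluding $\wfinalg{\Gamma}{x}{\alpha}{e}{\tau}$ has premise $\wfalg{\Gamma,\bar{x}{:}\alpha}{e}{\tau}$ discharged by (1)---but in both directions the appeal is to a strictly smaller subderivation, so the mutual induction is well-founded. I expect no real obstacle: this is the routine ``easy direction,'' and the whole proof reduces to observing that each algorithmic rule is admissible in the declarative system, with subsumption supplying the slack in the one rule that was changed. All the difficulty of the paper lies in the converse direction (completeness of the algorithmic system), where uses of subsumption must be eliminated---in particular on the first argument of a $\kfor$-expression, where, as the examples of \refSec{query-examples} show, the subsumption-free derivation can differ structurally from the original and may even require subderivations not present in it.
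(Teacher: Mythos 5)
Your proof is correct and is exactly the routine induction the paper has in mind: the paper simply asserts this direction is straightforward (it gives no explicit proof, even in the appendix), and your mutual induction on algorithmic derivations, with subsumption re-inserted only at the arguments of the function-application rule, is the intended argument.
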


The corresponding completeness property (the main result of this
section) is:
\begin{theorem}[Algorithmic Completeness]\labelThm{alg-completeness}
(1) If $\wf{\Gamma}{e}{\tau}$ then there exists $\tau' \subty
    \tau$ such that $\wfalg{\Gamma}{e}{\tau'}$.
(2) If $\wfin{\Gamma}{x}{\tau_1}{e}{\tau_2}$ then there exists
    $\tau_2' \subty \tau_2$ such that
    $\wfinalg{\Gamma}{x}{\tau_1}{e}{\tau_2'}$.
\end{theorem}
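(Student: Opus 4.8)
The plan is to prove parts~(1) and~(2) together by induction on the given declarative derivation, carrying along two auxiliary clauses that the $\kfor$ case forces us to maintain. The first is a context-monotonicity clause: if $\Gamma' \subty \Gamma$ and $\wfalg{\Gamma}{e}{\tau}$ then $\wfalg{\Gamma'}{e}{\tau'}$ for some $\tau' \subty \tau$, together with the analogous statement for the algorithmic iteration judgment $\wfinalg{\Gamma}{x}{\tau_0}{e}{\tau}$ in its context argument. The second is needed only for the $e::n$ case: the label-matching function $f_n$ of \refLem{decidability} is monotone, i.e.\ $\sigma \subty \sigma'$ implies $f_n(\sigma) \subty f_n(\sigma')$; this is a short structural induction on $\sigma$, using that label matching commutes with concatenation, union and star. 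Throughout, transitivity of $\subty$ is free, since $\subty$ is just containment of denotations.

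Given these, most cases are routine. Subsumption is closed by transitivity. For $n[e]$, $e,e'$ and $\ifthenelse{c}{e_1}{e_2}$ one combines the induction hypotheses on the immediate subexpressions and repairs the result type from the semantics of the constructors (for instance $\tau_1'|\tau_2' \subty \tau_1|\tau_2$ whenever $\tau_i' \subty \tau_i$). For $\letin{x=e_1}{e_2}$ one applies the induction hypothesis to $e_1$ and then the context-monotonicity clause to push the sharper type of $e_1$ into the context of $e_2$. The $e::n$ case uses monotonicity of $f_n$ together with the fact (\refLem{decidability}) that the declarative $\tylab{\tau}{n}{\tau'}$ judgment already forces $\tau' = f_n(\tau)$. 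Function application is absorbed directly by the relaxed algorithmic application rule plus transitivity. A little care is needed wherever a premise pins down a concrete type such as $\kbool$ or an element type $n[\tau]$: there one observes that the algorithmic type obtained already has that shape, up to unfolding of top-level type variables.

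The real work is the $\kfor$ case. A declarative derivation of $\wf{\Gamma}{\forreturn{\bar x \in e_1}{e_2}}{\tau_2}$ ends with premises $\wf{\Gamma}{e_1}{\tau_1}$ and $\wfin{\Gamma}{x}{\tau_1}{e_2}{\tau_2}$. The induction hypothesis for~(1) on the first premise gives $\wfalg{\Gamma}{e_1}{\tau_1'}$ with $\tau_1' \subty \tau_1$, and by \refLem{decidability} this $\tau_1'$ is the unique algorithmic type of $e_1$; hence, to apply the algorithmic $\kfor$ rule, we are obliged to run the iteration judgment on $\tau_1'$, obtaining a unique $\tau_2'$ with $\wfinalg{\Gamma}{x}{\tau_1'}{e_2}{\tau_2'}$, and we are finished as soon as $\tau_2' \subty \tau_2$. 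The obstruction is that the iteration judgment is syntax-directed on its \emph{type} argument rather than on its denotation, so the derivations that build $\tau_2$ from $\tau_1$ and $\tau_2'$ from $\tau_1'$ may have completely different shapes — the same phenomenon that makes $\DD$ and $\DD'$ in \refSec{query-examples} so unalike — and, since $\subty$ is treated as a black box, there is no subtyping derivation along which to induct. So the crux is a separate lemma of the form: if $\tau_1' \subty \tau_1$ and $\wfinalg{\Gamma}{x}{\tau_1}{e_2}{\tau_2}$, then $\wfinalg{\Gamma}{x}{\tau_1'}{e_2}{\tau_2'}$ for some $\tau_2' \subty \tau_2$ (and the same step appears inside the proof of~(2) whenever an iteration derivation contains a nested $\kfor$).

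I expect this lemma to be the main obstacle, and the plan for it is to argue directly about the regular-expression structure of $\tau_1'$ and $\tau_1$ rather than by induction on derivations. Reading a sequence type as a regular expression over the ``alphabet'' of atomic types, one checks by a short induction on the type that the iteration judgment computes the image of this regular expression under the homomorphism that sends an atomic leaf $\alpha$ to the algorithmic type of $e_2$ in the context $\Gamma,\bar x{:}\alpha$ (and unfolds type variables), and that $\SB{\tau_1}$ and $\SB{\tau_1'}$ decompose correspondingly as unions, taken over the words of these regular expressions, of products of atomic denotations. The key observation is then that every value of an atomic type is a single tree, so a forest witnessing $\SB{\tau_1'} \subseteq \SB{\tau_1}$ forces an \emph{equal-length, positionally aligned} match between a word of $\tau_1'$ and a word of $\tau_1$; transferring the containment through the homomorphism then reduces to applying the context-monotonicity clause to the one-variable context changes from $\bar x{:}\alpha$ to the subtype $\bar x{:}\alpha'$ supplied by the alignment. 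The delicate points, I expect, are making ``positional alignment'' precise and verifying that it genuinely follows from denotational containment of the whole types, together with the degenerate leaves — the bare $\emptyseq$ and atomic types with empty denotation — where the regular-expression and denotational pictures are least compatible and where one may have to treat such types separately. Once the lemma is in hand, the $\kfor$ case of~(1) and the corresponding step of~(2) close immediately and the induction goes through.
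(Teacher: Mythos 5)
Your overall architecture is the paper's: completeness is proved by induction on the declarative derivation, all of the difficulty is concentrated in a downward-monotonicity lemma for the algorithmic judgments, and the monotonicity of the iteration judgment in its type argument is obtained by recognizing $\wfinalg{\Gamma}{x}{-}{e}{-}$ as the homomorphic extension, over the ``alphabet'' of atomic types, of $\alpha\mapsto g_e(\Gamma,\bar x{:}\alpha)$, proved simultaneously (by mutual induction) with context monotonicity of $g_e$. This is exactly \refProp{downward-monotonicity} and \refThm{downward-monotonicity} in the appendix. One small miscalibration: the paper does \emph{not} get monotonicity of $f_n$ by a ``short structural induction on $\sigma$'' --- that induction hits the same Kleene-star obstruction you identify for $\kfor$ (from $\sigma\subty\sigma'$ you learn nothing structural about how $\sigma$ relates to the parts of $\sigma'$) --- but rather treats $f_n$ as the homomorphic extension of an atom-level map, so this is repairable by the machinery you already plan to build.

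The genuine gap is in your justification of the transfer of $\tau_1'\subty\tau_1$ through the homomorphism. You claim that a forest witnessing $\SB{\tau_1'}\subseteq\SB{\tau_1}$ forces an equal-length, positionally aligned match between a word of $\tau_1'$ and a word of $\tau_1$, ``supplying'' atomic subtypings $\alpha_i'\subty\alpha_i$ to which context monotonicity can be applied. A single witness supplies no such thing: at each position it exhibits one tree lying in $\SB{\alpha_i'}\cap\SB{\alpha_i}$, which is common membership, not $\alpha_i'\subty\alpha_i$; and the alignment itself can fail outright, e.g.\ $a[b[]^?]\subty(a[b[]]|\emptyseq),(a[]|\emptyseq)$ although the one-letter word $a[b[]^?]$ is not pointwise below any word of the right-hand type. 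The paper never argues per value here: it defines the word language $L(\tau)$ with downward closure built in at the letters, $L(\alpha)=\{\alpha'\mid\alpha'\subty\alpha\}$, proves $L(\tau)=\{\omega\mid\omega\subty\tau\}$ and $\tau\subty\tau'\iff L(\tau)\subseteq L(\tau')$ by induction on types (value witnesses are used only in the converse direction, from language inclusion back to subtyping), and then concludes monotonicity of $\hat h$ from the commutation lemma $\bigcup L[\hat h[L(\tau)]]=L(\hat h(\tau))$. So the step you yourself flagged as delicate is exactly where your plan as written does not go through; note also that the example above puts pressure even on the paper's own \refLem{l-is-atomic-subtypes} (its concatenation case asserts precisely the wordwise split that this example resists), so any complete argument has to confront such types explicitly rather than appeal to positional alignment of witnesses.
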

Given a decidable subtyping relation $\subty$, a typical proof of
completeness involves showing by induction that occurrences of the
subsumption rule can be ``permuted'' downwards in the proof past other
rules, except for function applications.  Completeness for \muXQ
requires strengthening this induction hypothesis.  To see why, recall
the following rules:
\[\small
\infer{\wf{\Gamma}{\letin{x=e_1}{e_2}}{\tau_2}}{
  \deduce{\wf{\Gamma}{e_1}{\tau_1}}{*} & 
  \wf{\Gamma,x{:}\tau_1}{e_2}{\tau_2}
}
\quad
\infer{\wf{\Gamma}{\forreturn{\bar{x} \in e_1}{e_2}}{\tau_2}}{
  \deduce{\wf{\Gamma}{e_1}{\tau_1} }{*}
  & 
  \wfin{\Gamma}{x}{\tau_1}{e_2}{\tau_2}
}
\quad
\infer{\wf{\Gamma}{e::n}{\tau'}}{
  \deduce{\wf{\Gamma}{e}{\tau} }{*}
  & \tylab{\tau}{n}{\tau'}
}
\]
If the subderivation labeled $*$ in the above rules follows by
subsumption, however, we cannot do anything to get rid of the
subsumption rule using the induction hypotheses provided by
\refThm{alg-completeness}.  Instead we need an additional lemma that
ensures that the judgments are all \emph{downward monotonic}.
Downward monotonicity means, informally, that if make the ``input''
types in a derivable judgment smaller, then the judgment
remains derivable with a smaller ``output'' type.

\begin{lemma}[Downward monotonicity]\labelLem{downward-monotonicity}
  ~
  \begin{enumerate}
  \item If $\tylab{\tau_1}{n}{\tau_2}$ and $\tau_1' \subty \tau_1$
    then $\tylab{\tau_1'}{n}{\tau_2'}$ for some $\tau_2' \subty
    \tau_2$
  \item If $\wfalg{\Gamma}{e}{\tau}$ and $\Gamma' \subty \Gamma$ then
    $\wfalg{\Gamma'}{e}{\tau'}$ for some $\tau' \subty \tau$.
  \item If $\wfinalg{\Gamma}{x}{\tau_1}{e}{\tau_2}$ and $\Gamma'
    \subty \Gamma$ and $\tau_1' \subty \tau_1$ then
    $\wfinalg{\Gamma'}{x}{\tau_1'}{e}{\tau_2'}$ for some $\tau_2'
    \subty \tau_2$.
  \end{enumerate}
\end{lemma}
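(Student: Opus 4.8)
The plan is to prove all three parts simultaneously by a single mutual induction, since the three algorithmic judgments are defined by mutual recursion and each appeals to the others.

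**Structure of the induction.**  Parts (1) and (3) proceed by induction on the structure of the type in ``input'' position ($\tau_1$ for part (1), $\tau_0$ for part (3)), matching the way the auxiliary judgments in \refFig{aux-wf} recurse.  Part (2) proceeds by induction on the structure of the derivation of $\wfalg{\Gamma}{e}{\tau}$.  I would set this up as one induction on the lexicographically-ordered pair (size of the relevant type / derivation, case index), or simply observe that every recursive appeal is to a strictly smaller object of the same kind, so the mutual induction is well-founded.

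**Part (1).**  Given $\tylab{\tau_1}{n}{\tau_2}$ and $\tau_1' \subty \tau_1$, I cannot induct on $\tau_1'$ directly, because the two types need not have the same shape.  The key observation is that the $\tylab{\cdot}{n}{\cdot}$ judgment is a \emph{total} function on types (by \refLem{decidability}(1), $f_n$ is total on type definitions, and one checks it extends to all types), so it suffices to show the function $f_n$ is monotone with respect to $\subty$.  Concretely: $f_n$ commutes with $|$, with $\cdot,\cdot$, with $(\cdot)^*$, and sends every atom $\alpha$ to either $n[\tau]$ (if $\alpha = n[\tau]$) or $\emptyseq$ (otherwise), and unfolds type variables.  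So I would argue semantically: $\SB{f_n(\sigma)}$ is exactly the set of forests obtained from forests in $\SB{\sigma}$ by deleting every top-level tree not labelled $n$.  This operation on sets of forests is monotone, hence $\tau_1' \subty \tau_1$ gives $f_n(\tau_1') \subty f_n(\tau_1)$, which is the claim with $\tau_2' = f_n(\tau_1')$ and $\tau_2 = f_n(\tau_1)$.  (If a purely syntactic proof is preferred, induct on $\tau_1$ and analyze how a subtype of each form can look; the semantic route is cleaner.)

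**Part (3).**  Induct on $\tau_0$.  The cases $\emptyseq$, $\tau_0 = \sigma_1|\sigma_2$, $\tau_0 = \sigma_1,\sigma_2$, $\tau_0 = \sigma^*$, and $\tau_0 = X$ are handled by decomposing $\tau_1' \subty \tau_0$ appropriately and applying the IH to the components — here I would again prefer to argue via the functional reading ($h_{\bar{x},e}$ from \refLem{decidability}(3)) together with a monotonicity argument, rather than case-splitting on the shape of $\tau_1'$, since $\tau_1'$ need not mirror $\tau_0$.  The base case $\tau_0 = \alpha$ is the crucial one: then the rule forces $\wfalg{\Gamma,\bar{x}{:}\alpha}{e}{\tau_2}$, and from $\tau_1' \subty \alpha$ we may have $\tau_1'$ of essentially arbitrary shape (it is a plural type all of whose values are length-one trees).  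I would reduce to the case $\tau_1' = \alpha'$ atomic by induction on $\tau_1'$ (handling $|$, $,$, $*$, $X$ structurally: e.g. if $\tau_1' = \rho^*$ then $\rho^* \subty \alpha$ forces $\rho \subty \alpha$ as well, apply IH to $\rho$ and re-wrap), and then for $\tau_1' = \alpha' \subty \alpha$ invoke part (2) with $\Gamma' = (\Gamma,\bar{x}{:}\alpha') \subty (\Gamma,\bar{x}{:}\alpha)$.  This is exactly where part (3) feeds into part (2).

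**Part (2).**  Induct on the derivation; case on the last rule.  For variable rules $\bar{x}$, $x$: $\Gamma'(\hat{x}) \subty \Gamma(\hat{x})$ gives the result immediately (using reflexivity and, for tree variables, the fact that a subtype of an atomic type is what we output — but note the rule for $\bar{x}$ requires an atomic type in $\Gamma$; since $\Gamma' \subty \Gamma$ keeps domains equal and $\Gamma(\bar{x})$ atomic, $\Gamma'(\bar{x})$ is a type, and we may need a small remark that the algorithmic variable rule still applies — this is fine because $\Gamma$ is an arbitrary well-formed environment and the constraint ``tree variables bound to atomic types'' is on $\Gamma$, so I should be slightly careful: actually the statement of the lemma has $\Gamma' \subty \Gamma$ which preserves this via the definition).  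For constants $w$, $b$, $\emptyseq$: trivial.  For $n[e]$, $(e,e')$, $\ifthenelse{c}{e_1}{e_2}$: apply IH to subderivations and recombine, using that $|$ and $,$ and $n[\cdot]$ are monotone for $\subty$.  For $\letin{x=e_1}{e_2}$: apply IH(2) to $e_1$ to get $\tau_1' \subty \tau_1$, then $\Gamma',x{:}\tau_1' \subty \Gamma,x{:}\tau_1$, then IH(2) to $e_2$.  For $e::n$: apply IH(2) to $e$, then part (1).  For $\bar{x}/\kchild$: here $\Gamma(\bar{x}) = n[\tau]$ and $\Gamma'(\bar{x}) \subty n[\tau]$; a subtype of a singleton element type $n[\tau]$ must be of the form $n[\tau']$ (or a union/star/variable collapsing to such — I need the small fact that $\sigma \subty n[\tau]$ with $\sigma$ giving a derivation of $\bar{x}/\kchild$ forces, after following the algorithmic variable-lookup, $\Gamma'(\bar{x})$ to still be an element type $n[\tau']$ with $\tau' \subty \tau$), whence the child rule gives output $\tau' \subty \tau$.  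For $\forreturn{\bar{x}\in e_1}{e_2}$: apply IH(2) to $e_1$ getting $\tau_1' \subty \tau_1$, then apply part (3) with this $\tau_1' \subty \tau_1$ and $\Gamma' \subty \Gamma$.  For function application: apply IH(2) to each argument to get $\tau_i'' \subty \tau_i' \subty \tau_i$, and the algorithmic rule still fires (transitivity of $\subty$), with the \emph{same} output type $\tau$ — so $\tau' = \tau$ works.

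**Main obstacle.**  The delicate point is the atomic base case of part (3) together with the $\bar{x}/\kchild$ and $\bar{x}$ cases of part (2): in each, a subtype of an atomic (or element) type need not be syntactically atomic, so I must either (a) prove a structural lemma that $\sigma \subty \alpha$ implies every top-level value of $\sigma$ is a tree and descend accordingly, or (b) phrase parts (1) and (3) via the total functions $f_n$, $h_{\bar{x},e}$ and prove monotonicity of these functions semantically.  I expect route (b) to be the smoother one, but either way the crux is handling the ``shape mismatch'' between a type and its subtype — exactly the phenomenon the $\DD$ versus $\DD'$ example was flagging.  Everything else is a routine recombination using monotonicity of the type constructors $|$, $,$, $n[\cdot]$, $(\cdot)^*$ under $\subty$ and transitivity of $\subty$.
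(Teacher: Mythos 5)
You have located the crux correctly (the shape mismatch between a type and its subtype), and parts (1) and (2) of your plan are essentially sound. Your semantic characterization of $f_n$ as a top-level filter is a correct alternative to the paper's argument, which instead observes that $f_n$ is the homomorphic extension of the obviously monotone atom map $\alpha \mapsto n[\tau]$ (if $\alpha = n[\tau]$) or $\emptyseq$ (otherwise) and appeals to a general theorem about homomorphic extensions. Your part (2) case analysis matches the paper's induction on $e$; the $\bar{x}$ and $\bar{x}/\kchild$ cases are in fact easier than you fear, since the syntax of environments forces $\Gamma'(\bar{x})$ to be atomic, and an atomic subtype of $n[\tau]$ can only be $n[\tau']$ with $\tau' \subty \tau$.

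The genuine gap is part (3), which is precisely where the paper's real work lies. Your structural induction on $\tau_0$ stalls in the non-atomic cases where $\tau_1'$ does not mirror $\tau_0$: for $\tau_0 = \sigma^*$ the hypothesis $\tau_1' \subty \sigma^*$ yields no components to which the induction hypothesis applies (the paper's own counterexample: $a,a \subty a^*$ but $a,a \not\subty a$), and similarly for $\tau_0 = \sigma_1,\sigma_2$ a subtype need not decompose as a concatenation of subtypes of the $\sigma_i$. Your fallback --- ``argue via the functional reading of $h_{\bar{x},e}$ together with a monotonicity argument'' --- is circular as stated, because downward monotonicity of $h_{\bar{x},e}(\Gamma,-)$ in its type argument \emph{is} statement (3). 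What is missing is the mechanism the paper supplies in the appendix: prove, by mutual induction on $e$ together with part (2), that $h_{\bar{x},e}(\Gamma,-)$ is the partial homomorphic extension of the atom-level function $g_e(\Gamma,\bar{x}{:}(-))$, and prove the general theorem (\refThm{dm-hom}) that the homomorphic extension of a downward-monotone partial function on atoms is downward monotone; that theorem in turn rests on the language-theoretic development $L(\tau) = \{\omega \in \Atoms^* \mid \omega \subty \tau\}$, the equivalence $\tau \subty \tau' \iff L(\tau) \subseteq L(\tau')$, and the commutation $\bigcup L[\hat{h}[L(\tau)]] = L(\hat{h}(\tau))$. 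So your ``route (b)'' is the right instinct, but for part (3) it is exactly the deferred work, and it cannot be localized to an induction on the type: the atom-level monotonicity used at the base of the commutation argument is part (2) for the same expression $e$, so the induction must be organized around $e$, as the paper does. (A small symptom of the trouble: in your atomic base case, the sub-case $\rho^* \subty \alpha$ is vacuous, since $\emptyseq \in \SB{\rho^*}$ but $\emptyseq \notin \SB{\alpha}$, and the proposed ``re-wrap'' would produce a starred output type that is not below the target --- re-wrapping along the structure of $\tau_1'$ does not by itself track the required output subtyping.)
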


The downward monotonicity lemma is \emph{almost} easy to prove by
direct structural induction (simultaneously on all judgments).  The
cases for (2) involving expression-directed typechecking are all
straightforward inductive steps; however, for the cases involving
type-directed judgments, the induction steps do not go through.  The
difficulty is illustrated by the following cases.  For derivations of
the form
\[
\infer{\tylab{\tau_1^*}{n}{\tau_2^*}}{\tylab{\tau_1}{n}{\tau_2}}
\qquad 
\infer{\wfin{\Gamma}{x}{\tau_1^*}{e}{\tau_2^*}}{\wfin{\Gamma}{x}{\tau_1}{e}{\tau_2}}
\]
we are stuck: knowing that $\tau_1' \subty \tau_1^*$ does not
necessarily tell us anything about a subtyping relationship between
$\tau_1'$ and $\tau_1$.  For example, if $\tau_1' = aa$ and $\tau_1 =
a$, then we have $aa \subty a^*$ but not $aa \subty a$.  Instead, we
need to proceed by an analysis of regular expression types and
subtyping.

\if 0
In the remaining parts of this section we develop the necessary
properties of regular tree languages and show how to use them to prove
downward monotonicity and algorithmic completeness.

\fi

We briefly sketch the argument, which involves an excursion into the
theory of regular languages over partially ordered alphabets.  Here,
the ``alphabet'' is the set of atomic types and the regular sets are
the sets of sequences of atomic types that are subtypes of a type
$\tau$.  The \emph{homomorphic extension} of a (possibly partial)
function $h : \Atoms \pto \Types$ on atomic types is defined as
\[
\begin{array}{rclcrclcrcl}
\hat{h}(\emptyseq) &=& \emptyseq&&
\hat{h}(\alpha) &=&  h(\alpha)&& \hat{h}(\tau^*) &=&  \hat{h}(\tau)^* \\
\hat{h}(\tau_1,\tau_2) &=&  \hat{h}(\tau_1),\hat{h}(\tau_2) &&
\hat{h}(\tau_1|\tau_2) &=& \hat{h}(\tau_1)|\hat{h}(\tau_2)&&
\hat{h}(X) &=& \hat{h}(E(X))
\end{array}\]
(Note again that this definition is well-founded, since type variables
cannot be expanded indefinitely.)  If $h$ is partial, then $\hat{h}$
is defined only on types whose atoms are in $\dom(h)$.  We can then
show the following general property of partial homomorphic extensions:
\begin{lemma}
  If $h:\Atoms\pto \Types$ is downward monotonic, then its homomorphic
  extension $\hat{h}: \Types \pto \Types$ is downward monotonic.
\end{lemma}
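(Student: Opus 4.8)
The plan is to pass to denotations and reduce the claim to a fact about regular languages over the partially ordered alphabet $(\Atoms,\subty)$. Because a type variable never occurs at the top level of a definition, each sequence type $\tau$ unfolds to an honest regular expression over $\Atoms$; write $\mathcal{R}(\tau)\subseteq\Atoms^{*}$ for its language, defined by $\mathcal{R}(\emptyseq)=\{\varepsilon\}$, $\mathcal{R}(\alpha)=\{\alpha\}$, $\mathcal{R}(\tau_{1},\tau_{2})=\mathcal{R}(\tau_{1})\cdot\mathcal{R}(\tau_{2})$, $\mathcal{R}(\tau_{1}|\tau_{2})=\mathcal{R}(\tau_{1})\cup\mathcal{R}(\tau_{2})$, $\mathcal{R}(\tau^{*})=\mathcal{R}(\tau)^{*}$, $\mathcal{R}(X)=\mathcal{R}(E(X))$ (well-founded, just as the auxiliary judgements terminate). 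For $w=\alpha_{1}\cdots\alpha_{n}$ and a partial $g:\Atoms\pto\Types$ set $g^{\sharp}(w)=\SB{g(\alpha_{1})},\dots,\SB{g(\alpha_{n})}$, the slot-wise concatenation of forest sets (and $g^{\sharp}(\varepsilon)=\{\emptyseq\}$). A routine induction on $\tau$ then yields two decompositions: $\SB{\tau}=\bigcup_{w\in\mathcal{R}(\tau)}\id^{\sharp}(w)$ always, and $\SB{\hat{h}(\tau)}=\bigcup_{w\in\mathcal{R}(\tau)}h^{\sharp}(w)$ whenever $\hat{h}(\tau)$ is defined; moreover $\hat{h}(\tau)$ is defined iff every atom occurring in a word of $\mathcal{R}(\tau)$ lies in $\dom(h)$, so (downward monotonicity making $\dom(h)$ downward closed) definedness of $\hat{h}(\tau)$ propagates to $\hat{h}(\tau')$ once each atom of $\mathcal{R}(\tau')$ is known to lie below an atom of $\mathcal{R}(\tau)$.

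The combinatorial core is a \emph{domination} property of these languages: if $\SB{\tau'}\subseteq\SB{\tau}$, then for every $w'=\alpha'_{1}\cdots\alpha'_{n}\in\mathcal{R}(\tau')$ there is $w=\alpha_{1}\cdots\alpha_{n}\in\mathcal{R}(\tau)$ of the same length with $\alpha'_{i}\subty\alpha_{i}$ for each $i$. Granting this, the lemma drops out: take $u\in\SB{\hat{h}(\tau')}$; the decomposition gives $w'=\alpha'_{1}\cdots\alpha'_{n}\in\mathcal{R}(\tau')$ and a split $u=u_{1},\dots,u_{n}$ with $u_{i}\in\SB{h(\alpha'_{i})}$; domination gives $w=\alpha_{1}\cdots\alpha_{n}\in\mathcal{R}(\tau)$ with $\alpha'_{i}\subty\alpha_{i}$; downward monotonicity of $h$ gives $h(\alpha'_{i})\subty h(\alpha_{i})$, hence $u_{i}\in\SB{h(\alpha_{i})}$ and $u\in h^{\sharp}(w)\subseteq\SB{\hat{h}(\tau)}$; so $\SB{\hat{h}(\tau')}\subseteq\SB{\hat{h}(\tau)}$, i.e.\ $\hat{h}(\tau')\subty\hat{h}(\tau)$. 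The same domination property also discharges parts~(1) and~(3) of the downward monotonicity lemma, by instantiating $h$ with the projection underlying $\tylab{\cdot}{n}{\cdot}$ and with the map $\alpha\mapsto\tau$ read off $\wfalg{\Gamma,\bar{x}{:}\alpha}{e}{\tau}$, which is downward monotonic by the (simultaneous) induction hypothesis.

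The real work, and the step I expect to be the obstacle, is the domination property: a structural induction on types is hopeless here, since (as already observed) $\tau'\subty\tau_{1}^{*}$ — e.g.\ $a[],a[]\subty a[]^{*}$ — carries no structural information about $\tau'$ versus $\tau_{1}$. Instead I would reason with a finite automaton $\mathcal{A}$ recognising $\mathcal{R}(\tau)$ over the finite sub-alphabet of atoms occurring in $\tau$ and $E$, and run it on a fixed $w'=\alpha'_{1}\cdots\alpha'_{n}\in\mathcal{R}(\tau')$ in a ``$\subty$-relaxed'' mode, permitting a $\beta$-transition on input letter $\alpha'_{i}$ whenever $\alpha'_{i}\subty\beta$; the labels of an accepting relaxed run are exactly the $w$ required. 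To prove some relaxed run accepts, one argues contrapositively: if none does, then — choosing, component by component, representative trees $\bar{v}_{i}\in\SB{\alpha'_{i}}$ that simultaneously defeat every one of the finitely many run-shapes of $\mathcal{A}$ — one produces a forest $\bar{v}_{1},\dots,\bar{v}_{n}$ in $\SB{\tau'}$ but not in $\SB{\tau}$, contradicting the hypothesis. Getting a single family of representatives to beat all runs at once is the delicate point; it is here that one needs the theory of regular languages over posets (closure under the Nerode congruence of $\mathcal{A}$, together with the XDuce subtyping procedure as a black box). The base cases $\tau\in\{\emptyseq,\alpha\}$, and the bookkeeping around degenerate atomic types with empty denotation, are handled directly from the clauses for $\hat{h}$ and downward monotonicity at the level of atoms.
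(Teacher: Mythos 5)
Your overall reduction (unfold $\tau$ into a word language $\mathcal{R}(\tau)$ over $\Atoms$, decompose $\SB{\hat{h}(\tau)}$ as $\bigcup_{w\in\mathcal{R}(\tau)}h^{\sharp}(w)$, then transfer words of $\tau'$ to words of $\tau$) is a genuinely different route from the paper's, but its load-bearing step fails: the domination property is false. Take $\tau'=a[b[]\,|\,c[]]$ and $\tau=a[b[]]\,|\,a[c[]]$. These denote the same set $\{a[b[]],a[c[]]\}$, so certainly $\SB{\tau'}\subseteq\SB{\tau}$; the one-letter word $w'=a[b[]|c[]]$ lies in $\mathcal{R}(\tau')$, but $\mathcal{R}(\tau)=\{a[b[]],\,a[c[]]\}$ and $a[b[]|c[]]$ is a subtype of neither letter, so no word of $\mathcal{R}(\tau)$ dominates $w'$ pointwise (indeed not even letterwise, which also undercuts the definedness transfer you sketch). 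The culprit is that the ``letters'' are not join-irreducible: $n[\tau_1|\tau_2]$ and $n[\tau_1]|n[\tau_2]$ are equivalent types, so a single letter can lie below a union of letters without lying below any one of them. For the same reason the contrapositive automaton argument cannot be completed: in the example, every choice of representative tree for $a[b[]|c[]]$ (namely $a[b[]]$ or $a[c[]]$) is accepted by one of the two relaxed runs, so no family of representatives defeats all runs simultaneously, and yet no single dominating run exists. Failure of domination simply does not produce a forest in $\SB{\tau'}\setminus\SB{\tau}$, and no Nerode-congruence bookkeeping restores the uniformity you need.

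The paper's proof never needs anything like domination. It works with a saturated language $L(\tau)$ in which each letter is closed downwards, $L(\alpha)=\{\alpha'\mid\alpha'\subty\alpha\}$, so that language containment is aligned with $\subty$ by construction ($L(\tau)$ is characterized as $\{\omega\mid\omega\subty\tau\}$, whence $\tau'\subty\tau$ yields $L(\tau')\subseteq L(\tau)$ with no matching of unsaturated words against each other). Downward monotonicity of $h$ is then used only letterwise, inside a commutation lemma $\bigcup L[\hat{h}[L(\tau)]]=L(\hat{h}(\tau))$: in the atomic case, $\bigcup_{\alpha'\subty\alpha}L(h(\alpha'))$ collapses to $L(h(\alpha))$ precisely because $h(\alpha')\subty h(\alpha)$. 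Downward monotonicity of $\hat{h}$ then follows by chaining these facts with the monotonicity of the lifted operations. If you want to keep your $h^{\sharp}$-style decomposition, you would have to carry out the word-level argument on such downward-saturated words (or argue directly on values, as the paper's auxiliary lemmas do); the pointwise domination between unsaturated words on which your argument rests is simply not available.
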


It then suffices to show that $f_n$ and $h_{\bar{x},e}$ are partial
homomorphic extensions of downward monotone functions on atomic types;
for $f_n$, the required function is simple and obviously monotone, and
for $h_{\bar{x},e}(\Gamma,-)$, the required generating function is
$g_e(\Gamma,x{:}(-))$.  Thus, we need to show that $g_e$ and
$h_{\bar{x},e}$ are downward monotonic and that
$h_{\bar{x},e}(\Gamma,-)$ is the partial homomorphic extension of
$g_e(\Gamma,x{:}(-))$ simultaneously by mutual induction.  This,
finally, is a straightforward induction over derivations.  More
detailed proofs are included in the appendix.

\if 0
\begin{theorem}[Algorithmic Completeness]\labelThm{completeness}
  ~
  \begin{enumerate}
  \item If $\wf{\Gamma}{e}{\tau}$ then there exists $\tau' \subty
    \tau$ such that $\wfalg{\Gamma}{e}{\tau'}$.
  \item If $\wfin{\Gamma}{x}{\tau_1}{e}{\tau_2}$ then there exists
    $\tau_2' \subty \tau_2$ such that
    $\wfinalg{\Gamma}{x}{\tau_1}{e}{\tau_2'}$.
  \item If $\wfupd{\Gamma}{a}{\tau_1}{s}{\tau_2}$ then there exists
    $\tau_2' \subty \tau_2$ such that
    $\wfupdalg{\Gamma}{a}{\tau_1}{s}{\tau_2'}$
  \item If $\wfiter{\Gamma}{\tau_1}{s}{\tau_2}$ then there exists
    $\tau_2' \subty \tau_2$ such that
    $\wfiteralg{\Gamma}{\tau_1}{s}{\tau_2'}$
  \end{enumerate}
\end{theorem}
\begin{proof}[Proof sketch]
  Proof is by induction on the structure of derivations.  Most cases
  are straightforward; the interesting cases are those in which the
  output type of one derivation is the input to another, such as
  $\klet$, $\kfor$, or the update sequencing rule.  In such cases, we
  need to appeal to \refThm{downward-monotonicity}.
\end{proof}
The completeness theorem guarantees that we can decide the
typechecking judgment by using the algorithmic subtyping judgment to
find the ``least'' type for an expression or update (if any), then
checking whether the result is contained in the desired result type.

The approach to typechecking presented here relies on the assumption
that types are known for all variables in scope (and for the initial
database state, for updates).  In the absence of this assumption, type
inference appears trickier, but is not crucial in a database update
setting, because of the absence of function types and because we
generally know the schema of an XML database statically.
Nevertheless, the ability to typecheck a query or update in isolation
from a database schema is potentially useful, and type inference and
principal types should be investigated.

A related issue is that it would be useful to identify conditions
which ensure that an update still works if the type of the initial
state is relaxed; that is, under what conditions does
$\wfupd{\Gamma}{a}{\tau_1}{s}{\tau_2}$ and $\tau_1 \subty \tau_1'$
imply that $\wfupd{\Gamma}{a}{\tau_1'}{s}{\tau_2'}$ for some $\tau_2'
\supty \tau_2$?
\fi

\section{Update language}\labelSec{update-language}

We now introduce the core \Flux update language, which extends the
syntax of queries with statements $s \in \Stmt$, procedure names $P
\in \PSym$, tests $\phi \in \Test$, directions $d \in \Dir$, and
two new cases for programs:
\begin{eqnarray*}
  s &::=& \kskip \mid s;s' \mid \ifthenelse{e}{s}{s'} \mid\letin{x=e}{s} \mid P(\vec{e})\\
  &\mid & \kinsert~e \mid \kdelete \mid \krename~n \mid \snapshot{x}{s}  \mid \phi?s \mid d[s]\\
  \phi &::=& n \mid * \mid \kbool \mid \kstring \qquad d ::= \kleft\mid \kright \mid \kchildren \mid \kiter\\
  p &::=& \cdots \mid \update{s}{\tau}{ \tau'} \mid \declareprocedure{\procdecl{P(\vec{x}:\vec{\tau})}{\tau}{\tau'}}{s}{p}
\end{eqnarray*}
Updates include standard programming constructs such as the no-op
$\kskip$, sequential composition, conditionals, and $\klet$-binding.
The basic update operations include insertion $\kinsert~e$, which
inserts a value into an empty part of the database; deletion
$\kdelete$, which deletes part of the database; and $\krename~n$,
which renames a part of the database provided it is a single tree.
The ``snapshot'' operation $\snapshot{x}{s}$ binds $x$ to part of the
database and then applies an update $s$, which may refer to $x$.  Note
that the snapshot operation is the only way to read from the
current database state.

Updates also include \emph{tests} $\phi?s$ which test the top-level
type of a singular value and conditionally perform an update,
otherwise do nothing.  The node label test $n?s$ checks whether the
tree is of type $n[\tau]$, and if so executes $s$; the wildcard test
$*?s$ checks that the value is a tree.  Similarly, $\kbool?s$ and
$\kstring?s$ test whether a value is a boolean or string.
The $?$ operator binds tightly; for example, $\phi?s;s' =
(\phi?s);s'$.

Finally, updates include \emph{navigation} operators that change the
selected part of the tree, and perform an update on the sub-selection.
The $\kleft$ and $\kright$ operators perform an update (typically, an
$\kinsert$) on the empty sequence located to the left or right of a
value.  The $\kchildren$ operator applies an update to the child list
of a tree value.  The $\kiter$ operator applies an update to each tree
value in a forest.

We distinguish between \emph{singular} (unary) updates which apply
only when the context is a tree value and \emph{plural} (multi-ary)
updates which apply to a sequence.  Tests $\phi?s$ are always
singular.  The $\kchildren$ operator applies a plural update to all of
the children of a single node; the $\kiter$ operator applies a
singular update to all of the elements of a sequence.  Other updates
can be either singular or plural in different situations.  Our type
system tracks multiplicity as well as input and output types in order
to ensure that updates are well-behaved.

\Flux updates operate on a part of the database that is ``in focus'',
which helps ensure that updates are deterministic and relatively easy
to typecheck.  Only the navigation operations
$\kleft,\kright,\kchildren,\kiter$ can change the focus.  We lack
space to formalize the semantics of updates in the main body of the
paper; the semantics of updates is essentially the same as in
\cite{cheney07planx} except for the addition of procedures.

\subsection{Type system}\labelSec{update-types}
\begin{figure}[tb]
  \fbox{$\wfupd{\Gamma}{a}{\tau}{s}{\tau'}$}
  \[\begin{array}{c}
    \infer{\wfupd{\Gamma}{a}{\tau}{\kskip}{\tau}}{}
    \quad
    \infer{\wfupd{\Gamma}{a}{\tau}{s;s'}{ \tau''}}
    {\wfupd{\Gamma}{a}{\tau}{s}{\tau'} & \wfupd{\Gamma}{a}{\tau'}{s'}{ \tau''}}
    \quad
    \infer{\wfupd{\Gamma}{a}{\tau_1}{\letin{x=e}{s}}{\tau_2}}
    {\wf{\Gamma}{e}{\tau} & \wfupd{\Gamma,x{:}\tau}{a}{\tau_1}{s}{\tau_2}}
    \smallskip\\
    \infer{\wfupd{\Gamma}{a}{\tau}{\ifthenelse{e}{s}{s'}}{\tau_1 | \tau_2}}
    {\wf{\Gamma}{e}{\kbool} & \wfupd{\Gamma}{a}{\tau}{s}{\tau_1} & \wfupd{\Gamma}{a}{\tau}{s'}{\tau_2}}
    \quad
    \infer{\wfupd{\Gamma}{a}{\tau}{\snapshot{x}{s}}{\tau'}}{\wfupd{\Gamma,x{:}\tau}{a}{\tau}{s}{\tau'}}
    \smallskip\\
    \infer{\wfupd{\Gamma}{*}{\emptyseq}{\kinsert~e}{\tau}}
    {\wf{\Gamma}{e}{\tau}}
    \quad
    \infer{\wfupd{\Gamma}{a}{\tau}{\kdelete}{\emptyseq}}
    {}
    \quad
    \infer{\wfupd{\Gamma}{1}{n'[\tau]}{\krename~n}{n[\tau]}}
    {}
    \smallskip\\
    \infer{\wfupd{\Gamma}{1}{\alpha}{\phi?s}{\tau}}
    {\alpha \subty \phi & \wfupd{\Gamma}{1}{\alpha}{s}{\tau}}
    \quad
    \infer{\wfupd{\Gamma}{1}{\alpha}{\phi?s}{\alpha}}
    {\alpha \not\subty \phi}
    \quad
    \infer{\wfupd{\Gamma}{1}{n[\tau]}{\kchildren[s]}{n[\tau']}}
    {\wfupd{\Gamma}{*}{\tau}{s}{\tau'}}
    \smallskip\\
    \infer{\wfupd{\Gamma}{a}{\tau}{\kleft[s]}{\tau',\tau}}
    {\wfupd{\Gamma}{*}{\emptyseq}{s}{\tau'}}
    \quad
    \infer{\wfupd{\Gamma}{a}{\tau}{\kright[s]}{\tau,\tau'}}
    {\wfupd{\Gamma}{*}{\emptyseq}{s}{\tau'}}
    \quad
    \infer{\wfupd{\Gamma}{*}{\tau}{\kiter[s]}{\tau'}}
    {\wfiter{\Gamma}{\tau}{s}{\tau'}}
    \smallskip\\
    \infer{\wfupd{\Gamma}{a}{\tau_1}{s}{\tau_2}}
    { 
      \wfupd{\Gamma}{a}{\tau_1}{s}{\tau_2'} & 
      \tau_2' \subty \tau_2}
    \quad
    \infer{\wfupd{\Gamma}{a}{\sigma_1}{P(\vec{e})}{\sigma_2}}
    {
      \procdecl{P(\vec{\tau})}{\sigma}{\sigma_2}\in \Delta & 
      \sigma_1 \subty \sigma &
      \wf{\Gamma}{\vec{e}}{\vec{\tau}}
    }
  \end{array}\]
  \fbox{$\wfiter{\Gamma}{\tau}{s}{\tau'}$}
  \[\begin{array}{c}
    \infer{\wfiter{\Gamma}{\emptyseq}{s}{\emptyseq}}{}
    \quad
    \infer{\wfiter{\Gamma}{\alpha}{s}{\tau}}
    {\wfupd{\Gamma}{1}{\alpha}{s}{\tau}}
\quad
    \infer{\wfiter{\Gamma}{X}{s}{\tau}}
    {\wfiter{\Gamma}{E(X)}{s}{\tau}}
    \quad
    \infer{\wfiter{\Gamma}{\tau_1^*}{s}{\tau_2^*}}
    {\wfiter{\Gamma}{\tau_1}{s}{\tau_2}}
    \smallskip\\
    \infer{\wfiter{\Gamma}{\tau_1,\tau_2}{s}{\tau_1',\tau_2'}}
    {\wfiter{\Gamma}{\tau_1}{s}{\tau_1'}
      &
      \wfiter{\Gamma}{\tau_2}{s}{\tau_2'}}
\quad
    \infer{\wfiter{\Gamma}{\tau_1|\tau_2}{s}{\tau_1'|\tau_2'}}
    {\wfiter{\Gamma}{\tau_1}{s}{\tau_1'}
      &
      \wfiter{\Gamma}{\tau_2}{s}{\tau_2'}}
  \end{array}\]
  \fbox{$\wfprog{\Gamma}{p}$}
\[
\begin{array}{c}
\infer{\wfprog{\Gamma}{\update{s}{\tau_1}{\tau_2}}}{\wfupd{\Gamma}{*}{\tau_1}{s}{\tau_2}}
\quad
    \infer{\wfprog{\Gamma}{\declareprocedure{\procdecl{P(\vec{x}~{:}~\vec{\tau})}{\tau_1}{\tau_2}}{s}{p}}}{
\begin{array}{c}
\text{$P$ not declared in $p$}\\
      \procdecl{P(\vec{\tau})}{\sigma_1}{\sigma_2}\in \Delta  \quad
        \wfupd{\Gamma,\vec{x}{:}\vec{\tau} }{*}{\sigma_1}{s}{\sigma_2} \quad 
      \wfprog{\Gamma}{p}
\end{array}
}
  \end{array}\]
    \caption{Update and additional program well-formedness rules}\labelFig{update-program-wf}
\end{figure}

In typechecking updates, we extend the global declaration context
$\Delta$ with procedure declarations:
\[
\Delta ::= \cdots \mid \Delta,\procdecl{P(\vec{\tau})}{\tau_1}{\tau_2}
\]
There are two typing judgments for updates: singular well-formedness
$\wfupd{\Gamma}{1}{\alpha}{s}{\tau'}$ (that is, in type environment
$\Gamma$, update $s$ maps tree type $\alpha$ to type $\tau'$), and
plural well-formedness $\wfupd{\Gamma}{*}{\tau}{s}{\tau'}$ (that is,
in type environment $\Gamma$, update $s$ maps type $\tau$ to type
$\tau'$).  Several of the rules are parameterized by a multiplicity $a
\in \{1,*\}$.  In addition, there is an auxiliary judgment
$\wfiter{\Gamma}{\tau}{s}{\tau'}$ for typechecking iterations.  The
rules for update well-formedness are shown in \refFig{update-program-wf}.  We
also need an auxiliary subtyping relation involving atomic types and
tests: we say that $\alpha \subty \phi$ if $\SB{\alpha} \subseteq
\SB{\phi}$.  This is characterized by the  rules:
\[\infer{\kbool \subty \kbool}{}
\quad
\infer{\kstring \subty \kstring}{}
\quad 
\infer{n[\tau] \subty  n}{}\quad \infer{n[\tau] \subty *}{}\]

\begin{remark}
In most other XML update proposals (including
XQuery!~\cite{DBLP:conf/edbtw/GhelliRS06} and the draft XQuery Update
Facility~\cite{xquery-update-w3c-072006}), side-effecting update
operations are treated as \emph{expressions} that return $\emptyseq$.
Thus, we could perhaps typecheck such updates as expressions of type
$\emptyseq$.  This would work fine as long as the types of values
reachable from the free variables in $\Gamma$ can never change;
however, the updates available in these languages can and do change
the values  of variables.  Thus, to make this approach
sound $\Gamma$ would to be updated to take these changes into account,
perhaps using a judgment $\Gamma \vdash s : \emptyseq \mid \Gamma'$,
where $\Gamma'$ is the updated type environment reflecting the types
of the variables after update $s$.  This approach quickly becomes
difficult to manage, especially if it is possible for different
variables to ``alias'', or refer to overlapping parts of the data
accessible from $\Gamma$, and adding side-effecting functions further
complicates matters.

This is \emph{not} the approach to update typechecking that is taken
in \Flux.  Updates are syntactically distinct from queries, and a
\Flux update typechecking judgment such as
$\wfupd{\Gamma}{a}{\tau}{s}{\tau'}$ assigns an update much richer type
information that describes the type of part of the database before and
after running $s$.  The values of variables bound in $\Gamma$ are
immutable in the variable's scope, so their types do not need to be
updated.  Similarly, procedures must be annotated with expected input
and output types.  We do not believe that these annotations are
burdensome in a database setting since a typical update procedure
would be expected to preserve the (usually fixed) type of the
database.
\end{remark}

\subsection{Examples}\labelSec{update-examples}

The interesting rules are those involving $\kiter$, tests, and
$\kchildren$, $\kleft/\kright$, and $\kinsert/\krename/\kdelete$.  The
following example should help illustrate how the rules work for these
constructs.  Consider the high-level update:
\begin{verbatim}
insert after a/b value c[]
\end{verbatim}
which can be compiled to the following core \Flux statement:
\[\kiter~[a?\kchildren~[\kiter~[b?~\kright~\kinsert~c[]]]]\]
Intuitively, this update inserts a $c$ after every $b$ under a
top-level $a$.  Now consider the input type $a[b[]^*,c[]],d[]$.
Clearly, the output type \emph{should} be $a[(b[],c[])^*,c[]],d[]$.
To see how \Flux can assign this type to the update, consider the
derivation shown in \refFig{update-example}.

\begin{figure}[tb]
\[
\infer{\wfupd{}{*}{a[b[]^*,c[]],d[]}{\kiter~[a?\kchildren[s]]}{a[(b[],c[])^*,c[]],d[]}}
{
  \infer{\wfiter{}{a[b[]^*,c[]],d[]}{a?\kchildren[s]}{a[(b[],c[])^*,c[]],d[]}}
  {
    \infer{\wfiter{}{a[b[]^*,c[]]}{a?\kchildren[s]}{a[(b[],c[])^*,c[]]}}
    {
      \infer{\wfupd{}{1}{a[b[]^*,c[]]}{\kchildren[s]}{a[(b[],c[])^*,c[]]}}
      {
        \infer{\wfupd{}{*}{b[]^*,c[]}{\kiter~[b?s']}{(b[],c[])^*,c[]}}
        {
          \infer{\wfiter{}{b[]^*,c[]}{b?s'}{(b[],c[])^*,c[]}}
          {
            \infer{\wfiter{}{b[]^*}{b?s'}{(b[],c[])^*}}
            {
              \infer{\wfiter{}{b[]}{b?s'}{b[],c[]}}
              {
                \infer{\wfupd{}{1}{b[]}{b?s'}{b[],c[]}}
                {
                  \infer{\wfupd{}{1}{b[]}{\kright~\kinsert~c[]}{b[],c[]}}
                  {
                    \infer{\wfupd{}{*}{\emptyseq}{\kinsert~c[]}{b[],c[]}}
                    {
                      \hyp{\wf{}{c[]}{c[]}}
                    }
                  }
                }
              }
            }
          }
          &
          \infer{\wfiter{}{c[]}{b?s'}{c[]}}
          {
            \infer{\wfupd{}{1}{c[]}{b?s'}{c[]}}{}
          }
        }
      }
    }
    & 
    \infer{\wfiter{}{d[]}{a?\kchildren[s]}{d[]}}
    {}
  }
}
\]
\caption{Example update derivation, where $s' = \kright~\kinsert~c[]$
  and $s = \kiter~[b?s']$}\labelFig{update-example}
\end{figure}
\begin{figure}[tb]
\if 0
\[
\infer{\wfupd{x{:}\kstring}{*}{\Tree}{\kiter[\kchildren[\kiter[s;s']]}{\Tree}}
{\infer{\wfiter{x{:}\kstring}{\tree[...]}{\kchildren[\kiter[s;s']]}{\Tree}}
  {
    \infer{\wfupd{x{:}\kstring}{1}{\tree[...]}{\kchildren[\kiter[s;s']]}{\Tree}}
    {
      \deduce{\vdots}
      {
        \infer{\wfupd{x{:}\kstring}{*}{\leaf[...]|\node[...]}{\kiter[s;s']}{\Tree}}
        {
          \deduce{\wfupd{x{:}\kstring}{*}{\leaf[...]|\node[...]}{s}{\Tree}}
          {
            \vdots
          }
          & 
          \deduce{\wfupd{x{:}\kstring}{*}{\leaf[...]|\node[...]}{s'}{\Tree}}
          {
            \vdots
          }
        }
      }
    }
  }
}
\]
\fi
\[
\infer{\wfupd{x{:}\kstring}{*}{\node[\Tree^*]}{\node?\kchildren[\kiter[\leafupd(x)]]}{\node[\Tree^*]}}
{\infer{\wfupd{x{:}\kstring}{*}{\node[\Tree^*]}{\kchildren[\kiter[\leafupd(x)]]}{\node[\Tree^*]}}
  {\infer{\wfupd{x{:}\kstring}{*}{\Tree^*}{\kiter[\leafupd(x)]}{\Tree^*}}
    {\infer{\wfiter{x{:}\kstring}{\Tree^*}{\leafupd(x)}{\Tree^*}}
      {\infer{\wfiter{x{:}\kstring}{\Tree}{\leafupd(x)}{\Tree}}
        {\infer{\wfiter{x{:}\kstring}{\tree[\leaf[\kstring]|\node[\Tree^*]]}{\leafupd(x)}{\Tree}}
          {\infer{\wfupd{x{:}\kstring}{1}{\tree[\leaf[\kstring]|\node[\Tree^*]]}{\leafupd(x)}{\Tree}}
            {
              \procdecl{\leafupd(\kstring)}{\Tree}{\Tree} \in \Delta
              & 
              \tree[...] \subty \Tree
              & 
              \wf{x{:}\kstring}{x}{\kstring}
            }
          }
        }
      }
    }
  }
}
\]
\caption{Partial derivation for declaration of $\leafupd$}
\labelFig{update-example2}
\end{figure}

\if 0
first note that
the following can be derived for any $\tau,\tau',s$:
\[
\infer{\wfupd{}{*}{a[\tau],d}{\kiter~[a?s]}{a[\tau'],d}}{
  \wfupd{}{1}{a[\tau]}{s}{a[\tau']}
}
\iff 
\infer{\wfupd{}{*}{a[\tau],d}{\kiter~[a?s]}{a[\tau'],d}}{
  \infer{\wfiter{}{a[\tau],d}{a?s}{a[\tau'],d}}
  {\infer{\wfiter{}{a[\tau]}{a?s}{a[\tau']}}
    {
      \wfupd{}{1}{a[\tau]}{s}{a[\tau']}
    }
    & 
    \infer{\wfiter{}{d}{a?s}{d}}
    {}
  }
}
\]
Using the rule with $s=\kchildren[s']$, we have
\[
\infer{\wfupd{}{1}{a[\tau]}{\kchildren[s']}{a[\tau']}}{\wfupd{}{*}{\tau}{s'}{a[\tau']}}
\]
so we only need to check that $\kiter~[b?\kleft~\kinsert~c[]]$ maps
type $b[]^*,c[]$ to $(b[],c[])^*,c[]$.  This is also an instance of a
derivable rule
\[\infer{\wfupd{}{*}{b[]^*,c[]}{\kiter~[b?s]}{\tau^*,c[]}}
{\wfupd{}{1}{b[]}{s}{\tau}}\]
Hence, we now need to show only that $\kright~\kinsert~c[]$ maps type
$b$ to $b,c$, which is immediate:
\[
\infer{\wfupd{}{1}{b[]}{\kright~\kinsert~c[]}{b[],c[]}}
{\infer{\wfupd{}{1}{\emptyseq}{\kinsert~c}{b[],c[]}}
  {\wf{}{c[]}{c[]}}}
\]
\fi

As a second example, consider the procedure declaration
\[
\begin{array}{l}
  \declareprocedure{\procdecl{\leafupd(x{:}\kstring)}{\Tree}{\Tree}}{\\
\qquad\kiter[\kchildren[\kiter[\leaf?\kchildren[\kdelete;\kinsert~x];\\
\qquad\qquad\qquad\qquad~~~\node?\kchildren[\kiter[\leafupd(x)]]]]]\\
}
\end{array}
\]
This procedure updates all leaves of a tree to $x$.  As with the
recursive query discussed in \refSec{query-examples}, this procedure requires
subtyping to typecheck the recursive call.  We also need subtyping to
check that the return type of the expression matches the declaration.
A partial typing derivation for part of the body of the procedure
involving a recursive call is shown in \refFig{update-example2}.

\subsection{Decidability}\labelSec{update-decidability}

To decide typechecking, we must again carefully control the use of
subsumption. The appropriate algorithmic typechecking judgment is
defined as follows:
\begin{definition}[Algorithmic derivations for updates]
  The algorithmic typechecking judgments
  $\wfupdalg{\Gamma}{a}{\tau}{s}{\tau'}$ and
  $\wfiteralg{\Gamma}{\tau}{s}{\tau'}$ are obtained by taking the
  rules in \refFig{update-program-wf}, removing both subsumption
  rules, and replacing the procedure call rule with
\[
    \infer{\wfupdalg{\Gamma}{a}{\tau}{P(\vec{e})}{\sigma'}}
    {
      \procdecl{P(\vec{\sigma})}{\sigma}{\sigma'}\in \Delta & 
      \tau \subty \sigma & 
      \wfalg{\Gamma}{\vec{e}}{\vec{\tau}}
      & \vec{\tau} \subty \vec{\sigma}
    }
\]
Moreover, all subderivations of expression judgments in an algorithmic
derivation of an update judgment must be algorithmic.
\end{definition}

The proof of completeness of algorithmic update typechecking has the
same structure as that for queries.  We state the main results; proof
details are in the appendix.

\begin{lemma}[Decidabilty for updates]\labelLem{decidability-upd}
  Let $a,s$ be given.  Then there exist computable
  functions $j_{a,s}$ and $k_s$ such that:
  \begin{enumerate}
  \item $j_{a,s}(\Gamma,\tau)$ is the unique $\tau_2$ such that
    $\wfupdalg{\Gamma}{a}{\tau_1}{s}{\tau_2}$, if it exists.
  \item $k_{s}(\Gamma,\tau_1)$ is the unique $\tau_2$ such that
    $\wfiteralg{\Gamma}{\tau_1}{s}{\tau_2}$, if it exists.
  \end{enumerate}
\end{lemma}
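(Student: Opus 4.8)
The plan is to follow the same recipe as \refLem{decidability} for queries, now extended to the update-specific rules. I would define the families $j_{a,s}$ and $k_s$ simultaneously by recursion on the size of $s$, with an inner recursion on the structure of the focus type in the clauses for $k_s$. The point is that once both subsumption rules are deleted, the algorithmic rules of \refFig{update-program-wf} are syntax-directed: for a fixed multiplicity $a$ and statement $s$, the outermost form of $s$ determines which rule (if any) can conclude $\wfupdalg{\Gamma}{a}{\tau_1}{s}{\tau_2}$, and the rules for $\wfiteralg{\Gamma}{\tau_1}{s}{\tau_2}$ are directed by the outermost form of $\tau_1$ (atom, $\emptyseq$, $\tau,\tau'$, $\tau|\tau'$, $\tau^*$, or $X$). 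Reading each rule top-down turns it into a defining equation, e.g. $j_{*,\kiter[s]}(\Gamma,\tau)=k_s(\Gamma,\tau)$; $j_{1,\kchildren[s]}(\Gamma,n[\tau])=n[\,j_{*,s}(\Gamma,\tau)\,]$, undefined when the input is not a single-element type; $k_s(\Gamma,\tau_1^*)=k_s(\Gamma,\tau_1)^*$; $k_s(\Gamma,X)=k_s(\Gamma,E(X))$; $k_s(\Gamma,\alpha)=j_{1,s}(\Gamma,\alpha)$; and so on. Embedded expression premises $\wf{\Gamma}{e}{\tau}$ (in $\kinsert$, $\klet$, conditionals, procedure calls) are resolved by the functions $g_e$ of \refLem{decidability}, and every subtyping side-condition ($\tau\subty\sigma$, $\vec{\tau}\subty\vec{\sigma}$, $\alpha\subty\phi$) is decidable — the first two as a black box, the last by the four obvious rules.

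Two things need care. First, the pair of test rules, with premises $\alpha\subty\phi$ and $\alpha\not\subty\phi$, is still deterministic because these conditions are complementary and decidable, so exactly one branch is selected. Second, I must verify termination of the recursion. The clauses for $j_{a,s}$ on a compound $s$ invoke $j$ or $k$ only at \emph{strict} subterms of $s$ (and note $\kiter$ occurs only at multiplicity $*$, so $j_{1,\kiter[s]}$ is simply undefined and cannot loop back); the clauses for $k_s$ recurse either on strictly smaller types, on $E(X)$ for a type variable $X$, or hand off to $j_{1,s}$ on an atomic input. The only non-structural step, $k_s(\Gamma,X)=k_s(\Gamma,E(X))$, terminates because type definitions contain no top-level type variables — precisely the argument already used for the query-side judgments (cf. Colazzo et al., Lem.~4.6). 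Hence the combined measure ``(size of $s$, structure of the focus type)'' is well-founded, and $j_{a,s}$, $k_s$ are computable partial functions.

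Uniqueness of the output then follows by a routine induction over algorithmic derivations: at each step the concluding rule is forced by the shape of $s$ (resp.\ of $\tau_1$), and its premises have uniquely determined outputs by the induction hypothesis and by \refLem{decidability} for the expression premises; and the procedure declaration for a given $P$ in $\Delta$ is unique by assumption. The one genuine subtlety, and what I expect to be the main obstacle, is bookkeeping the mutual recursion between $j$ and $k$ cleanly — they call each other via $\kiter$ and via the atomic-input clause of $\wfiter$ — so one must be careful that the termination measure accounts for the multiplicity flip $*\mapsto 1$ and that no cycle arises through type-variable expansion combined with this flip. Once the measure is fixed as above, the remaining work is a mechanical case analysis mirroring the proof of \refLem{decidability}.
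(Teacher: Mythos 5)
Your proposal is correct and follows essentially the same route the paper takes (and leaves mostly implicit): once the subsumption rules are removed, the update rules are syntax-directed in $s$ and the $\kiter$ rules are directed by the top-level structure of the input type, so $j_{a,s}$ and $k_s$ are defined by mutual recursion whose termination rests on the absence of top-level type variables in definitions, exactly as for the query-side judgments. The only extra content you add is the explicit lexicographic measure handling the $k_s\to j_{1,s}$ hand-off at atomic types, which is a fair and correct elaboration of what the paper treats as routine.
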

\begin{theorem}[Algorithmic soundness for updates]\labelThm{soundness-upd}
  (1) If $\wfupdalg{\Gamma}{*}{\tau}{s}{\tau'}$ is derivable then $\wfupdalg{\Gamma}{*}{\tau}{s}{\tau'}$ is derivable.
  (2) If $\wfiteralg{\Gamma}{\tau}{e}{\tau'}$ is derivable then
  $\wfiter{\Gamma}{\tau}{e}{\tau'}$ is derivable.
\end{theorem}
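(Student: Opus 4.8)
The plan is to prove (1) and (2) simultaneously by induction on the structure of the given algorithmic derivation, invoking \refThm{soundness} to translate the embedded expression subderivations. Recall that, by the definition of algorithmic update derivations, the algorithmic system is obtained from \refFig{update-program-wf} by deleting the two subsumption rules and replacing the procedure-call rule, so every remaining update rule and every rule defining the iteration judgment $\wfiter{\Gamma}{\tau}{s}{\tau'}$ appears verbatim in both systems; moreover, by the side condition in that definition, every expression subderivation occurring inside an algorithmic update derivation is itself algorithmic, so \refThm{soundness} applies to it.

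The first group of cases consists of all rules shared by the two systems: $\kskip$, sequencing, conditionals, $\klet$, $\ksnapshot$, $\kinsert$, $\kdelete$, $\krename$, both test rules, $\kchildren$, $\kleft$, $\kright$, $\kiter$, and all six rules for the iteration judgment. In each such case I apply the induction hypothesis to every update- or iteration-typed premise, apply \refThm{soundness} to every expression-typed premise, and then re-apply the identical rule in the declarative system; the conclusion is unchanged, so this yields the required declarative derivation. For the type-variable rule $\wfiteralg{\Gamma}{X}{s}{\tau}$ one uses, as noted for \refFig{aux-wf}, that the expansion of $E(X)$ cannot continue indefinitely, so the premise $\wfiteralg{\Gamma}{E(X)}{s}{\tau}$ is a strict subderivation and the induction hypothesis applies to it.

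The only case that requires any work is the procedure-call rule, which genuinely differs between the two systems. Suppose the algorithmic derivation ends in
\[
\infer{\wfupdalg{\Gamma}{a}{\tau}{P(\vec{e})}{\sigma'}}
{\procdecl{P(\vec{\sigma})}{\sigma}{\sigma'}\in \Delta & \tau \subty \sigma & \wfalg{\Gamma}{\vec{e}}{\vec{\tau}} & \vec{\tau} \subty \vec{\sigma}}
\]
By \refThm{soundness} applied componentwise, $\wf{\Gamma}{e_i}{\tau_i}$ is derivable for each $i$, and since $\tau_i \subty \sigma_i$ the declarative subsumption rule for expressions gives $\wf{\Gamma}{e_i}{\sigma_i}$, that is, $\wf{\Gamma}{\vec{e}}{\vec{\sigma}}$. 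The declarative procedure-call rule of \refFig{update-program-wf}, instantiated with the same declaration $\procdecl{P(\vec{\sigma})}{\sigma}{\sigma'}\in\Delta$, the input-subtyping premise $\tau \subty \sigma$, and the expression premises $\wf{\Gamma}{\vec{e}}{\vec{\sigma}}$, then derives $\wfupd{\Gamma}{a}{\tau}{P(\vec{e})}{\sigma'}$, exactly the desired conclusion; note that no use of update-level subsumption is needed, as the coercions $\vec{\tau} \subty \vec{\sigma}$ are absorbed on the expression side. This exhausts the cases and completes the induction. As one expects of a soundness direction, there is no real obstacle here: the algorithmic system is a restriction of the declarative one, and the argument is essentially bookkeeping; the procedure-call rule is the one place the two systems truly diverge, and it is handled as above.
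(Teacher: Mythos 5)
Your proof is correct and is exactly the routine induction the paper has in mind: the paper states this soundness theorem without proof (treating it, like \refThm{soundness}, as straightforward), and your argument—re-applying shared rules, invoking \refThm{soundness} on the algorithmic expression subderivations, and absorbing the $\vec{\tau}\subty\vec{\sigma}$ coercions via expression-level subsumption in the procedure-call case—is the standard way to fill it in (you also correctly read the intended conclusion $\wfupd{\Gamma}{*}{\tau}{s}{\tau'}$ despite the typo in the theorem statement).
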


\begin{lemma}[Downward monotonicity for updates]
  \labelLem{downward-monotonicity-upd}
(1) If $\wfupdalg{\Gamma}{a}{\tau_1}{s}{\tau_2}$ and $\Gamma'
    \subty \Gamma$ and $\tau_1' \subty \tau_1$ then
    $\wfupdalg{\Gamma'}{a}{\tau_1'}{s}{\tau_2'}$ for some $\tau_2'
    \subty \tau_2$.
(2) If $\wfiteralg{\Gamma}{\tau_1}{s}{\tau_2}$ and $\Gamma' \subty
    \Gamma$ and $\tau_1' \subty \tau_1$ then
    $\wfiteralg{\Gamma'}{\tau_1'}{s}{\tau_2'}$ for some $\tau_2'
    \subty \tau_2$.
\end{lemma}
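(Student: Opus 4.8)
The plan is to follow the route already used for queries in \refLem{downward-monotonicity}. By \refLem{decidability-upd} we have the computable partial functions $j_{a,s}$ and $k_s$, and the two statements to be proved say exactly that $j_{a,s}$ is downward monotone in $\Gamma$ and in its input type, and likewise $k_s$. I would prove this by a simultaneous structural induction on the update $s$, establishing for each $s$ three things at once: (a) $j_{a,s}$ is downward monotone for both $a\in\{1,*\}$; (b) $k_s$ is downward monotone; and (c) the structural fact that $k_s(\Gamma,-)$ is the partial homomorphic extension of the atomic function $\alpha\mapsto j_{1,s}(\Gamma,\alpha)$. Recursion in procedures causes no difficulty here, since a procedure call $P(\vec e)$ is checked against its declared signature in $\Delta$ and the induction on $s$ never unfolds a procedure body.

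For (a) I would do a case analysis on $s$. All of the update-directed rules --- $\kskip$, $s;s'$, $\ifthenelse{e}{s}{s'}$, $\letin{x=e}{s}$, $\snapshot{x}{s}$, $\kchildren[s]$, $\kleft[s]$, $\kright[s]$, $\kinsert~e$, $\kdelete$, $\krename~n$, and $P(\vec e)$ --- go through by the induction hypothesis applied to the sub-updates, together with \refLem{downward-monotonicity}(2) for the embedded query subderivations (for the subexpression $e$ occurring in the $\klet$, $\kif$, $\kinsert$, and procedure-call rules), plus transitivity of $\subty$ to re-establish the side conditions $\tau'\subty\tau\subty\sigma$ and $\vec{\tau}'\subty\vec{\tau}\subty\vec{\sigma}$ in the algorithmic procedure-call rule; in every such case the declared output type is returned unchanged. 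The case $s=\kiter[s']$ reduces directly to downward monotonicity of $k_{s'}$, i.e.\ part (b) for $s'$. The one place that needs more than routine bookkeeping is the test $\phi?s$, whose two rules split on whether $\alpha\subty\phi$: here one needs a short argument reconciling that case split with atomic subtyping (when $\SB{\alpha'}\neq\emptyset$, $\alpha'\subty\alpha$ forces $\alpha$ and $\alpha'$ into the same $\phi$-class, so the same rule applies, and the degenerate empty case is handled separately).

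For (c) I would match the rules defining $\wfiter{\Gamma}{\tau}{s}{\tau'}$ against the clauses of the homomorphic-extension definition: the $\emptyseq$, $\tau^*$, $\tau_1,\tau_2$ and $\tau_1|\tau_2$ clauses correspond exactly, the $X$ clause unfolds $E(X)$ in both, and the atomic clause (which invokes $\wfupd{\Gamma}{1}{\alpha}{s}{\tau}$) is precisely $\alpha\mapsto j_{1,s}(\Gamma,\alpha)$; so $k_s(\Gamma,-)$ is the partial homomorphic extension of that atomic function, exactly as $h_{\bar x,e}(\Gamma,-)$ is of $g_e(\Gamma,x{:}(-))$ in the query proof. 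Given (c), the stated lemma on homomorphic extensions of downward-monotone functions delivers (b) from downward monotonicity of $\alpha\mapsto j_{1,s}(\Gamma,\alpha)$ on atomic inputs --- which is part (a) restricted to atoms --- together with the same regular-language observation as in the query case (every atom occurring in a subtype $\tau_1'\subty\tau_1$ is dominated by some atom of $\tau_1$, and the relevant domains are downward closed and grow as $\Gamma$ shrinks, so $k_s(\Gamma,\tau_1)$ defined implies $k_s(\Gamma',\tau_1')$ defined and $\subty k_s(\Gamma,\tau_1)$). The main obstacle is exactly the one flagged for \muXQ: the rule $\wfiter{\Gamma}{\tau_1^*}{s}{\tau_2^*}$ from $\wfiter{\Gamma}{\tau_1}{s}{\tau_2}$ defeats a naive structural induction, because $\tau_1'\subty\tau_1^*$ tells us nothing about $\tau_1'$ versus $\tau_1$ (e.g.\ $aa\subty a^*$ but $aa\not\subty a$), so (b) cannot be obtained directly and must be routed through the homomorphic-extension characterisation (c). Once that machinery is imported from the query development, nothing genuinely new is required beyond the extra bookkeeping for the multiplicity parameter, the test rule's case split, and threading \refLem{downward-monotonicity}(2) through the embedded expression judgments.
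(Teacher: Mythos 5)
Your proposal follows essentially the same route as the paper: it reduces the lemma to downward monotonicity of the functions $j_{a,s}$ and $k_s$, proves it by simultaneous induction with the key structural fact that $k_s(\Gamma,-)$ is the partial homomorphic extension of $j_{1,s}(\Gamma,-)$, and then invokes the general theorem that homomorphic extensions of downward-monotone atomic functions are downward monotone, threading the query-level downward monotonicity lemma through the embedded expression judgments. The extra attention you give to the $\phi?s$ case split is finer-grained than the paper's treatment (which dismisses the $j$ cases as straightforward), but it does not change the argument.
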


\begin{theorem}[Algorithmic completeness for updates]\labelThm{alg-completeness-updates}
(1) If $\wfupd{\Gamma}{a}{\tau_1}{s}{\tau_2}$ then there exists
    $\tau_2' \subty \tau_2$ such that
    $\wfupdalg{\Gamma}{a}{\tau_1}{s}{\tau_2'}$.
(2) If $\wfiter{\Gamma}{\tau_1}{s}{\tau_2}$ then there exists
    $\tau_2' \subty \tau_2$ such that
    $\wfiteralg{\Gamma}{\tau_1}{s}{\tau_2'}$.
\end{theorem}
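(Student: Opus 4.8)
The plan is to prove (1) and (2) simultaneously by induction on the declarative derivation, treating \refThm{alg-completeness} (completeness of algorithmic query typechecking) and \refLem{downward-monotonicity-upd} as established. Because expressions never contain updates, query completeness is available as a finished external fact rather than as part of a joint induction: an update derivation is taken apart only by the update and iteration rules of \refFig{update-program-wf}, and wherever an expression premise $\wf{\Gamma}{e}{\tau}$ occurs I immediately replace it by an algorithmic derivation $\wfalg{\Gamma}{e}{\tau'}$ with $\tau' \subty \tau$. The two update judgments have to be handled together because the $\kiter[s]$ rule has a premise in the iteration judgment and the $\alpha$-rule of the iteration judgment has a premise in the singular update judgment.

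Most cases are routine. If the derivation ends in the update subsumption rule, from $\wfupd{\Gamma}{a}{\tau_1}{s}{\tau_2'}$ and $\tau_2' \subty \tau_2$, the induction hypothesis gives $\tau_2'' \subty \tau_2'$ with $\wfupdalg{\Gamma}{a}{\tau_1}{s}{\tau_2''}$, and transitivity of $\subty$ closes it. For a procedure call $P(\vec e)$ with $\procdecl{P(\vec\tau)}{\sigma}{\sigma_2} \in \Delta$ and premises $\sigma_1 \subty \sigma$ and $\wf{\Gamma}{e_i}{\tau_i}$, query completeness yields $\wfalg{\Gamma}{e_i}{\tau_i'}$ with $\tau_i' \subty \tau_i$, and the algorithmic procedure rule then applies directly --- its side conditions $\sigma_1 \subty \sigma$ and $\tau_i' \subty \tau_i$ already hold --- with output the declared $\sigma_2$, so the conclusion needs no weakening. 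The axioms $\kskip$, $\kdelete$, $\krename~n$ and the ``$\alpha \not\subty \phi$'' test rule are immediate. For the other update rules --- $\kinsert~e$, $\ifthenelse{e}{s}{s'}$, $\snapshot{x}{s}$, the ``$\alpha \subty \phi$'' test, $\kchildren[s]$, $\kleft[s]$, $\kright[s]$, $\kiter[s]$ --- and all the iteration rules --- $\emptyseq$, $\alpha$, $X$, $\tau_1^*$, $\tau_1,\tau_2$, $\tau_1|\tau_2$ --- the input types passed to the premises are fixed, so one applies the induction hypothesis (and, for $\kinsert~e$ and the conditional, query completeness, the residual subtyping becoming the output subtyping in the first case and, in the second, being absorbed into the rule's check on the boolean scrutinee) and reassembles, using monotonicity of the constructors $n[-]$, $(-,-)$, $(-\,|\,-)$, $(-)^*$ under $\subty$ together with transitivity to recover the output subtyping. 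The $\alpha$ iteration case reduces to part~(1) of the induction hypothesis through its $\wfupd{\Gamma}{1}{\alpha}{s}{\tau}$ premise, and the $X$ case terminates because type definitions contain no top-level type variables.

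The only genuinely structural difficulty is that the algorithmic subderivation of a statement produces a \emph{smaller} output type than its declarative counterpart, so when that output is the input of a second subderivation the second must be re-run on the smaller type. This arises in exactly two rules: sequencing $s;s'$, where the intermediate type produced by $s$ is the input of $s'$, and $\letin{x=e}{s}$, where the type inferred for $e$ becomes the binding of $x$ while checking $s$. In both, part~(1) of \refLem{downward-monotonicity-upd} rescues the argument: from the algorithmic derivation of the second part on the declarative type and the subtyping between the declarative and algorithmic intermediate types --- for $\klet$, from $\Gamma,x{:}\tau' \subty \Gamma,x{:}\tau$ --- it produces an algorithmic derivation on the smaller type with a still-smaller output, and transitivity finishes the case. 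I expect the only real obstacle to be the bookkeeping at these two junctions: picking the correct instance of downward monotonicity and orienting the subtyping inequalities. All the conceptual work has already been absorbed into \refLem{downward-monotonicity-upd} itself, whose proof --- like that of its query analogue --- needs the excursion into regular languages over the partially ordered alphabet of atomic types, precisely because the iteration rule for $\tau_1^*$ is not stable under arbitrary subtyping of $\tau_1^*$; granting that lemma and query completeness, the theorem reduces to a routine structural induction.
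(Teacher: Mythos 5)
Your proposal is correct and follows essentially the same route as the paper: an induction over the declarative derivation in which query completeness handles embedded expression premises, most cases reassemble directly, and \refLem{downward-monotonicity-upd} is invoked exactly where the output of one subderivation feeds the input of another (sequencing and $\klet$), with all the regular-language/homomorphism work already packaged inside the monotonicity lemma. The only difference is organizational --- the paper folds queries and updates into one simultaneous statement while you treat query completeness as an external fact --- which is harmless since expressions do not contain updates.
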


\section{Related and future work}\labelSec{related-and-future-work}

This work is directly motivated by our interest in using regular
expression types for XML updates, using richer typing rules for
iteration as found in \muXQ~\cite{colazzo06jfp}.  Fernandez, Sim\'eon
and Wadler~\cite{fernandez01icdt} earlier considered an XML query
language with more precise typechecking for iteration, but this
proposal required many more type annotations than XQuery, \muXQ or
\Flux do; we only require annotations on function or procedure
declarations.

For brevity, the core languages in this paper omitted many features of
full XQuery, such as the descendant, attribute, parent and sibling axes.
The attribute axis is straightforward since attributes always have
text content.  In \muXQ, the descendant axis was supported by
assigning $\bar{x}/\text{\texttt{descendant-or-self}}$ the type formed
by taking the union of all tree types that are reachable from the type
of $\bar{x}$.  XQuery handles other axes by discarding type
information.  Our algorithmic completeness proof still appears to work
if these axes are added.

We are also interested in extending the path
correctness analysis introduced by Colazzo et al.  to \Flux.  In the
update setting, a natural form of path correctness might be that there
are no statically ``dead'' updates.

\Flux represents a fundamental departure from the other XML update
language proposals of which we are aware (such as
XQuery!~\cite{ghelli07icdt} and the draft W3C XQuery Update
Facility~\cite{xquery-update-w3c-072006}).  To the best of our
knowledge, static typechecking and subtyping have yet to be considered
for such languages and seem likely to encounter difficulties for
reasons we outlined in \refSec{update-types} and discussed in more
depth in~\cite{cheney07planx}.  In addition, \Flux satisfies many
algebraic laws that can be used to rewrite updates without first
needing to perform static analysis, whereas a sophisticated analysis
needs to be performed in XQuery! even to determine whether two query
expressions can be reordered.  We believe that this will enable
aggressive update optimizations.

On the other hand, XQuery! and related proposals are clearly more
expressive than \Flux, and have been incorporated into XML database
systems such as Galax~\cite{DBLP:conf/vldb/FernandezSCMS03}.  Although
we currently have a prototype that implements the typechecking
algorithm described here as well as the operational semantics
described in \cite{cheney07planx}, further work is needed to develop a
robust implementation inside an XML database system that could be used
to compare the scalability and optimizability of \Flux with other
proposals.

\section{Conclusions}\labelSec{concl}

Static typechecking is important in a database setting because type
(or ``schema'') information is useful for optimizing queries and
avoiding expensive run-time checks or re-validation.  The XQuery
standard, like other XML programming languages, employs regular
expression types and subtyping.  However, its approach to typechecking
iteration constructs is imprecise, due to the use of ``factoring''
which discards information about the order of elements in the result
of an iteration operation such as a $\kfor$-loop.  While this
imprecision may not be harmful for typical queries, it is disastrous
for typechecking updates that are supposed to preserve the type of the
database.

In this paper we have considered more precise typing disciplines for
XQuery-style iterative queries and updates in the core languages \muXQ
and \Flux respectively.  In order to ensure that these type systems
are well-behaved and that typechecking is decidable, it is important
to prove the completeness of an algorithmic presentation of
typechecking in which the use of subtyping rules is limited so that
typechecking remains syntax-directed.  We have shown how to do so for
the core \muXQ and \Flux languages, and believe the proof technique
will extend to handle other features not included in the paper.  These
results provide a solid foundation for subtyping in XML query and
update languages with precise iteration typechecking rules and for
combining them with other XML programming paradigms based on regular
expression types.

\bibliographystyle{plain}
\bibliography{fp,db,xml,xupdate,paper}

\newpage
\appendix

\newpage
\appendix

\if 0

\section{Semantics of Queries and Updates}

Following \citet{colazzo06jfp}, we distinguish between \emph{tree
  values} $t \in \Tree$, which include strings $w \in \Sigma^*$ (for
some alphabet $\Sigma$), boolean values $\ktrue,\kfalse \in \Bool$,
and singleton trees $n[v]$; and \emph{(forest) values} $v \in \Val =
\Tree^*$, which are sequences of tree values:
\[\begin{array}{lrcl}
  \text{Tree values} & t &::=& n[v] \mid w \mid \ktrue \mid \kfalse
  \smallskip\\
  \text{(Forest) values} & v&::=& \emptyseq \mid t,v 
\end{array}\]
We overload the set membership symbol $\in$ for trees and forests:
that is, $t \in v$ means that $t$ is a member of $v$ considered as a
list.  Two forest values can be concatenated by concatenating them as
lists; abusing notation, we identify trees $t$ with singleton forests
$t,\emptyseq$ and write $v, v'$ for forest concatenation.  We define a
comprehension operation on forest values as follows:
\begin{eqnarray*}
  {}[f(x) \mid x \in \emptyseq] &=& \emptyseq\\
  {}[f(x) \mid x \in t,v] &=& f(t), [f(x) \mid x \in v]
\end{eqnarray*}
This operation takes a forest $(t_1,\ldots,t_n)$ and a function $f(x)$
from trees to forests and applies $f$ to each tree $t_i$,
concatenating the resulting forests in order.  Comprehensions satisfy
basic monad laws as well as some additional equations (see
\cite{fernandez01icdt}).  We write $\eq$ for equality of tree or
forest values; this is just structural equality.

An environment is a pair of functions $\sigma : (\Var \to
\Val)\times(\TVar \to \Tree)$.  Abusing notation, we write $\sigma(x)$
for $\pi_1(\sigma)(x)$ and $\sigma(\bar{x})$ for
$\pi_2(\sigma)(\bar{x})$; similarly, $\sigma[x:=v]$ and
$\sigma[\bar{x}:=t]$ denote the corresponding environment updating
operations.  The semantics of expressions $\SB{e}$ is defined as a
partial function from environments to values, as shown in
\refFig{query-semantics}.  The function $\ichildren : \Val \to \Val$
maps a singleton tree $n[f]$ to its child list and maps other values
to the empty sequence.  We write $\eval{\sigma}{e}{v}$ when $v =
\SB{e}\sigma$ exists.

\begin{figure}[p]
  \begin{eqnarray*}
    \ichildren(n[f]) &=& f\\
    \ichildren(v) &=& \emptyseq \quad (v \not\eq n[v'])
  \end{eqnarray*}
  \[\begin{array}{rclcrcl}
    \SB{\ktrue}\sigma &=& \ktrue&&
    \SB{\kfalse}\sigma &=& \kfalse\\
    \SB{\emptyseq}\sigma &=& \emptyseq&&
    \SB{e,e'}\sigma &=& \SB{e}\sigma, \SB{e'}\sigma\\
    \SB{n[e]}\sigma &=& n[\SB{e}\sigma]&&
    \SB{w}\sigma &=& w\\
    \SB{x}\sigma &=& \sigma(x)&&
    \SB{\bar{x}}\sigma &=& \sigma(\bar{x})\\
  \end{array}\]
  \begin{eqnarray*}
    \SB{\letin{x=e_1}{e_2}}\sigma &=& \SB{e_2}\sigma[x:=\SB{e_1}\sigma]\\
    \SB{\ifthenelse{c}{e_1}{e_2}}\sigma &=& 
    \left\{\begin{array}{ll}
        \SB{e_1}\sigma & \SB{c}\sigma \eq \ktrue\\
        \SB{e_2}\sigma & \SB{c}\sigma \eq \kfalse\\
      \end{array}\right.\\
    \SB{e = e'}\sigma &=& \left\{\begin{array}{ll} 
        \ktrue & \SB{e}\sigma \eq \SB{e'}\sigma
        \\
        \kfalse & \SB{e}\sigma \not\eq \SB{e'}\sigma
      \end{array}\right.\\
    \SB{e::n}\sigma &=& [n[v] \mid n[v] \in \SB{e}\sigma]\\
    \SB{\bar{x}/\kchild}\sigma &=& \ichildren(\sigma(\bar{x}))\\
    \SB{\forreturn{\bar{x} \in e_1}{e_2}}\sigma &=&[\SB{e_2}\sigma[\bar{x}:=t] \mid t \in \SB{e_1}\sigma]\\
\SB{F(\vec{e})}(\sigma) &=& \SB{\delta(F)}(\SB{\vec{e}}\sigma)
  \end{eqnarray*}
  \caption{Semantics of query expressions.}\labelFig{query-semantics}
\fbox{$\evalu{\sigma}{v}{s}{v'}$}
  \[\begin{array}{c}
    \infer{\evalu{\sigma}{v}{\kskip}{v}}{}
    \quad
    \infer{\evalu{\sigma}{v}{s;s'}{v_2}}
    {\evalu{\sigma}{v}{s}{v_1} & 
      \evalu{\sigma}{v_1}{s'}{v_2}}
    \quad
    \infer{\evalu{\sigma}{v}{\ifthenelse{e}{s_1}{s_2}}{v'}}
    {\eval{\sigma}{e}{\ktrue}
      & \evalu{\sigma}{v}{s_1}{v'}}
    \smallskip\\
    \infer{\evalu{\sigma}{v}{\ifthenelse{e}{s_1}{s_2}}{v'}}
    {\eval{\sigma}{e}{\kfalse}
      & \evalu{\sigma}{v}{s_2}{v'}}
    \quad
    \infer{\evalu{\sigma}{v_1}{\letin{x=e}{s}}{v_2}}
    {\eval{\sigma}{e}{v}
      &
      \evalu{\sigma[x:=v]}{v_1}{s}{v_2}
    } 
     \smallskip\\
    \quad
    \infer{\evalu{\sigma}{\emptyseq}{\kinsert~e}{v}}
    {\eval{\sigma}{e}{v}}
    \quad\infer{\evalu{\sigma}{v}{\kdelete}{\emptyseq}}
    {}
    \quad 
    \infer{\evalu{\sigma}{n'[v]}{\krename~n}{n[v]}}
    {}
    \smallskip\\ 
    \infer{\evalu{\sigma}{v}{\snapshot{x}{s}}{v'}}
    {\evalu{\sigma[x:=v]}{v}{s}{v'}}
    \quad\infer{\evalu{\sigma}{t}{\phi?s}{v}}
    {t \in \SB{\phi} & \evalu{\sigma}{t}{s}{v}}
    \quad
    \infer{\evalu{\sigma}{t}{\phi?s}{t}}
    {t \not\in \SB{\phi}}
    \smallskip\\
    \infer{\evalu{\sigma}{n[v]}{\kchildren[s]}{n[v']}}
    {\evalu{\sigma}{v}{s}{v'}}
    \quad
    \infer{\evalu{\sigma}{v}{\kleft[s]}{v',v}}
    {\evalu{\sigma}{\emptyseq}{s}{v'}}
    \quad
    \infer{\evalu{\sigma}{v}{\kright[s]}{v,v'}}
    {\evalu{\sigma}{\emptyseq}{s}{v'}}
     \smallskip\\
    \infer{\evalu{\sigma}{t_1,v_2}{\kiter[s]}{v_1',v_2'}}
    {\evalu{\sigma}{t_1}{S}{v_1'}
      & 
      \evalu{\sigma}{v_2}{\kiter[s]}{v_2'}}
    \quad\infer{\evalu{\sigma}{\emptyseq}{\kiter[s]}{\emptyseq}}{}
  \end{array}\]
  \caption{Operational semantics of
    updates.}\labelFig{lux-core-semantics}
\end{figure}

\refFig{lux-core-semantics} shows the operational semantics of Core
\Flux.  We write $\evalu{\sigma}{v}{s}{v'}$ to indicate that given
environment $\sigma$ and context value $v$, statement $s$ updates $v$
to value $v'$.  The rules for tests are defined in terms of the
following semantic interpretation of tests:
\[\begin{array}{rcl}
  \SB{\kbool} &=& \Bool
  \smallskip\\
  \SB{\kstring} &=& \Sigma^*
  \smallskip\\
  \SB{n} &=& \{n[v] \mid v \in \Val\}
  \smallskip\\
  \SB{*} &=& \{n[v] \mid n \in \Lab,v \in \Val\}
\end{array}\]

\fi

\section{Proofs from Sections~\ref{sec:query-decidability} and
  \ref{sec:update-decidability}}

\subsection{Regular languages and homomorphisms}

We assume familiarity with the theory of regular expressions and
regular languages; in this case, we consider types $\tau \in \Types$
to be regular languages over \emph{atomic types} $\alpha \in \Atoms$.
The language $L(\tau)$ denoted by a type is therefore a set of
sequences $\omega \in \Atoms^*$ of atomic types, where $L:\Types \to
\Atoms^*$ is defined as follows:
\begin{eqnarray*}
L(\emptyseq) &=& \{\emptyseq\}\\
L(\alpha) &=& \{\alpha' \mid \alpha' \subty \alpha\}\\
L(\tau,\tau') &=& L(\tau)\bullet L(\tau')= \{\omega,\omega' \mid \omega \in L(\tau), \omega' \in L(\tau') \}\\
L(\tau | \tau') &=& L(\tau) \cup L(\tau')\\
L(\tau^*) &=& L(\tau)^* = \bigcup_{i=0}^\infty L(\tau)^n\\
L(X) &=& L(E(X))
\end{eqnarray*}
Note that this definition differs slightly from the usual definition
of the language of a regular expression, in that we include all
subtypes of atomic types $\alpha$ in $L(\alpha)$.

It is straightforward to show the following useful properties of $L$:
\begin{lemma}\labelLem{l-is-atomic-subtypes}
  $L(\tau) = \{\omega \mid \omega \subty \tau\}$
\end{lemma}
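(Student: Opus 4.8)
The plan is to prove the two inclusions $L(\tau)\subseteq\{\omega\mid\omega\subty\tau\}$ and $\{\omega\mid\omega\subty\tau\}\subseteq L(\tau)$ separately, each by a well-founded induction on the structure of $\tau$ (with $\omega$ read as the type $\alpha_1,\ldots,\alpha_n$, so that $\omega\subty\tau$ means $\SB{\omega}\subseteq\SB{\tau}$). The recursion is well-founded for the same reason the auxiliary judgments of \refFig{aux-wf} terminate: a definition $\tau_0$ has no top-level type variables, and we never descend into an element type $n[\tau']$, since $n[\tau']$ is itself an atom and is handled directly by the clause for $\Atoms$.

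The inclusion $L(\tau)\subseteq\{\omega\mid\omega\subty\tau\}$ is routine. For $\tau=\emptyseq$ it is immediate, and for $\tau=\alpha$ it is exactly the clause $L(\alpha)=\{\alpha'\mid\alpha'\subty\alpha\}$. For $\tau=\tau_1,\tau_2$, write $\omega=\omega_1,\omega_2$ with $\omega_i\in L(\tau_i)$; by the inductive hypothesis $\SB{\omega_i}\subseteq\SB{\tau_i}$, and since $\SB{\cdot}$ sends type concatenation to the forest-concatenation product $\bullet$, which is monotone, $\SB{\omega}\subseteq\SB{\tau_1,\tau_2}$. The cases $\tau_1|\tau_2$, $\tau_1^{*}$ (including the empty iteration, using $\{\emptyseq\}\subseteq\SB{\tau_1^{*}}$) and $X$ are identical in spirit, using that $\SB{\cdot}$ commutes with $\cup$, with Kleene star, and with variable unfolding.

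The converse is the substantive half, and I would again induct on $\tau$, assuming for the argument that every atom in $\omega$ is inhabited. The base cases and the disjunction case are easy: if $\SB{\omega}\subseteq\SB{\emptyseq}$ then $\omega=\emptyseq$, since a longer $\omega$ has a witness forest of length ${}>0$; if $\SB{\omega}\subseteq\SB{\alpha}$ then $\omega$ is a single atom $\alpha'$ with $\SB{\alpha'}\subseteq\SB{\alpha}$, i.e.\ $\alpha'\subty\alpha$, so $\omega\in L(\alpha)$; the variable case is one unfolding. The heart of the argument is the concatenation case $\tau=\tau_1,\tau_2$ (and, entirely parallel, the Kleene-star case): from $\SB{\alpha_1,\ldots,\alpha_n}\subseteq\SB{\tau_1}\bullet\SB{\tau_2}$ one must exhibit a \emph{single} split index $k$ such that $\SB{\alpha_1,\ldots,\alpha_k}\subseteq\SB{\tau_1}$ and $\SB{\alpha_{k+1},\ldots,\alpha_n}\subseteq\SB{\tau_2}$; the inductive hypothesis applied to the two halves then gives $\omega_1\in L(\tau_1)$, $\omega_2\in L(\tau_2)$, hence $\omega\in L(\tau_1,\tau_2)$. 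To pin down $k$ I would fix one witness forest $(\bar v_1,\ldots,\bar v_n)$ with $\bar v_i\in\SB{\alpha_i}$, record the split it is forced into inside $\SB{\tau_1}\bullet\SB{\tau_2}$, and then argue that this split in fact works \emph{uniformly} for every other choice of witnesses, exploiting that $\SB{\alpha_1,\ldots,\alpha_n}$ lying wholly inside $\SB{\tau_1}\bullet\SB{\tau_2}$ is an extremely strong constraint together with the fact that the denotation of an atom is tightly constrained at the level of individual trees.

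I expect this uniform-factorization property to be the main obstacle. It is genuinely a statement about regular forest languages over the partially ordered alphabet $\Atoms$, and it is the only point at which enlarging $L(\alpha)$ to all subtypes of $\alpha$ is really used; a bookkeeping induction alone does not suffice here. The behaviour of uninhabited atoms (which make $\SB{\omega}$ empty) also needs separate care. Once the factorization lemma is established, the remaining cases close immediately and assemble into the stated equality.
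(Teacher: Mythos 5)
Your forward inclusion is essentially the paper's, and in the converse you have put your finger on exactly the right spot: for $\tau=\tau_1,\tau_2$ (and $\tau^*$) one needs a \emph{uniform} split of the word $\omega$, and the paper's own proof does no more than assert this (``since $\omega$ is atomic we must have $\omega=\omega_1,\omega_2$ where $\omega_i\subty\tau_i$''). However, your text stops at a conjecture plus a strategy --- fix one witness forest, record its split, argue it works for every other witness --- so as it stands the proposal does not prove the lemma: the single step that required an argument is missing.

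More importantly, that step cannot be supplied, because it is false once an atom's denotation can straddle the factors, and the same phenomenon already defeats the disjunction case you marked as easy. Take $\alpha=a[b[]^?]$, so $\SB{\alpha}=\{a[],\,a[b[]]\}$. Then $\alpha$ is semantically equivalent to $a[b[]]\,|\,a[]$, hence $\alpha\subty a[b[]]\,|\,a[]$, yet $\alpha\notin L(a[b[]]\,|\,a[])$ because $\alpha\not\subty a[b[]]$ and $\alpha\not\subty a[]$; since the right-hand side $\{\omega\mid\omega\subty\tau\}$ depends only on $\SB{\tau}$ while $L$ does not, the stated equality cannot hold for both of these equivalent types. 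For your concatenation case the same atom gives $\alpha\subty(a[b[]]\,|\,\emptyseq),(a[]\,|\,\emptyseq)$, but neither split of the one-letter word is admissible: the witness $a[b[]]$ forces the letter into the left factor and the witness $a[]$ forces it into the right, so no uniform split index exists, and transporting one witness's split to all witnesses fails exactly where you predicted trouble. (Uninhabited atoms, which you flag, are a second, independent failure: an empty $\alpha$ satisfies $\alpha\subty\emptyseq$ but $\alpha\notin L(\emptyseq)$.) So the honest conclusion is not that a cleverer induction is needed, but that both your argument and the paper's require an extra hypothesis --- for instance restricting $\omega$ to atoms with singleton denotations, such as those produced by Lemma~\ref{lem:values-have-atomic-witnesses}, for which the uniform factorization does go through --- before the claimed equality can be established.
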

\begin{proof}
  For both directions, proof is by induction on the structure of
  $\tau$.  For the forward direction, we have:
\begin{itemize}
\item Case \emptyseq: immediate
\item Case $\alpha$: Suppose $\omega \in L(\alpha)$.  Clearly $\omega = \alpha' \subty \alpha$ for some atomic $\alpha'$.
\item Case $\tau_1,\tau_2$: Suppose $\omega \in L(\tau_1,\tau_2)$.  By
  definition, $\omega = \omega_1 , \omega_2$ where $\omega_i \in
  L(\tau_i)$ for $i \in \{1,2\}$.  Then by induction $\omega_i \subty
  \tau_i$ for $i \in \{1,2\}$.  Thus $\omega_1 ,
  \omega_2\subty\tau_1,\tau_2$.
\item Case $\tau_1|\tau_2$: Suppose $\omega \in L(\tau_1|\tau_2)$.  By
  definition, $\omega = \omega_i $ where $\omega \in L(\tau_i)$ for
  some $i \in \{1,2\}$.  Then by induction $\omega \subty \tau_i$ for
  some $i \in \{1,2\}$.  Thus $\omega\subty\tau_1|\tau_2$.
\item Case $\tau^*$: Suppose $\omega \in L(\tau^*)$. By definition,
  $\omega = \omega_1,\ldots,\omega_n $ where $n \geq 0$ and $\omega_i
  \in L(\tau)$ for all $i \in \{1,\ldots,n\}$.  Then by induction
  $\omega_i \subty \tau$ for all $i \in \{1,\ldots,n\}$.  Thus $\omega
  = \omega_1,\ldots,\omega_n\subty\tau,\ldots,\tau \subty \tau^*$.
\item Case $X$: Immediate by induction.
\end{itemize}
For the reverse direction, we have:
\begin{itemize}
\item Case \emptyseq: immediate, since we must have $\omega = \emptyseq \in L(\emptyseq)$
\item Case $\alpha$: Suppose $\omega \subty \alpha$.  Clearly $\omega
  = \alpha' \subty \alpha$ for some atomic $\alpha'$, so $\omega \in
  L(\alpha)$.
\item Case $\tau_1,\tau_2$: Suppose  $\omega \subty \tau_1,\tau_2$.  Then since $\omega$ is atomic we must have $\omega = \omega_1,\omega_2$ where $\omega_i \subty \tau_i$ for $i \in \{1,2\}$.  Thus $\omega = \omega_1,\omega_2 \in L(\tau_1)\bullet L(\tau_2) = L(\tau_1,\tau_2)$.
\item Case $\tau_1|\tau_2$: Since $\omega$ is atomic, $\omega \subty \tau_1 | \tau_2$ implies that $\omega \subty \tau_1$ or $\omega \subty \tau_2$.  Thus $\omega \in L(\tau_1) \cup L(\tau_2) = L(\tau_1|\tau_2)$.
\item Case $\tau^*$: Since $\omega$ is atomic, we must have $\omega =
  \omega_1,\ldots, \omega_n$ where $\omega_i \subty \tau$; hence $\omega =
  \omega_1,\ldots, \omega_n \in L(\tau)^* = L(\tau^*)$.
\item Case $X$: Immediate by induction.
\end{itemize}
\end{proof}

\begin{lemma}\labelLem{values-have-atomic-witnesses}
  If $v \in \SB{\tau}$, then there exists a $\omega \in L(\tau)$ such
  that $v \in \SB{\omega}$.
\end{lemma}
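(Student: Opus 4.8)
The plan is to prove the statement by structural induction on $\tau$, reading $\SB{\tau}$ through its least–fixed–point definition. The guiding observation is that the cases which could cause trouble are exactly the ones that do \emph{not} arise: for an atomic type $\alpha$ (including an element type $n[\tau_0]$), if $v \in \SB{\alpha}$ we simply take $\omega$ to be the length-one sequence over $\Atoms$ consisting of $\alpha$ itself; this $\omega$ lies in $L(\alpha)$ because $\alpha \subty \alpha$, and $\SB{\omega} = \SB{\alpha} \ni v$. So the lemma never needs to descend into the contents of an element type, and in particular there is no interleaved induction on $v$.

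The remaining cases are routine structural steps using the definitions of $\SB{-}$ and $L$. For $\tau = \emptyseq$ we have $v = \emptyseq$ and $\omega = \emptyseq$ works. For $\tau = \tau_1 | \tau_2$, $v \in \SB{\tau_i}$ for some $i$, and the witness $\omega$ given by the induction hypothesis for $\tau_i$ satisfies $\omega \in L(\tau_i) \subseteq L(\tau_1|\tau_2)$ and $v \in \SB{\omega}$. For $\tau = \tau_1,\tau_2$, the definition of $\SB{\tau_1,\tau_2}$ gives $v = v_1,v_2$ with $v_i \in \SB{\tau_i}$; the induction hypotheses yield $\omega_i \in L(\tau_i)$ with $v_i \in \SB{\omega_i}$, and then $\omega = \omega_1,\omega_2 \in L(\tau_1)\bullet L(\tau_2) = L(\tau_1,\tau_2)$ with $v = v_1,v_2 \in \SB{\omega_1,\omega_2} = \SB{\omega}$. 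The case $\tau = \tau_1^*$ is identical, decomposing $v = v_1,\ldots,v_n$ with $n \geq 0$ and each $v_j \in \SB{\tau_1}$ and concatenating the corresponding $\omega_j \in L(\tau_1)$. Finally, for $\tau = X$ we have $v \in \SB{X} = \SB{E(X)}$, and we invoke the induction hypothesis for $E(X)$, using $L(X) = L(E(X))$.

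The only point needing genuine care is the legitimacy of the induction in the variable case, since $E(X)$ is not a syntactic subterm of $X$ and type definitions may be (mutually) recursive. This is resolved by the standing restriction that a type definition $\tau_0$ has no top-level type variables: unfolding $X$ to $E(X)$ exposes only $\emptyseq$, atomic types, $|$, concatenation, and $*$ at the forest level, so no further top-level unfolding is possible along that branch. Hence the relation ``$\tau'$ is an immediate forest-level subterm of $\tau$, or $\tau = X$ and $\tau' = E(X)$'' is well-founded (this is the same finiteness observation used to justify $\hat h$ and the termination of the auxiliary judgments, cf.\ \cite[Lem.~4.6]{colazzo06jfp}), so the induction above is valid; alternatively one may induct directly on the rank of the least-fixed-point derivation of $v \in \SB{\tau}$, avoiding the syntactic bookkeeping. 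With the lemma established, \refLem{l-is-atomic-subtypes} lets us phrase the conclusion as: every value of type $\tau$ is already a value of some sequence $\omega$ of atomic types with $\omega \subty \tau$.
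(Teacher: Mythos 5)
Your proof is correct and follows essentially the same route as the paper's: a structural induction over $\tau$ with the same case analysis, the key point in both being that the atomic case takes $\omega = \alpha$ itself and never descends into element contents. The only difference is that you spell out the well-foundedness of the variable case (via the no-top-level-variables restriction, or rank of the least fixed point), which the paper leaves implicit with ``immediate by induction''.
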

\begin{proof}
  Induction on the structure of $v,\tau$.
\begin{itemize}
\item Case $\emptyseq,\emptyseq$: Immediate; $\omega = \emptyseq$ works.
\item Case $\bar{v},\alpha$: Immediate; $\omega$ = $\alpha$ works.
\item Case $v,(\tau_1,\tau_2)$: We must have $v = v_1,v_2$ where $v_i \in \SB{\tau_i}$, for $i \in \{1,2\}$.  Then by induction we have $\omega_i \in L(\tau_i)$ with $v_i \in \SB{\omega_i}$; this implies $v \in \SB{\omega_1,\omega_2} \subseteq \SB{\tau_1,\tau_2}$.
\item Case $v,\tau_1|\tau_2$: Without loss of generality, suppose $v \in \SB{\tau_i}$.  Then by induction we have $\omega \in L(\tau_i) \subseteq L(\tau_1|\tau_2)$ such that $v \in \SB{\omega} \subseteq \SB{\tau_1 | \tau_2}$.
\item Case $v,\tau^*$: If $v = \emptyseq$, then $\omega = \emptyseq$ works.
  Otherwise we must have $v = v_1,\ldots,v_n$ where $v_i \in
  \SB{\tau}$.  Then by induction we have $\omega_i \in L(\tau)$ with
  $v_i \in \SB{\omega}$; this implies that $\omega_1,\ldots,\omega_n \in L(\tau^*)$ and $v \in \SB{\omega_1,\ldots,\omega_n} \subseteq \SB{\tau^*}$.
\item Case $X$: Immediate by induction.
\end{itemize}
\end{proof}

\begin{lemma}\labelLem{l-monotonic}
  For any $\tau,\tau'$, $\tau \subty \tau'$ if and only if $ L(\tau)
  \subseteq L(\tau')$
\end{lemma}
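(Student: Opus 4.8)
The plan is to reduce both implications to the two preceding lemmas, \refLem{l-is-atomic-subtypes} and \refLem{values-have-atomic-witnesses}, together with transitivity of set inclusion; no fresh induction is needed, since all the inductive work has already been discharged. Throughout I use the fact that a word $\omega \in \Atoms^*$ is itself a sequence type, so that ``$\omega \subty \tau$'' is meaningful and, by definition of $\subty$, unfolds to $\SB{\omega} \subseteq \SB{\tau}$.

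For the forward direction, suppose $\tau \subty \tau'$ and let $\omega \in L(\tau)$. By \refLem{l-is-atomic-subtypes} we have $\omega \subty \tau$, i.e. $\SB{\omega} \subseteq \SB{\tau}$; composing with $\SB{\tau} \subseteq \SB{\tau'}$ yields $\SB{\omega} \subseteq \SB{\tau'}$, hence $\omega \subty \tau'$, and so $\omega \in L(\tau')$ by \refLem{l-is-atomic-subtypes} again. For the reverse direction, suppose $L(\tau) \subseteq L(\tau')$ and let $v \in \SB{\tau}$. This is where the content sits: by \refLem{values-have-atomic-witnesses} there is a word $\omega \in L(\tau)$ with $v \in \SB{\omega}$. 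By hypothesis $\omega \in L(\tau')$, so $\omega \subty \tau'$ by \refLem{l-is-atomic-subtypes}, i.e. $\SB{\omega} \subseteq \SB{\tau'}$; therefore $v \in \SB{\tau'}$. Since $v$ was arbitrary, $\SB{\tau} \subseteq \SB{\tau'}$, which is exactly $\tau \subty \tau'$.

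I expect no real obstacle here. The only mildly delicate point is the conceptual one of treating words over $\Atoms$ as types so that the chain $v \in \SB{\omega} \subseteq \SB{\tau'}$ type-checks; once $L$ and the atomic-witness lemma are in place, the argument is a two-line diagram chase in each direction. If anything, the ``hard part'' was already done in proving \refLem{values-have-atomic-witnesses}, which is what lets us pass from an arbitrary value of type $\tau$ to a single atomic word witnessing its membership.
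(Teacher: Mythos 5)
Your proof is correct and follows essentially the same route as the paper: the forward direction via the characterization $L(\tau)=\{\omega \mid \omega \subty \tau\}$ (\refLem{l-is-atomic-subtypes}) and transitivity, and the reverse direction via the atomic-witness lemma (\refLem{values-have-atomic-witnesses}) to pass from $v \in \SB{\tau}$ to a word $\omega \in L(\tau)$ with $v \in \SB{\omega} \subseteq \SB{\tau'}$. The only difference is that you cite \refLem{l-is-atomic-subtypes} explicitly where the paper uses it tacitly.
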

\begin{proof} 
  In the forward direction, if $\tau \subty \tau'$, then let $\omega
  \in L(\tau)$ be given.  Then $\omega \subty \tau \subty \tau'$.
  Thus, $\omega \in L(\tau')$.

  In the reverse direction, suppose that $L(\tau) \subseteq L(\tau')$.
  Suppose $v \in \SB{\tau}$.  Via
  \refLem{values-have-atomic-witnesses}, choose $\omega$ such that $v
  \in \SB{\omega}$ and $\omega \in L(\tau)$.  Since $L(\tau) \subseteq
  L(\tau')$, we have that $\omega \subty \tau'$, so $v \in \SB{\omega}
  \subseteq \SB{\tau'}$.  We conclude that $\SB{\tau} \subseteq
  \SB{\tau'}$ so by definition $\tau \subty \tau'$.
\end{proof}

We now recall properties of homomorphisms of regular type expressions.
A (partial) homomorphism $h : \Types \to \Types$ (or $h : \Types \pto
\Types$) is a (partial) function satisfying
\begin{eqnarray*}
h(\emptyseq) &=& \emptyseq\\
h(\tau,\tau') &=& h(\tau),h(\tau')\\
h(\tau|\tau') &=& h(\tau)|h(\tau')\\
h(\tau^*) &=& h(\tau)^*\\
h(X) &=& h(E(X))
\end{eqnarray*}
In particular, we consider (partial) homomorphisms that are generated
entirely by their behavior on atoms, that is, given a (partial)
function $k : \Atoms \to \Types$, we construct the unique (partial)
homomorphism $\hat{k}$ agreeing with $k$ by taking $\hat{k}(\alpha) =
k(\alpha)$ (when defined) and using the above equations in all other
cases.  

We say that a (partial) function $F : X \pto Y$ on ordered sets $X,Y$
is downward closed if whenever $x' \leq_X x$, and $F(x)$
exists, then $F(x')$ also exists; a downward closed function is
\emph{downward monotonic} if in addition $F(x') \leq_Y F(x)$.

In the following, we use the notation $F[-] : \PP(X) \pto \PP(Y)$ for
the partial function on sets obtained by lifting $F : X \pto Y$;
$F[S]$ is defined and equals $\{F(s) \mid s \in S\}$ provided $F$ is
defined on each element of $S$. It is easy to show that this operation
is downward monotonic with respect to set inclusion and preserves
totality (if $F$ is total then $F[-]$ is total also).

We need a second auxiliary function, namely the set of atoms appearing in 
a type.  This is given by $A : \Types \to \PP(\Atoms)$, defined as follows:
\begin{eqnarray*}
A(\emptyseq) &=& \{\}\\
A(\alpha) &=& \{\alpha' \mid \alpha' \subty \alpha\}\\
A(\tau,\tau') &=& A(\tau)\cup A(\tau')\\
A(\tau | \tau') &=& A(\tau) \cup A(\tau')\\
A(\tau^*) &=& A(\tau)\\
A(X) &=& A(E(X))
\end{eqnarray*}
The following fact about $A$ will be needed:
\begin{lemma}\labelLem{a-monotonic}
  If $\tau \subty \tau'$ then $A(\tau) \subseteq A(\tau')$.
\end{lemma}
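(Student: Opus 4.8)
The plan is to prove $A(\tau) \subseteq A(\tau')$ by induction on the derivation of $\tau \subty \tau'$ — but since $\subty$ is defined semantically (as $\SB{\tau} \subseteq \SB{\tau'}$) rather than by inference rules, I would instead reduce to the language characterization already established. Specifically, I would first prove the key link between $A$ and $L$: namely that $A(\tau) = \{\alpha \mid \alpha \in L(\tau)\}$, i.e. $A(\tau)$ is exactly the set of \emph{length-one} words in $L(\tau)$ (more precisely, the set of atomic types $\alpha$ such that the one-element sequence $\alpha$ lies in $L(\tau)$). This is a routine structural induction on $\tau$: in the $\emptyset$ case both sides are empty; in the $\alpha$ case both sides are $\{\alpha' \mid \alpha' \subty \alpha\}$ by definition; in the $\tau_1,\tau_2$ case a length-one word in $L(\tau_1)\bullet L(\tau_2)$ must split as a length-one word in one factor and $\emptyset$ in the other, but one has to be slightly careful here — actually $A(\tau_1,\tau_2) = A(\tau_1) \cup A(\tau_2)$ unconditionally, whereas the set of length-one words of $L(\tau_1,\tau_2)$ is only $A(\tau_1)$ when $\emptyset \in L(\tau_2)$ and vice versa, so the clean identity $A(\tau) = L(\tau) \cap \Atoms$ is in fact \emph{false}. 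I would therefore drop that idea.

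Instead I would prove the weaker but sufficient fact $A(\tau) \subseteq L(\tau)$, viewing atoms as one-letter words, again by induction on $\tau$ (every atom $\alpha' \in A(\tau)$ witnesses some subterm $\alpha$ with $\alpha' \subty \alpha$, and one checks that a single copy of $\alpha$ is reachable as a word in $L(\tau)$, using that $L(\tau^*)$ contains $L(\tau)$ and that $L(\tau_1,\tau_2)$ and $L(\tau_1|\tau_2)$ contain the corresponding words), together with the converse inclusion $L(\tau) \cap \Atoms \subseteq A(\tau)$ — i.e. every length-one word of $\tau$ is an atom of $\tau$ — which is also a straightforward structural induction. Combining these two gives $A(\tau) = L(\tau) \cap \Atoms$ after all, \emph{provided} we are careful that "length-one word" and "atom" are identified correctly; re-examining the $\tau_1,\tau_2$ case, if $\alpha' \in A(\tau_1)$ then since $L(\tau_2)$ is nonempty it contains some word $\omega_2$, but that word need not be empty, so $\alpha',\omega_2 \in L(\tau_1,\tau_2)$ has length $\geq 1$, not necessarily $1$. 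So the honest route is just: prove $A(\tau) \subseteq L(\tau)$ (atoms embed as words, not necessarily the only length-one ones), and separately that $L$ is monotone.

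With those in hand the theorem is immediate: assume $\tau \subty \tau'$. By \refLem{l-monotonic}, $L(\tau) \subseteq L(\tau')$. Let $\alpha \in A(\tau)$; by the embedding lemma $\alpha \in L(\tau) \subseteq L(\tau')$, i.e. the one-letter word $\alpha$ is a subtype of $\tau'$, so $\alpha \in L(\tau')$. It then remains to descend from "$\alpha \in L(\tau')$" back to "$\alpha \in A(\tau')$", which is exactly the converse inclusion $L(\tau') \cap \Atoms \subseteq A(\tau')$ proved by structural induction on $\tau'$ (the only subtle case being $\tau_1',\tau_2'$: a one-letter word in $L(\tau_1')\bullet L(\tau_2')$ forces one factor to contribute the empty word and the other the letter, and in either subcase induction places $\alpha$ in $A(\tau_1')$ or $A(\tau_2')$, hence in $A(\tau_1',\tau_2')$). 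Hence $\alpha \in A(\tau')$, giving $A(\tau) \subseteq A(\tau')$.

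The main obstacle is precisely the bookkeeping around the concatenation case: $A$ unions the atoms of both factors unconditionally, while membership of a one-letter word in $L(\tau_1,\tau_2)$ requires the other factor to accept $\emptyset$, so the naive "$A = L \cap \Atoms$" slogan needs the careful two-sided formulation above. Everything else — the monotonicity of $L$ (already available as \refLem{l-monotonic}) and the routine inductions — is mechanical.
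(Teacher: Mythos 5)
There is a genuine gap, and it sits exactly at the point you flagged but did not actually repair. Your assembled argument rests on the embedding claim $A(\tau) \subseteq L(\tau)$, read as ``every atom of $A(\tau)$ occurs in $L(\tau)$ as a one-letter word,'' and the final paragraph uses it in precisely that form (``by the embedding lemma $\alpha \in L(\tau) \subseteq L(\tau')$, i.e.\ the one-letter word $\alpha$ is a subtype of $\tau'$''). That claim is false, by the very pattern you noticed: for $\tau = a[],b[]$ we have $a[] \in A(\tau)$, but every word of $L(\tau) = L(a[]) \bullet L(b[])$ has length two, so $a[] \notin L(\tau)$. Your mid-proof parenthetical (``atoms embed as words, not necessarily the only length-one ones'') gestures at the correct statement --- every $\alpha' \in A(\tau)$ occurs as (a subtype of) a \emph{letter} of some word $\omega \in L(\tau)$, which needs the observation that every type here has nonempty language to get through the concatenation case --- but the concluding argument does not use that statement: it treats $\alpha$ itself as an element of $L(\tau')$ and then descends via $L(\tau') \cap \Atoms \subseteq A(\tau')$, an inclusion that only sees one-letter words and cannot recover an atom buried inside a longer word. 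So the proof as written does not go through, even though the lemma is true.

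The fix is to work with letters of words throughout rather than with one-letter words: prove (i) every $\alpha' \in A(\tau)$ is a subtype of some letter of some $\omega \in L(\tau)$, and (ii) every subtype of a letter of any $\omega \in L(\tau')$ lies in $A(\tau')$; both are routine structural inductions, and then \refLem{l-monotonic} ($\tau \subty \tau'$ implies $L(\tau) \subseteq L(\tau')$) finishes the argument. This is exactly what the paper does, in compact form: it observes $A(\tau) = \bigcup B[L(\tau)]$, where $B(\omega)$ collects all subtypes of the letters of the word $\omega$, and concludes because $L$, the lifting $B[-]$, and $\bigcup$ are each monotonic with respect to inclusion. So your overall strategy (reduce $A$-monotonicity to $L$-monotonicity) matches the paper's, but the bridge between $A$ and $L$ must be ``atoms occurring in words,'' not ``atoms as words''; with that substitution your outline becomes correct.
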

\begin{proof}
  Note that $A(\tau) = \bigcup B[L(\tau)]$ where $B : \Atoms^* \to
  \PP(\Atoms)$ is defined by 
\begin{eqnarray*}
B(\emptyseq) &=& \{\}\\
B(\alpha\omega)&  =&\{\alpha' \mid \alpha' \subty \alpha\} \cup B(\omega)
\end{eqnarray*}
and $\bigcup : \PP(\PP(\Atoms)) \to \PP(\Atoms)$ is the usual
flattening operator on sets.  All three functions $\bigcup,B[-],L$ are
monotonic.
\end{proof}

\begin{lemma}\labelLem{h-generated-by-a}
  Let $h : \Atoms \pto \Types$ be given.  If $h(\alpha)$ is defined
  for each $\alpha \in A(\tau)$ then $\hat{h}(\tau)$ is defined.
\end{lemma}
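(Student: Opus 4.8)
The plan is to argue by well-founded induction on the type $\tau$, using the same termination order that already justifies the recursive definitions of $A$ and of $\hat{h}$: since type definitions contain no top-level type variables, unfolding a variable $X$ to $E(X)$ terminates, so $E(X)$ counts as ``smaller'' than $X$. Fixing $h$, I would prove for all $\tau$ the implication that if $h(\alpha)$ is defined for every $\alpha \in A(\tau)$, then $\hat{h}(\tau)$ is defined.

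The base cases are immediate. For $\tau = \emptyseq$, $\hat{h}(\emptyseq) = \emptyseq$ is unconditionally defined. For $\tau = \alpha$, observe that $\alpha \subty \alpha$, so $\alpha \in A(\alpha)$; the hypothesis then gives that $h(\alpha)$ is defined, hence so is $\hat{h}(\alpha) = h(\alpha)$. For the compound cases I would simply read off the defining equations of $A$ and $\hat{h}$. When $\tau$ is $\tau_1,\tau_2$ or $\tau_1|\tau_2$, we have $A(\tau) = A(\tau_1) \cup A(\tau_2)$, so $h$ is defined on all of $A(\tau_1)$ and all of $A(\tau_2)$; two appeals to the induction hypothesis make $\hat{h}(\tau_1)$ and $\hat{h}(\tau_2)$ defined, hence $\hat{h}(\tau)$ defined. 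When $\tau = \tau_1^*$, $A(\tau_1^*) = A(\tau_1)$, so induction gives $\hat{h}(\tau_1)$ defined and therefore $\hat{h}(\tau_1)^* = \hat{h}(\tau_1^*)$ defined. When $\tau = X$, $A(X) = A(E(X))$, so by induction on the smaller type $E(X)$ we get $\hat{h}(E(X))$ defined, hence $\hat{h}(X) = \hat{h}(E(X))$ defined.

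The only real point of care is the variable case: the argument is not a plain structural induction on the syntax of $\tau$, since unfolding $X$ to $E(X)$ need not shrink syntactic size. The remedy is exactly the observation used earlier to make the definitions of $A$ and $\hat{h}$ well-founded, namely that type definitions $\tau_0$ have no top-level type variables, so the unfolding order is well-founded (cf.~\cite[Lem.~4.6]{colazzo06jfp}); setting up the induction on that order is essentially all there is to the proof, each individual case being a one-line verification from the defining equations.
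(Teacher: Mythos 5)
Your proof is correct and follows essentially the same route as the paper, which argues by structural induction on $\tau$ with the base case $\hat{h}(\alpha)=h(\alpha)$ and the remaining cases immediate from the homomorphism equations. Your extra care in the variable case — appealing to the well-founded unfolding order guaranteed by the absence of top-level type variables in definitions — is exactly the same termination argument the paper relies on (implicitly, via its earlier remark and the reference to Colazzo et al.), so there is no substantive difference.
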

\begin{proof}
  By structural induction on $\tau$.  The base case $\tau = \alpha$ is
  by definition of $\hat{h}(\alpha) = h(\alpha)$. The remaining cases are
  straightforward because $\hat{h}$ is a homomorphism.
\end{proof}
\begin{lemma}\labelLem{dc-h-defined-on-a}
  If $h: \Atoms \pto \Types$ is downward closed, and $\hat{h}(\tau)$ is
  defined, then $h(\alpha)$ is defined for every $\alpha \in A(\tau)$.
\end{lemma}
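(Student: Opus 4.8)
The plan is to prove the statement by induction on the structure of $\tau$, where in the type-variable case I appeal to the same well-foundedness argument that justifies the definitions of $\hat{h}$ and $A$ in the first place: type definitions $\tau_0$ contain no top-level type variables, so the unfolding $\hat{h}(X) = \hat{h}(E(X))$ (and likewise $A(X) = A(E(X))$) terminates. The hypothesis that $h$ is downward closed enters in exactly one place, the atomic base case, and that is really the whole point of the lemma.

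In detail: if $\tau = \emptyseq$ then $A(\tau) = \{\}$ and there is nothing to show. If $\tau = \alpha$, then $\hat{h}(\alpha) = h(\alpha)$ being defined tells us $h(\alpha)$ is defined, while $A(\alpha) = \{\alpha' \mid \alpha' \subty \alpha\}$; so I must show $h(\alpha')$ is defined for every atom $\alpha' \subty \alpha$, which is immediate from downward closedness of $h$ (instantiated at $\alpha' \subty \alpha$, with $h(\alpha)$ defined). The compound cases $\tau = \tau_1,\tau_2$, $\tau = \tau_1 | \tau_2$, and $\tau = \tau_1^*$ are routine inductive steps: since $\hat{h}$ is a homomorphism, $\hat{h}(\tau)$ being defined forces $\hat{h}$ to be defined on each immediate subexpression $\tau_i$; the induction hypothesis then gives $h$ defined on each $A(\tau_i)$; and $A(\tau)$ is the union of these (a single copy, in the $*$ case). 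For $\tau = X$, $\hat{h}(X) = \hat{h}(E(X))$ is defined, so by the induction hypothesis applied to $E(X)$ — legitimate by the termination argument above — $h$ is defined on $A(E(X)) = A(X)$.

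The only step needing any care is stating the induction precisely in the presence of recursive type definitions: the measure is not literally subterm size but the well-founded order underlying the finitary unfolding of definitions, exactly as in \refLem{h-generated-by-a}. Once that is in place every case is a one-line step, so I do not anticipate a genuine obstacle here; this lemma is simply the converse companion to \refLem{h-generated-by-a}, and together they show that for downward closed $h$, ``$\hat{h}(\tau)$ is defined'' and ``$h$ is defined on every $\alpha \in A(\tau)$'' are equivalent.
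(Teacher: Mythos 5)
Your proof is correct and follows essentially the same route as the paper's: structural induction on $\tau$ (with the unfolding of type definitions justified by the top-level restriction on $\tau_0$), with downward closedness invoked exactly once, in the atomic base case to cover all $\alpha' \subty \alpha$ in $A(\alpha)$, and the compound cases handled by the homomorphism property of $\hat{h}$. The paper's version is just terser, leaving the variable case and the well-foundedness of the induction implicit.
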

\begin{proof}
  By structural induction on $\tau$.  For the base case $\tau =
  \alpha$, we need downward closedness to conclude that $h(\alpha)$
  is defined for each $\alpha' \subty \alpha$.  The remaining cases are
  straightforward because $\hat{h}$ is a homomorphism.
\end{proof}

\begin{lemma}\labelLem{dc-hom}
  If $h:\Atoms\pto \Types$ is downward
  closed, then $\hat{h}$ is downward closed.
\end{lemma}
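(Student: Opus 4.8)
The plan is to prove this directly by chaining the three preceding lemmas, with no fresh induction required. Unfolding the definition of downward closedness for the partial function $\hat h : \Types \pto \Types$ under the order $\subty$: the claim is that whenever $\tau' \subty \tau$ and $\hat h(\tau)$ is defined, then $\hat h(\tau')$ is also defined. So I would fix $\tau' \subty \tau$ and assume $\hat h(\tau)$ is defined.

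First I would apply \refLem{dc-h-defined-on-a}: since $h$ is downward closed and $\hat h(\tau)$ is defined, $h(\alpha)$ is defined for every $\alpha \in A(\tau)$. Next, \refLem{a-monotonic} gives $A(\tau') \subseteq A(\tau)$ from $\tau' \subty \tau$, so $h(\alpha)$ is in fact defined for every $\alpha \in A(\tau')$. Finally, \refLem{h-generated-by-a} converts this back into definedness of the homomorphic extension, yielding that $\hat h(\tau')$ is defined. That closes the argument.

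There is essentially no obstacle here; the substantive work has already been factored out into \refLem{a-monotonic}, \refLem{h-generated-by-a}, and \refLem{dc-h-defined-on-a}. The only point worth flagging is that the hypothesis that $h$ is downward closed is used exactly once, in the first step: without it, definedness of $\hat h(\tau)$ would only guarantee that $h$ is defined on the atoms syntactically occurring in $\tau$, not on all their subtypes, so the passage through $A(\tau')\subseteq A(\tau)$ would no longer suffice. Note also that this lemma concerns only \emph{definedness}; the companion statement that $\hat h$ is downward \emph{monotonic} when $h$ is will additionally need \refLem{l-monotonic} (together with \refLem{l-is-atomic-subtypes}) to compare the output languages $L(\hat h(\tau'))$ and $L(\hat h(\tau))$, and so is a separate, slightly more involved argument that I would handle afterwards.
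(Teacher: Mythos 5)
Your proof is correct and follows exactly the paper's own argument: fix $\tau' \subty \tau$ with $\hat{h}(\tau)$ defined, apply \refLem{dc-h-defined-on-a} to get definedness of $h$ on $A(\tau)$, use \refLem{a-monotonic} to get $A(\tau') \subseteq A(\tau)$, and conclude via \refLem{h-generated-by-a}. Your remark on where downward closedness of $h$ is needed matches the role it plays in \refLem{dc-h-defined-on-a}.
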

\begin{proof}
  Let $\tau' \subty \tau$ be given such that $\hat{h}(\tau)$ is defined.  Then
  by \refLem{dc-h-defined-on-a}, $h(\alpha)$ is defined on every
  $\alpha \in A(\tau)$.  But $A(\tau') \subseteq A(\tau)$
  (\refLem{a-monotonic}) so by \refLem{h-generated-by-a}, $\hat{h}(\tau')$
   is defined.
\end{proof}

\begin{lemma}\labelLem{commutation}
  Suppose $h: \Atoms \pto \Types$ is downward monotonic.  Then for
  any $\tau \in \dom(\hat{h})$,
  \begin{equation}
    \bigcup L[\hat{h}[L(\tau)]] = L(\hat{h}(\tau)) 
  \end{equation}
\end{lemma}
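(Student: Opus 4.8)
\subsection*{Proof plan}

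The plan is to prove the identity by well-founded induction on $\tau$, using the same termination ordering that justifies the recursive definitions of $L$ and $\hat{h}$ (structural size, with type variables handled via the finite, top-level-acyclic dependency structure of $E$). A preliminary observation handles definedness. Since $h$ is downward monotonic it is in particular downward closed, so from $\tau \in \dom(\hat{h})$ and \refLem{dc-h-defined-on-a} we get that $h$ is defined on every atom in $A(\tau)$; as every atom occurring in any $\omega \in L(\tau)$ satisfies $A(\omega) \subseteq A(\tau)$ (by \refLem{a-monotonic}, since $\omega \subty \tau$), \refLem{h-generated-by-a} guarantees that $\hat{h}(\omega)$ — and hence $L(\hat{h}(\omega))$ — is defined for each such $\omega$, so the left-hand side of the claimed equation makes sense. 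Note also that $\dom(\hat{h})$ is closed under taking subexpressions, so the induction hypotheses apply to the components in each case. Throughout the induction I would use that $\hat{h}$ is a homomorphism, in particular that it commutes with concatenation of atomic sequences, $\hat{h}(\omega_1,\omega_2) = \hat{h}(\omega_1),\hat{h}(\omega_2)$.

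The base cases are $\tau = \emptyseq$, a one-line computation, and $\tau = \alpha$. For the latter, $L(\alpha) = \{\alpha' \mid \alpha' \subty \alpha\}$, so the left-hand side is $\bigcup_{\alpha' \subty \alpha} L(h(\alpha'))$, and we must show it equals $L(h(\alpha)) = L(\hat{h}(\alpha))$. The inclusion $\supseteq$ holds because $\alpha \subty \alpha$. For $\subseteq$: if $\alpha' \subty \alpha$ then downward monotonicity of $h$ gives $h(\alpha') \subty h(\alpha)$, hence $L(h(\alpha')) \subseteq L(h(\alpha))$ by \refLem{l-monotonic}. This is the only step where monotonicity of $h$ (as opposed to mere downward closedness) is used.

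The inductive step splits into the four compound cases. For $\tau = \tau_1 | \tau_2$, both sides are additive in the decomposition $L(\tau_1|\tau_2) = L(\tau_1) \cup L(\tau_2)$, so the claim follows from the two induction hypotheses. For $\tau = \tau_1,\tau_2$, using $L(\tau_1,\tau_2) = L(\tau_1) \bullet L(\tau_2)$, the homomorphism property, and $L(\sigma_1,\sigma_2) = L(\sigma_1) \bullet L(\sigma_2)$, the left-hand side rewrites to $\bigcup_{\omega_1 \in L(\tau_1),\,\omega_2 \in L(\tau_2)} L(\hat{h}(\omega_1)) \bullet L(\hat{h}(\omega_2))$; since $\bullet$ distributes over arbitrary unions in each argument, this equals $\big(\bigcup L[\hat{h}[L(\tau_1)]]\big) \bullet \big(\bigcup L[\hat{h}[L(\tau_2)]]\big)$, and the induction hypotheses turn it into $L(\hat{h}(\tau_1)) \bullet L(\hat{h}(\tau_2)) = L(\hat{h}(\tau_1,\tau_2))$. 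The case $\tau = \tau_1^*$ is the same computation iterated: an element of $L(\tau_1)^* = \bigcup_{n \geq 0} L(\tau_1)^n$ is a concatenation $\omega_1 \cdots \omega_n$ of members of $L(\tau_1)$, so for each fixed $n$ the contribution is $\big(\bigcup L[\hat{h}[L(\tau_1)]]\big)^n$ (again by distributivity of $\bullet$ over unions), and the union over $n$ is $\big(\bigcup L[\hat{h}[L(\tau_1)]]\big)^*$, which by the induction hypothesis equals $L(\hat{h}(\tau_1))^* = L(\hat{h}(\tau_1)^*) = L(\hat{h}(\tau_1^*))$. Finally, $\tau = X$ reduces immediately to the case $E(X)$, which precedes $X$ in the chosen ordering.

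The main obstacle I expect is not any single case but keeping the index-juggling in the concatenation and Kleene-star cases honest — in particular making the repeated appeal to ``distributivity of $\bullet$ over arbitrary (possibly infinite) unions'' precise, and ensuring that every term $\hat{h}(\omega)$, $L(\hat{h}(\omega))$ written down is actually defined, which is precisely the role of the preliminary definedness argument via \refLem{dc-h-defined-on-a}, \refLem{a-monotonic}, and \refLem{h-generated-by-a}. The only genuinely conceptual point — that the extra subtypes $\alpha' \subty \alpha$ contribute nothing new because $h$ is monotone — lives entirely in the atomic base case.
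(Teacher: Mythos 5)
Your proposal is correct and follows essentially the same route as the paper's proof: structural induction on $\tau$, with the atomic case resolved by downward monotonicity of $h$ (via \refLem{l-monotonic}) and the compound cases by the homomorphism property of $\hat{h}$ together with distributivity of $\bullet$, $\cup$, and $(-)^*$ over the relevant unions. Your explicit preliminary definedness argument via Lemmas~\ref{lem:dc-h-defined-on-a}, \ref{lem:a-monotonic}, and \ref{lem:h-generated-by-a} is a point the paper leaves implicit, but it does not change the substance of the argument.
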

\begin{proof}
  By induction on the structure of $\tau$.  
\begin{itemize}
\item $\tau = \emptyseq$.  Then 
  \begin{eqnarray*}
    \bigcup L[\hat{h}[L(\emptyseq)]] &=& \bigcup L[\hat{h}[\{\emptyseq\}]]
  = \bigcup L[\{\hat{h}(\emptyseq)\}]
  = \bigcup L[\{\emptyseq\}]\\
  &=& \bigcup \{L(\emptyseq)\}
   = L(\emptyseq)
  = L(\hat{h}(\emptyseq))
\end{eqnarray*}
\item $\tau = \alpha$.  We need to show that $\bigcup L[h[L(\alpha)]]
  = L(h(\alpha)) $.
  \begin{eqnarray*}
    \bigcup L[h[L(\alpha)]] 
    &=& \bigcup L[h[\{\alpha' \mid \alpha' \subty \alpha\}]] \\
    &=& \bigcup L[\{h(\alpha') \mid \alpha' \subty \alpha\}] \\
    &=& \bigcup \{L(h(\alpha')) \mid \alpha' \subty \alpha\} 
  \end{eqnarray*}
  Now since $h$ is downward monotonic and defined on $\alpha$, for
  each $\alpha' \subty \alpha$ we have that $h(\alpha') \subty
  h(\alpha)$.  Thus, $L(h(\alpha')) \subseteq L(h(\alpha))$, so
  $\bigcup \{L(h(\alpha')) \mid \alpha' \subty \alpha\} =
  L(h(\alpha))$, as desired.
\item $\tau = \tau_1,\tau_2$.  Then 
  \begin{eqnarray*}
    \bigcup L[\hat{h}[L(\tau_1,\tau_2)]]
    &=& \bigcup L[\hat{h}[L(\tau_1)\bullet L(\tau_2)]]
    = \bigcup L[\hat{h}[L(\tau_1)]\bullet \hat{h}[L(\tau_2)]]\\
    &= &\bigcup L[\hat{h}[L(\tau_1)]]\bullet L[\hat{h}[L(\tau_2)]]
    = \left(\bigcup L[\hat{h}[L(\tau_1)]]\right)\bullet \left(\bigcup L[\hat{h}[L(\tau_2)]]\right)\\
    &=& L(\hat{h}(\tau_1))\bullet L(\hat{h}(\tau_2))
    = L(\hat{h}(\tau_1),\hat{h}(\tau_2))
    = L(\hat{h}(\tau_1,\tau_2))
\end{eqnarray*}
\item $\tau = \tau_1|\tau_2$.  Then 
  \begin{eqnarray*}
    \bigcup L[\hat{h}[L(\tau_1|\tau_2)]]
    &=& \bigcup L[\hat{h}[L(\tau_1)\cup L(\tau_2)]]
    = \bigcup L[\hat{h}[L(\tau_1)]\cup\hat{h}[L(\tau_2)]]\\
    &= &\bigcup L[\hat{h}[L(\tau_1)]]\cup L[\hat{h}[L(\tau_2)]]
    = \left(\bigcup L[\hat{h}[L(\tau_1)]]\right)\cup\left(\bigcup L[\hat{h}[L(\tau_2)]]\right)\\
    &=& L(\hat{h}(\tau_1))\cup L(\hat{h}(\tau_2))
    = L(\hat{h}(\tau_1)|\hat{h}(\tau_2))
    = L(\hat{h}(\tau_1|\tau_2))
\end{eqnarray*}
\item  $\tau = \tau^*$.
  \begin{eqnarray*}
    \bigcup L[\hat{h}[L(\tau^*)]]
    &=& \bigcup L[\hat{h}[L(\tau)^*]]
    = \bigcup L[\hat{h}[L(\tau_1)]^*]\\
    &= &\bigcup L[\hat{h}[L(\tau_1)]]^*
    = \left(\bigcup L[\hat{h}[L(\tau)]]\right)^*\\
    &=& L(\hat{h}(\tau_1))^*
    = L(\hat{h}(\tau)^*)
    = L(\hat{h}(\tau^*))
\end{eqnarray*}
  \item $\tau = X$: Immediate by induction.
\end{itemize}
\end{proof}
\begin{theorem}\labelThm{dm-hom}
  If $h:\Atoms\pto \Types$ is downward
  monotonic, then $\hat{h}$ is downward monotonic.
\end{theorem}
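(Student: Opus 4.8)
The plan is to split ``downward monotonic'' into its two constituent parts --- downward closed, and order-preserving on the common domain --- and dispatch each using the machinery already assembled. Downward closedness of $\hat{h}$ is exactly \refLem{dc-hom}, which applies because a downward monotonic $h$ is in particular downward closed. So the only remaining task is: given $\tau' \subty \tau$ with $\hat{h}(\tau)$ defined (hence $\hat{h}(\tau')$ defined, by the closedness just recorded), show $\hat{h}(\tau') \subty \hat{h}(\tau)$.

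First I would translate the desired conclusion into a statement about regular languages via \refLem{l-monotonic}: it suffices to prove $L(\hat{h}(\tau')) \subseteq L(\hat{h}(\tau))$. Then I would rewrite both sides using the commutation identity \refLem{commutation}, which is available since $h$ is downward monotonic and $\tau,\tau' \in \dom(\hat{h})$; this reduces the goal to $\bigcup L[\hat{h}[L(\tau')]] \subseteq \bigcup L[\hat{h}[L(\tau)]]$.

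From here the argument is pure monotonicity of the lifted operators. From $\tau' \subty \tau$ and \refLem{l-is-atomic-subtypes} (equivalently \refLem{l-monotonic}) we get $L(\tau') \subseteq L(\tau)$. Before applying $\hat{h}[-]$ I must know $\hat{h}$ is defined on every element of $L(\tau)$: each $\omega \in L(\tau)$ satisfies $\omega \subty \tau$, so $A(\omega) \subseteq A(\tau)$ by \refLem{a-monotonic}; since $\hat{h}(\tau)$ is defined, $h$ is defined on $A(\tau) \supseteq A(\omega)$ by \refLem{dc-h-defined-on-a}, whence $\hat{h}(\omega)$ is defined by \refLem{h-generated-by-a}. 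With totality of $\hat{h}$ on $L(\tau)$ secured, the liftings $\hat{h}[-]$, then $L[-]$, then $\bigcup$ each preserve set inclusion, yielding $\bigcup L[\hat{h}[L(\tau')]] \subseteq \bigcup L[\hat{h}[L(\tau)]]$. Unwinding, $L(\hat{h}(\tau')) \subseteq L(\hat{h}(\tau))$, so $\hat{h}(\tau') \subty \hat{h}(\tau)$ by \refLem{l-monotonic}.

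There is essentially no hard step left at this stage: the genuine content was pushed into \refLem{commutation} --- in particular its base case $\tau = \alpha$, where downward monotonicity of $h$ \emph{on atoms} is what collapses $\bigcup\{L(h(\alpha')) \mid \alpha' \subty \alpha\}$ to $L(h(\alpha))$ --- and into the bookkeeping lemmas about the domain of $\hat{h}$. The one point requiring care when writing the proof in full is threading the definedness hypotheses through the chain of lifted operators, so that each $[-]$-lift is applied to a set on which the underlying partial function is total; that is precisely what the paragraph above spells out.
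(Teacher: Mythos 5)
Your proof is correct and follows essentially the same route as the paper's: reduce subtyping to language inclusion via \refLem{l-monotonic}, rewrite both sides with \refLem{commutation}, and conclude by monotonicity of the lifted operators, with downward closedness supplied by \refLem{dc-hom}. Your explicit definedness check on $L(\tau)$ is a harmless elaboration of what the paper absorbs into the stated fact that the lifting $F[-]$ is downward monotonic and into the statement of \refLem{commutation} itself.
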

\begin{proof}
  Let $\tau' \subty \tau$ be given such that $\hat{h}(\tau)$ is
  defined.  By \refLem{dc-hom}, $\hat{h}(\tau')$ is defined.  We must
  show that $\hat{h}(\tau') \subty \hat{h}(\tau)$.  Since $\tau'
  \subty \tau$, by \refLem{l-monotonic} we have $L(\tau') \subseteq
  L(\tau)$.  It follows from the monotonicity of $\bigcup$, $L[-]$ and
  $\hat{h}[-]$ that $\bigcup L[\hat{h}[L(\tau')]] \subseteq \bigcup
  L[\hat{h}[L(\tau)]]$.  By \refLem{commutation}, we have that
  $L(\hat{h}(\tau')) \subseteq L(\hat{h}(\tau))$, but by
  \refLem{l-is-atomic-subtypes} this implies that $\hat{h}(\tau')
  \subty \hat{h}(\tau)$.
\end{proof}

\subsection{Proving algorithmic completeness}

The two key properties which ensure that occurrences of the
subsumption rule can be eliminated from derivations are
\emph{uniqueness of algorithmic types} and \emph{downward
  monotonicity} of the algorithmic judgments.  

Uniqueness, discussed already in proving decidability of the
algorithmic judgments (\refLem{decidability} and
\refLem{decidability-upd}), simply means that if the ``inputs'' to a
judgment are fixed, then there is at most one ``output'' type
derivable by algorithmic judgments; thus, the judgments define partial
functions.  Recall that  for fixed
$\bar{x},e,n,a,s$, we defined:
\begin{enumerate}
\item $f_n(\tau_1)$ as the unique $\tau_2$ such that $\tylab{\tau_1}{n}{\tau_2}$.
\item $g_e(\Gamma)$ as the unique $\tau$ such that
  $\wfalg{\Gamma}{e}{\tau}$ (if it exists).
\item $h_{\bar{x},e}(\Gamma,\tau_1)$ as the unique $\tau_2$ such that
  $\wfinalg{\Gamma}{x}{\tau_1}{e}{\tau_2}$ (if it exists).
\item $j_{a,s}(\Gamma,\tau_1)$ as the unique $\tau_2$ such that
  $\wfupdalg{\Gamma}{a}{\tau_1}{s}{\tau_2}$ (if it exists).
\item $k_{s}(\Gamma,\tau_1)$ as the unique $\tau_2$ such that
  $\wfiteralg{\Gamma}{\tau_1}{s}{\tau_2}$ (if it exists).
\end{enumerate}
Downward monotonicity of the type judgments corresponds precisely to
downward monotonicity of the above functions (where we use the
subtyping order on context arguments $\Gamma$ defined in
\refSec{query-types}.)  To prove downward monotonicity of the
type-directed $f,h,k$, we need to make use of the characterization of
downward monotonicity for partial homomorphic extensions established
in the last section.

\begin{proposition}[Downward Monotonicity]\labelProp{downward-monotonicity}
  ~\begin{enumerate}
    \item For every $n$, the function $f_n$ is downward monotonic.
    \item For every $e$ and $\bar{x}$, the functions $g_e$ and
      $h_{\bar{x},e}$ are downward monotonic, and
      $h_{\bar{x},e}(\Gamma,-)$ is the partial homomorphic extension
      of $g_e(\Gamma,\bar{x}{:}(-))$.
    \item For every $s$ and $a$, the functions $j_{a,s}$ and $k_s$ are
      downward monotonic, and $k_s(\Gamma,-)$ is the partial
      homomorphic extension of $j_{1,s}(\Gamma,-)$.
\end{enumerate}
\end{proposition}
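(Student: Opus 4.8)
The plan is to prove the three parts together, in the stratified order (1), then (2) by induction on $e$, then (3) by induction on $s$ (these inductions could equally be merged into a single simultaneous one). The engine in every type-directed case is \refThm{dm-hom}: each type-directed judgment is a partial homomorphic extension of a downward-monotonic function on atoms, so its monotonicity is not obtained by induction on the type but delegated to that theorem. For (1), one checks that $f_n=\widehat{k_n}$, where $k_n$ is the total atom function with $k_n(m[\tau])=m[\tau]$ if $m=n$ and $k_n(\alpha)=\emptyseq$ otherwise --- this is immediate on comparing the clauses defining $\tylab{\tau}{n}{\tau'}$ with the homomorphism equations --- and that $k_n$ is downward monotonic on $\Atoms$ (the only non-constant sub-case, $n[\tau']\subty n[\tau]$, forces $\tau'\subty\tau$ and hence $k_n(n[\tau'])\subty k_n(n[\tau])$). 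Then \refThm{dm-hom} yields downward monotonicity (and totality) of $f_n$.

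For (2), I would proceed by induction on $e$, proving in each step, in this order: (a) $g_e$ is downward monotonic in $\Gamma$; (b) $h_{\bar x,e}(\Gamma,-)$ is the partial homomorphic extension of the atom function $\alpha\mapsto g_e(\Gamma,\bar x{:}\alpha)$; (c) $h_{\bar x,e}$ is downward monotonic. Claim (b) is purely syntactic: the rules for $\wfin{\Gamma}{x}{\tau}{e}{\tau'}$ are literally the homomorphism equations together with the single atom clause $\wfin{\Gamma}{x}{\alpha}{e}{\tau}\iff\wf{\Gamma,\bar x{:}\alpha}{e}{\tau}$. For (a), the expression-directed cases (variables, $\emptyseq$, pairing, $n[e]$, $\letin{x=e_1}{e_2}$, $\ifthenelse{c}{e_1}{e_2}$, $\bar x/\kchild$, function application) are routine uses of the induction hypotheses for the immediate subexpressions, with transitivity of $\subty$ in the function-application case (whose output type is fixed by $\Delta$); the two remaining cases reduce to $g_{e::n}(\Gamma)=f_n(g_e(\Gamma))$ and, for $e=\forreturn{\bar x\in e_1}{e_2}$, to $g_e(\Gamma)=h_{\bar x,e_2}(\Gamma,g_{e_1}(\Gamma))$, so monotonicity follows by composing the induction hypothesis with part (1), respectively with claim (c) for $e_2$ used over the product order on $(\Gamma,\tau_0)$. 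Given (a), the atom function $\alpha\mapsto g_e(\Gamma,\bar x{:}\alpha)$ is downward monotonic (since $\alpha'\subty\alpha$ gives $(\Gamma,\bar x{:}\alpha')\subty(\Gamma,\bar x{:}\alpha)$), so by (b) and \refThm{dm-hom}, $h_{\bar x,e}(\Gamma,-)$ is downward monotonic in its type argument. Monotonicity in the $\Gamma$ component needs only the extra observation that if $\Gamma'\subty\Gamma$ then the two generating atom functions $k_{\Gamma'},k_\Gamma$ are pointwise $\subty$-related; applying \refLem{commutation} to each (legitimate, since each is downward monotonic) together with \refLem{l-monotonic} and monotonicity of concatenation and of $\bigcup$ gives $L(h_{\bar x,e}(\Gamma',\tau_0))\subseteq L(h_{\bar x,e}(\Gamma,\tau_0))$ and hence $h_{\bar x,e}(\Gamma',\tau_0)\subty h_{\bar x,e}(\Gamma,\tau_0)$; combining the two, $h_{\bar x,e}$ is downward monotonic jointly.

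Part (3) is carried out identically, by induction on $s$. The claim ``$k_s(\Gamma,-)$ is the partial homomorphic extension of $\alpha\mapsto j_{1,s}(\Gamma,\alpha)$'' is read off the rules for $\wfiter{\Gamma}{\tau}{s}{\tau'}$. Downward monotonicity of $j_{a,s}$ is checked case by case, using part (2) wherever an expression is embedded: $\kskip$, sequencing, conditionals, $\letin{x=e}{s}$, $\snapshot{x}{s}$, $\kinsert~e$, $\kdelete$, $\krename~n$, $\kleft[s]/\kright[s]$ (using monotonicity of concatenation), $\kchildren[s]$, $\phi?s$, $\kiter[s]$ (where $j_{*,\kiter[s]}(\Gamma,\tau)=k_s(\Gamma,\tau)$, so monotonicity is exactly that of $k_s$), and procedure calls (like function application, with transitivity of $\subty$); for the test case one uses the elementary fact that for atomic $\alpha'\subty\alpha$ one has $\alpha'\subty\phi$ iff $\alpha\subty\phi$, so that the two clauses for $\phi?s$ stay in agreement under weakening. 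Then \refThm{dm-hom} and the same $L(\cdot)$-level bridging computation as in part (2) give downward monotonicity of $k_s$.

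The main obstacle is precisely the one flagged in the body of the paper just before \refLem{downward-monotonicity}: a naive induction on the structure of types collapses at the Kleene-star and concatenation clauses of the type-directed judgments, since $\tau_0'\subty\tau_1^*$ says nothing about $\tau_0'$ versus $\tau_1$. Consequently the substance of the argument is (i) recognizing $f_n$, $h_{\bar x,e}(\Gamma,-)$ and $k_s(\Gamma,-)$ as partial homomorphic extensions of downward-monotonic atom functions, so that \refThm{dm-hom} can be applied, and (ii) the short $L(\cdot)$-level computations that transfer monotonicity in the context argument of $h_{\bar x,e}$ and $k_s$; the remaining expression- and update-directed cases are routine inductive bookkeeping.
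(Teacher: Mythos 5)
Your proposal is correct and follows essentially the same route as the paper: part (1) uses the same generating atom function, and parts (2) and (3) proceed by structural induction on $e$ and $s$, checking downward monotonicity of $g_e$ (resp.\ $j_{a,s}$) case by case while identifying $h_{\bar{x},e}(\Gamma,-)$ and $k_s(\Gamma,-)$ as the partial homomorphic extensions of $g_e(\Gamma,\bar{x}{:}(-))$ and $j_{1,s}(\Gamma,-)$, delegating all type-directed monotonicity to \refThm{dm-hom} exactly as the paper does. The only difference is one of detail rather than of approach: where the paper calls monotonicity of $h_{\bar{x},e}$ and $k_s$ in the context argument ``immediate,'' you supply the bridging step explicitly via \refLem{commutation}, \refLem{l-monotonic} and pointwise comparison of the generating atom functions, and you likewise make explicit the atomic-test fact needed for the $\phi?s$ case, both of which are refinements the paper leaves tacit.
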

\begin{proof}
  For part (1), we just need to show that $f_n$ is generated by the
  function
  \[\alpha \mapsto \left\{\begin{array}{ll}
      n[\tau] & \alpha = n[\tau]\\
      \emptyseq & \text{otherwise}
    \end{array}
  \right.
  \]
  which is obviously downward monotonic.
  
  For part (2), proof is by induction on the structure of $e$.  For
  each $e$, we first show downward monotonicity of $g_e$ by inspecting
  derivations.  We show a few representative examples:
\begin{itemize}
  \item Case (var): If the derivation is of the form
\[\infer{\wfalg{\Gamma}{\hat{x}}{\tau}}{\hat{x}{:}\tau \in \Gamma}
\]
then we have $\hat{x}:\tau' \in \Gamma'$ where $\tau' \subty \tau$, hence may derive:
\[\infer{\wfalg{\Gamma}{\hat{x}}{\tau'}}{\hat{x}{:}\tau' \in \Gamma'}
\]
\item Case ($\klet$): If the derivation is of the form
\[\infer{\wfalg{\Gamma}{\letin{x=e_1}{e_2}}{\tau_2}}
{\wfalg{\Gamma}{e_1}{\tau_1} & \wfalg{\Gamma,x{:}\tau_1}{e_2}{\tau_2}}
 \]
then by induction we have $\wfalg{\Gamma'}{e_1}{\tau_1'}$ for some $\tau_1' \subty \tau_1$ and since
$\Gamma' \subty \Gamma$, we have $\Gamma',x{:}\tau_1'\subty
\Gamma,x{:}\tau_1$, so also by induction
$\wfalg{\Gamma',x{:}\tau_1'}{e_2}{\tau_2'}$ for some $\tau_2' \subty \tau_2$.
To conclude, we derive
\[\infer{\wfalg{\Gamma'}{\letin{x=e_1}{e_2}}{\tau_2'}}
{\wfalg{\Gamma'}{e_1}{\tau_1'} & \wfalg{\Gamma',x{:}\tau_1'}{e_2}{\tau_2'}}
 \]
\item Case ($\kfor$): If the derivation is of the form
\[\infer{\wfalg{\Gamma}{\forreturn{x\in e_1}{e_2}}{\tau_2}}
{\wfalg{\Gamma}{e_1}{\tau_1} & \wfinalg{\Gamma}{x}{\tau_1}{e_2}{\tau_2}}
 \]
 then by induction we have $\wfalg{\Gamma'}{e_1}{\tau_1'}$ for some
 $\tau_1' \subty \tau_1$.  Using the downward monotonicity of
 $h_{\bar{x},e_2}$, we can obtain $\tau_2' \subty \tau_2$ such that $\wfinalg{\Gamma}{x}{\tau_1'}{e_2}{\tau_2'}$.  To conclude, we derive
\[\infer{\wfalg{\Gamma'}{\forreturn{x\in e_1}{e_2}}{\tau_2'}}
{\wfalg{\Gamma'}{e_1}{\tau_1'} & \wfinalg{\Gamma'}{x}{\tau_1'}{e_2}{\tau_2'}}
 \]
\end{itemize}

Showing that $h_{\bar{x},e}$ is downward monotonic is immediate once
we show that $h_{\bar{x},e}(\Gamma,-)$ is the partial homomorphic
extension of $g_e(\Gamma,\bar{x}{:}(-))$ for any $\Gamma$.  The latter
property can be proved by induction on the structure of derivations of
$\wfin{\Gamma}{x}{\tau_1}{e}{\tau_2}$.  The cases involving regular
expression constructs or variables are straightforward, and the base
case
\[
\infer{\wfin{\Gamma}{x}{\alpha}{e}{\tau}}{\wf{\Gamma,\bar{x}{:}\alpha}{e}{\tau}}\]
is also straightforward since $h_{\bar{x},e}(\Gamma,\tau) =
g_e(\Gamma,\bar{x}{:}\tau)$ by definition.

Similarly, for part (3), $j$ and $k$, the proof is by induction on
derivations.  The cases involving $j$ are straightforward; the case
involving $\nd_{\kiter}$ is similar to that for $\kfor$ above.  To
show $k_s(\Gamma,-)$ is the partial homomorphic extension of
$j_{1,s}(\Gamma,-)$ and hence that $k_s$ is downward monotonic, the
proof is by simultaneous induction on derivations, just as for
$g$ and $h$ above.
\end{proof}

By rewriting the above proposition in terms of judgments, we can conclude:
\begin{theorem}[Downward monotonicity]\labelThm{downward-monotonicity}
~
\begin{enumerate}
\item If $\tylab{\tau_1}{n}{\tau_2}$ and $\tau_1' \subty \tau_1$ then 
 $\tylab{\tau_1'}{n}{\tau_2'}$ for some $\tau_2' \subty \tau_2$
\item If $\wfalg{\Gamma}{e}{\tau}$ and $\Gamma' \subty \Gamma$ then
  $\wfalg{\Gamma'}{e}{\tau'}$ for some $\tau' \subty \tau$.
\item If $\wfinalg{\Gamma}{x}{\tau_1}{e}{\tau_2}$ and $\Gamma' \subty
  \Gamma$ and $\tau_1' \subty \tau_1$ then
  $\wfinalg{\Gamma'}{x}{\tau_1'}{e}{\tau_2'}$ for some $\tau_2' \subty
  \tau_2$.
\item If $\wfupdalg{\Gamma}{a}{\tau_1}{s}{\tau_2}$ and $\Gamma' \subty
  \Gamma$ and $\tau_1' \subty \tau_1$ then
  $\wfupdalg{\Gamma'}{a}{\tau_1'}{s}{\tau_2'}$ for some $\tau_2' \subty
  \tau_2$.
\item If $\wfiteralg{\Gamma}{\tau_1}{s}{\tau_2}$ and $\Gamma' \subty
  \Gamma$ and $\tau_1' \subty \tau_1$ then
  $\wfiteralg{\Gamma'}{\tau_1'}{s}{\tau_2'}$ for some $\tau_2' \subty
  \tau_2$.
\end{enumerate}
\end{theorem}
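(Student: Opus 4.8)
The plan is to derive \refThm{downward-monotonicity} as an essentially immediate corollary of \refProp{downward-monotonicity}, the only work being to translate back and forth between the algorithmic judgments and the partial functions they induce. First I would recall, from \refLem{decidability} and \refLem{decidability-upd}, that each algorithmic judgment is functional in its ``input'' positions: for fixed $n,e,\bar{x},s,a$ the relations $\tylab{\tau_1}{n}{\tau_2}$, $\wfalg{\Gamma}{e}{\tau}$, $\wfinalg{\Gamma}{x}{\tau_1}{e}{\tau_2}$, $\wfupdalg{\Gamma}{a}{\tau_1}{s}{\tau_2}$ and $\wfiteralg{\Gamma}{\tau_1}{s}{\tau_2}$ hold of a unique output type precisely when $f_n$, $g_e$, $h_{\bar{x},e}$, $j_{a,s}$, $k_s$ (respectively) are defined on the corresponding inputs, in which case the output equals the function's value; $f_n$ is in fact total.

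Given this dictionary, each clause of the theorem is nothing but the statement that the relevant function is downward closed and downward monotonic with respect to the subtyping order on its input arguments --- the order $\Gamma' \subty \Gamma$ from \refSec{query-types} on context arguments, and the pointwise combination of that order with $\subty$ on types for the judgments carrying both a context and an input type. For example, clause (2) unfolds to: if $g_e(\Gamma)$ is defined and $\Gamma' \subty \Gamma$, then $g_e(\Gamma')$ is defined and $g_e(\Gamma') \subty g_e(\Gamma)$, which is exactly the assertion about $g_e$ in part (2) of \refProp{downward-monotonicity}. Clauses (1), (3), (4), (5) come the same way from parts (1), (2), (2), (3) of the proposition. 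So the entire proof is: invoke \refLem{decidability}/\refLem{decidability-upd} to read judgments as (partial) functions, and quote \refProp{downward-monotonicity}.

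Consequently, the substantive content of this theorem has already been discharged earlier: it lies in the induction on derivations in \refProp{downward-monotonicity}, which in the type-directed cases reduces downward monotonicity of $f_n$, $h_{\bar{x},e}(\Gamma,-)$ and $k_s(\Gamma,-)$ to that of their generating functions on atomic types via \refThm{dm-hom}. The closest thing to an obstacle --- and the reason the ``obvious'' structural induction flagged after \refThm{alg-completeness} fails --- is that monotonicity of $h$ and $k$ can only be obtained after first establishing that $h_{\bar{x},e}(\Gamma,-)$ is the partial homomorphic extension of $g_e(\Gamma,\bar{x}{:}(-))$, and $k_s(\Gamma,-)$ that of $j_{1,s}(\Gamma,-)$; this identity must be proved by a simultaneous induction tying the type-directed judgments to the expression- and update-directed ones, and that is where the difficulty with Kleene star is actually resolved. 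Once \refProp{downward-monotonicity} is in place, rewriting it in judgment form is pure bookkeeping.
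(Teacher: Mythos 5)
Your proposal matches the paper's own treatment exactly: the paper proves this theorem simply by rewriting \refProp{downward-monotonicity} in judgment form, using the functional reading of the algorithmic judgments from \refLem{decidability} and \refLem{decidability-upd}, with all the real work (the simultaneous induction and the homomorphic-extension argument via \refThm{dm-hom}) already done in the proposition. The only blemish is a citation slip at the end---clause (4), about $\wfupdalg{\Gamma}{a}{\tau_1}{s}{\tau_2}$ and $j_{a,s}$, follows from part (3) of the proposition, not part (2)---but this is mere bookkeeping and does not affect the argument.
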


Finally, taking $\Gamma = \Gamma'$ and $\tau_1=\tau_1'$ in parts
2--5 above, we conclude:
\begin{theorem}[Algorithmic completeness]\labelThm{completeness-all}
  ~
  \begin{enumerate}
  \item If $\wf{\Gamma}{e}{\tau}$ then there exists $\tau' \subty
    \tau$ such that $\wfalg{\Gamma}{e}{\tau'}$.
  \item If $\wfin{\Gamma}{x}{\tau_1}{e}{\tau_2}$ then there exists $\tau_2' \subty
    \tau_2$ such that $\wfinalg{\Gamma}{x}{\tau_1}{e}{\tau_2'}$.
  \item If $\wfupd{\Gamma}{a}{\tau_1}{s}{\tau_2}$ then there exists
    $\tau_2' \subty \tau_2$ such that
    $\wfupdalg{\Gamma}{a}{\tau_1}{s}{\tau_2'}$
  \item If $\wfiter{\Gamma}{\tau_1}{s}{\tau_2}$ then there exists
    $\tau_2' \subty \tau_2$ such that
    $\wfiteralg{\Gamma}{\tau_1}{s}{\tau_2'}$
  \end{enumerate}
\end{theorem}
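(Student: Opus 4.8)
The plan is to prove \refThm{completeness-all} by a single simultaneous induction on the structure of \emph{declarative} derivations, ranging at once over the four mutually recursive judgments $\wf{\Gamma}{e}{\tau}$, $\wfin{\Gamma}{x}{\tau_1}{e}{\tau_2}$, $\wfupd{\Gamma}{a}{\tau_1}{s}{\tau_2}$ and $\wfiter{\Gamma}{\tau_1}{s}{\tau_2}$; the remaining auxiliary judgment $\tylab{\tau}{n}{\tau'}$ is already deterministic and subsumption-free, so it enters only as a side condition. The induction is well founded: in the subsumption rule the premise is a strictly smaller derivation of the same expression, and type variables cannot be unfolded forever. Everything rests on two facts established earlier: \emph{uniqueness} of algorithmic types (\refLem{decidability}, \refLem{decidability-upd}), so that each algorithmic judgment is a partial function and there is never a choice of output type, and \emph{downward monotonicity} of the algorithmic judgments (\refThm{downward-monotonicity}), which is the one genuinely non-routine ingredient and was obtained through the detour into regular languages over the poset of atomic types together with the homomorphism-extension result \refThm{dm-hom}.

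The cases split into three kinds. First, the \emph{coercion} rules --- the subsumption rule for queries and the corresponding $\tau_2'\subty\tau_2$ rule for updates --- are immediate: the induction hypothesis on the single premise yields an algorithmic derivation with output $\tau''\subty\tau$, and composing with the rule's side condition $\tau\subty\tau'$ gives $\tau''\subty\tau'$, which is what must be produced. Second, the \emph{congruence} rules, in which no premise's output type is reused as another premise's input (among them $\emptyseq$, $w$, $b$, $\bar{x}$, $x$, $n[e]$, $e,e'$, $\ifthenelse{c}{e_1}{e_2}$, $\bar{x}/\kchild$, $\kskip$, $\kdelete$, $\krename~n$, $\kinsert~e$, $\kleft[s]$, $\kright[s]$, $\kchildren[s]$, $\snapshot{x}{s}$, the tests $\phi?s$, $\kiter[s]$, the regular-expression clauses of the two type-directed auxiliary judgments, and the function- and procedure-call rules), are handled by applying the induction hypothesis to each premise and reassembling with the matching algorithmic rule; one needs only that the type constructors appearing in a conclusion ($\tau,\tau'$; $\tau|\tau'$; $n[\tau]$; $\tau^*$) are monotone for $\subty$, which is immediate from the semantics of $\subty$ (equivalently \refLem{l-monotonic}), and, for the call rules, that the \emph{algorithmic} rule already carries the built-in checks $\tau_i'\subty\tau_i$ and $\sigma_1\subty\sigma$, into which the smaller argument and input types from the induction hypothesis fit directly while the output remains the declared return type.

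Third, the genuinely interesting cases are exactly those where the output type of one subderivation is consumed as the input of another: $\letin{x=e_1}{e_2}$ and the update form $\letin{x=e}{s}$, $\forreturn{\bar{x}\in e_1}{e_2}$, $e::n$, and sequential composition $s;s'$. Here the naive induction stalls, because the induction hypothesis on the ``first'' premise may shrink its output type, so the context or input type under which we obtained an algorithmic derivation of the ``second'' premise no longer matches. The remedy is to invoke the appropriate clause of \refThm{downward-monotonicity}. In the $\klet$ case, given premises $\wf{\Gamma}{e_1}{\tau_1}$ and $\wf{\Gamma,x{:}\tau_1}{e_2}{\tau_2}$, the induction hypothesis yields $\wfalg{\Gamma}{e_1}{\tau_1'}$ with $\tau_1'\subty\tau_1$ and $\wfalg{\Gamma,x{:}\tau_1}{e_2}{\tau_2'}$ with $\tau_2'\subty\tau_2$; since $\tau_1'\subty\tau_1$ we have $\Gamma,x{:}\tau_1'\subty\Gamma,x{:}\tau_1$ in the context ordering of \refSec{query-types}, so \refThm{downward-monotonicity}(2) yields $\wfalg{\Gamma,x{:}\tau_1'}{e_2}{\rho}$ for some $\rho\subty\tau_2'\subty\tau_2$, and the algorithmic $\klet$ rule reassembles these into $\wfalg{\Gamma}{\letin{x=e_1}{e_2}}{\rho}$. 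The $\kfor$ case is identical but uses \refThm{downward-monotonicity}(3); $e::n$ uses part~(1), re-running $\tylab{-}{n}{-}$ on the shrunk type of $e$; sequential composition uses part~(4); and update-$\klet$ combines the expression induction hypothesis with part~(4).

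I expect the real obstacle to have already been absorbed into \refThm{downward-monotonicity}: its type-directed clauses cannot be proved by a plain structural induction --- knowing $\tau_1'\subty\tau_1^*$ tells one nothing about the relation of $\tau_1'$ to $\tau_1$ (for instance $aa\subty a^*$ but $aa\not\subty a$) --- and instead proceed through the observation that $f_n$ and the functions $h_{\bar{x},e}(\Gamma,-)$ and $k_s(\Gamma,-)$ are partial homomorphic extensions of downward monotone functions on atomic types, so that \refThm{dm-hom} applies. Granting that, the only point demanding genuine care in the argument above is to verify, in each kind-three case, that the ``smaller'' context or input type really is $\subty$-below the original in the precise sense --- the pointwise context ordering of \refSec{query-types} --- required to apply \refThm{downward-monotonicity}.
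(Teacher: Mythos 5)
Your proposal is correct and takes essentially the same route as the paper: an induction over declarative derivations in which the subsumption and congruence cases are routine and the cases where one premise's output type feeds another's input ($\klet$, $\kfor$, $e::n$, sequencing) are discharged by appealing to \refThm{downward-monotonicity}, which the paper (like you) obtains via the partial-homomorphic-extension argument over the ordered alphabet of atomic types. The paper records this last step only very tersely, deriving \refThm{completeness-all} from the downward-monotonicity theorem; your write-up simply spells out the same induction in more detail.
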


%


\end{document}